\documentclass[10pt,journal,compsoc]{IEEEtran}

\usepackage{graphicx}
\usepackage{makecell}
\usepackage{algorithmic}
\usepackage[ruled,vlined]{algorithm2e}
\usepackage{multirow}
\usepackage{amsmath}
\usepackage{pifont}
\usepackage{bm}
\usepackage{amsfonts}
\usepackage{epstopdf}
\usepackage{amsthm}
\usepackage{amssymb}
\usepackage{amsmath,amssymb}
\usepackage{color}
\usepackage{subfigure}
\usepackage[justification=centering]{caption}

\ifCLASSOPTIONcompsoc
  \usepackage[nocompress]{cite}
\else
  \usepackage{cite}
\fi

\ifCLASSINFOpdf
\else
\fi

\hyphenation{op-tical net-works semi-conduc-tor}

\begin{document}

\title{cuFastTucker: A Compact Stochastic Strategy for Large-scale Sparse Tucker Decomposition on Multi-GPUs}

\author{	
	Zixuan Li
	\IEEEcompsocitemizethanks
	{
		\IEEEcompsocthanksitem
	}
}

\markboth{}%
{Shell \MakeLowercase{\textit{et al.}}: Bare Advanced Demo of IEEEtran.cls for Journals}
\IEEEtitleabstractindextext{%
\begin{abstract}
\underline{H}igh-\underline{O}rder, \underline{H}igh-\underline{D}imension, and \underline{S}parse \underline{T}ensor (HOHDST) data originates from real industrial applications, i.e., social networks, recommender systems, bio-information, and traffic information.
Sparse Tensor Decomposition (STD) can project the HOHDST data into low-rank space.
In this work, a novel method for STD of Kruskal approximating the core tensor and stochastic strategy for approximating the whole gradient is proposed which comprises of the following two parts:
(1) the matrization unfolding order of the Kruskal product for the core tensor follows the multiplication order of the factor matrix and then the proposed theorem can reduce the exponential computational overhead into linear one;
(2) stochastic strategy adopts one-step random sampling set, the volume of which is much smaller than original one, to approximate the whole gradient.
Meanwhile, this method can guarantee the convergence and save the memory overhead.
Due to the compactness of the same order matrix multiplication and parallel access from stochastic strategy, the speed of cuFastTucker can be further reinforced by GPU.
Furthermore, 
a data division and communication strategy of cuFastTucker is proposed for data accommodation on Multi-GPU.
cuFastTucker can achieve the fastest speed and keep the same accuracy and much lower memory overhead than the SOTA algorithms, e.g., P$-$Tucker, Vest, and SGD$\_$Tucker.
The code and partial datasets are publically available on "https://github.com/ZixuanLi-China/FastTucker".
\renewcommand{\raggedright}{\leftskip=0pt \rightskip=0pt plus 0cm}
\raggedright
\end{abstract}

\begin{IEEEkeywords}
GPU CUDA Parallelization;
Kruskal Approximation;
Sparse Tensor Decomposition;
Stochastic Strategy;
Tensor Computation.
\end{IEEEkeywords}}

\maketitle
\IEEEdisplaynontitleabstractindextext
\IEEEpeerreviewmaketitle

\ifCLASSOPTIONcompsoc
\section{Introduction} \label{section1}
\renewcommand{\raggedright}{\leftskip=0pt \rightskip=0pt plus 0cm}
\raggedright
\IEEEPARstart{T}{e}nsors are originated from differential manifold and tensor is used to analyze the change of high dimension space~\cite{ex101skordis2009tensor}.
Due to the amazing representation ability, tensor can capture the relationship between multi-attribute of an entity~\cite{ex101}.
Especially, in Machine Learning (ML) community, which relies on effective statistical learning methodology and plenty of data,
needs powerful data structure to guarantee the abundant information~\cite{ex102, ex110}.
Meanwhile, due to abundant data styles in ML, tensor has drawn wide attention to the emerging ML research communities~\cite{ex102}.

In ML communities, tensor applications can be divided into the following three classes:
(1)In order to capture the general feature of multi-modal data,
muiti-view learning always combines multi-feature into a tensor space~\cite{ex104, ex111, ex112, ex115}.
(2) To project the multi-attribute data into a low complexity and low-rank space,
the learning weight variable always be constituted tensor data, etc,
Tensor Regression~\cite{ex102liu2020smooth, ex103kossaifi2020tensor, ex104yu2016learning},
Support Tensor Machine~\cite{ex105hao2013linear, ex106calvi2019support, ex107biswas2017linear}, and
Deep Convolutional Neural Networks (DCNN) in TensorFlow and Pytorch framework~\cite{ex108abadi2016tensorflow};
(3) Due to the spatiotemporal dynamics and multi-attribute interaction,
the forming data is naturally tensor, e.g,
in Recommendation Systems~\cite{ex108},
Quality of Service (QoS)~\cite{ex109},
Network Flow~\cite{ex105},
Cyber-Physical-Social (CPS)~\cite{ex211}, or
Social Networks~\cite{ex107}.
The scale of tensor data from the fusion process after the multi-modal feature and weight variables of ML methodologies is far below than the natural tensor data.

\underline{H}igh-\underline{O}rder, \underline{H}igh-\underline{D}imension, and \underline{S}parse \underline{T}ensor (HOHDST) is a mathematic model for the data from
Recommendation Systems, QoS, Network Flow, CPS, and Social Networks and high-order and high-dimension mean multi-attribute interaction and multi-entity, respectively \cite{ex113, ex114}.
An $N$-order HOHDST can represent the interaction relationship between $N$ attributes and
in reality, each attribute has millions of entities.
Thus, this property will result in a substantially high-dimension inherence \cite{ex110}.
Unfortunately,
due to data incompleteness,
it is non-trivial to obtain the statistic property of the HOHDST data.

The common used method is finding the low-dimension feature via Sparse Tensor Decomposition (STD) and
this \emph{dimensionality reduction} techniques can represent the original HOHST by low-rank or low-dimension space \cite{ex117, ex113, ex114, ex119}.
Tensor tucker decomposition is one of the most widely used dimensionality reduction methodologies.
Through the $N$-coordinate systems and those systems tangled by a core tensor between each other,
tensor tucker decomposition becomes one of the most used dimensionality reduction methodologies \cite{ex121}.
There are two approaches to find the appropriate core tensor and the $N$ factor matrices:
(1) High Order Orthogonal Iterations (HOOI) should find the $N$ orthogonal coordinate systems and this method needs the Singular Value Decomposition (SVD) for the unfolding tensor.
However, this method relies on frequent Khatri-Rao and Kronecker products for intermediate matrices \cite{ ex120, ex122, ex123, ex214, ex239};
(2) Modern optimization strategy disentangles the tanglement of the core tensor and the $N$ factor matrices and than transfers the non-convex optimization into alternative convex optimization.
The above two methods still involve high-dimension intermediate matrices, and in order to solve these problems,
the main contributions of this work are listed as the following:
\begin{enumerate}
  \item The space overhead of the intermediate coefficient matrices for updating the core tensor is super huge.
        A Kruskal approximation strategy is proposed to divide the core tensor into smaller ones.
        Then, the order of matrix multiplication follows the same matrix multiplication order of the factor matrix.
        Following the proposed computational Theorems \ref{theorem1} and \ref{theorem2}, the computational overhead can be reduced from the exponential overhead into an linear one;
  \item By the one step sampling set, on each training iteration,
        a stochastic strategy is proposed to approximate the whole gradient relying on the whole HOHDST data by partial set.
        This methodology can further reduce the computational overhead and keep the same accuracy;
  \item The fine-grain parallelization inherence gives the allocated CUDA thread block the high parallelization and meanwhile,
        two key steps which are the most time-consuming can be further accelerated by CUDA parallelization (cuFastTucker).
        Because large-scale HOHDST data cannot be accommodated in a single GPU,
        a data division and communication strategy of cuFastTucker is proposed for data accommodation on Multi-GPU.
\end{enumerate}

To our best knowledge,
the proposed model is the first work that it can take advantage of the Kruakal product to approximate the core tensor with linear computational overhead.
In this work,
the related work is presented in Section~\ref{Sectionrela}.
The notations and preliminaries are introduced in Section~\ref{sectionlb}.
The proposed model as well as cuFastTucker are showed in Sections~\ref{section3} and~\ref{section4}, respectively.
Experimental results are illustrated in Section~\ref{section5}.

\section{Related Works} \label{Sectionrela}
ML communities should handle the high-order data and tensor can capture the three or higher-order feature rather than unfolding the high-order data into matrix or vector. 
When the learning data has high-order feature, i.e., Human Recognition Data, Spatiotemporal Dynamics Data, 
Tensor Regression~\cite{ex102liu2020smooth, ex103kossaifi2020tensor, ex104yu2016learning} can project multi-attribute weather data into the forecasting value, and 
Support Tensor Machine~\cite{ex105hao2013linear, ex106calvi2019support, ex107biswas2017linear} can find the discrete classification value from multi-attribute data. 
DCNN plays a key role to learn deep feature from plenty of data, and the tensor decomposition can reduce the parameter complexity~\cite{ex109yin2021towards,ex110panagakis2021tensor}. 
Direct training process for the high-order and high-dimension tensor weight variable will result in over-fitting problem.
To avoid the dimension explosion problem, tensor decomposition can reduce the parameter space overhead and the learning process only involves the factor matrices. 
There are a mass of works try to reduce the parameter complexity. 
However, those methods cannot solve the dimensionality reduction problem in HOHDST data. 
In big data era, it is non-trivial to process the HOHDST data. 
The main problems lie in efficient learning algorithm and the high match process in modern big-data process frameworks, i.e., OpenMP, MPI, CUDA, Hadoop, Spark, and OpenCL, and 
modern hardware, i.e., GPU, CPU, and embedded platforms.

A distributed CANDECOMP/PARAFAC Decomposition (CPD) \cite{ex232} is proposed by Ge, et al., and the CPD is a special Tucker Decomposition for HOHDST. 
In HOHDST data compression community, 
Shaden et al., \cite{ex233} presented a Compressed Sparse Tensors (CSF) structure which can improve the access speed and make data compression for HOHDST.
Ma et al., \cite{ex131} optimized the Tensor-Time-Matrix-chain (TTMc) operation on GPU which is a key part for Tucker Decomposition (TD) and TTMc is
a data intensive task \cite{ex131}. 
A distributed Non-negative Tucker Decomposition (NTD) is proposed by Chakaravarthy et al., \cite{ex130} which needs frequent TTMc operations. 
A parallel strategy of ALS and CD for STD~\cite{ex234, ex237} is presented on OpenMP parallelization. 
A heterogeneous OpenCL parallel version of ALS for STD is proposed on \cite{ex236}. 
The current parallel and distributed works \cite{ex235} mainly focus on divide the whole data into smaller and low-dependence parts and then deploy algorithm rather than fine-grained learning methodologies.


\section{Notations and Preliminaries} \label{sectionlb}
\newtheorem{definition}{Definition}
We denote scalars by regular lowercase or uppercase, vectors by bold lowercase, matrices by bold uppercase, and tensors by bold Euler script letters.
Basic symbols and matrix and tensor operations are presented in Tables \ref{table21} and \ref{table22}, respectively.

\begin{table}[!htbp]
	\setlength{\abovedisplayskip}{0pt}
	\setlength{\belowdisplayskip}{0pt}
	\renewcommand{\arraystretch}{1.5}
	\caption{Table of Tensor Operations.}
	\centering
	\label{table22}
	\tabcolsep1pt
	\begin{tabular}{cc}
		\hline
		\hline
		\makecell[c]{Operations}                         & \makecell[c]{Definition}\\
		\hline
\makecell[c]{The $n$th matricization}
&
\makecell[c]{The $x^{(n)}_{i_{n},j}$ of $\textbf{X}^{(n)}$ where $j=1+\sum_{k=1,n\neq k}^{N}$}\\
\makecell[c]{$\textbf{X}^{(n)}$ of tensor $\mathcal{X}$:}
&
\makecell[c]{$\left[(i_{k}-1)\mathop{\prod}_{m=1,m\neq n}^{k-1}I_{m}\right]$ $=$  $x_{i_{1},i_{2},\cdots,i_{n},\cdots,i_{N}}$;}\\
\makecell[c]{The $n$th column}
&
\makecell[c]{The $x^{(n)}_{k}$ $=$ $\textbf{X}^{(n)}_{i,j}$ of $\textbf{x}^{(n)}$} \\
\makecell[c]{vectorization $\textbf{x}^{(n)}$}
&\\
\makecell[c]{of tensor $\mathcal{X}$:}
&where $k=(j-1)I_{n}+i$; \\
\makecell[c]{$R$ Kruskal Product:}
&
\makecell[c]{$\textbf{B}^{(n)}$ $\in$ $\mathbb{R}^{I_{n}\times R}$, $n \in \{N\}$, $\widehat{\bm{\mathcal{X}}}$ $\in$ $\mathbb{R}^{I_{1}\times\cdots \times I_{N}}$} \\
\makecell[c]{}
&
\makecell[c]{$=\sum_{r=1}^{R} b^{(1)}_{:,r}\circ\cdots\circ b^{(n)}_{:,r}\circ\cdots\circ b^{(N)}_{:,r}$;}\\
\makecell[c]{$n$-Mode}
&
\makecell[c]{$\bm{\mathcal{X}}$ $\in$ $\mathbb{R}^{I_{1}\times\cdots\times I_{N}}$, $\textbf{U}$ $\in$ $\mathbb{R}^{I_{n}\times J_{n}}$ and}\\
\makecell[c]{Tensor-Matrix product:}
&
\makecell[c]{$(\bm{\mathcal{X}}\times_{(n)} \textbf{U})$ $\in$ $\mathbb{R}^{I_{1}\times\cdots \times I_{n-1}\times J_{n}\times\cdots  I_{N}}$}\\
\makecell[c]{}
&
\makecell[c]{$=$ $\sum\limits_{i_{n}=1}^{I_{n}}$ $x_{i_{1}\times\cdots \times i_{n}\times \cdots\times i_{N}}$ $\cdot u_{j_{n},i_{n}}$.}\\
		\hline
		\hline
	\end{tabular}
\end{table}

\begin{table}[!htbp]
	\setlength{\abovedisplayskip}{0pt}
	\setlength{\belowdisplayskip}{0pt}
	\renewcommand{\arraystretch}{1.5}
	\caption{Table of symbols.}
	\centering
	\label{table21}
	\tabcolsep1pt
	\begin{tabular}{cc}
		\hline
		\hline
		\makecell[c]{Symbol}                         & \makecell[c]{Definition}\\
		\hline
		\makecell[c]{$\mathcal{X}$}                  & \makecell[c]{Input $N$th order tensor $\in$ $\mathbb{R}^{I_{1}\times I_{2}\times\cdots \times I_{N}}_{+}$;}\\
		\makecell[c]{$x_{i_{1},i_{2},\cdots,i_{n}}$} & \makecell[c]{$i_{1},i_{2},\cdots,i_{n}$th element of tensor $\mathcal{X}$;}\\
		\makecell[c]{$\mathcal{G}$}                  & \makecell[c]{Core $N$th order tensor $\in$ $\mathbb{R}^{J_{1}\times J_{2}\times\cdots \times J_{N}}$;}\\
		\makecell[c]{$\textbf{X}$}                   & \makecell[c]{Input matrix $\in$ $\mathbb{R}^{I_{1}\times I_{2}}$;}\\
		\makecell[c]{$\{N\}$}                        & \makecell[c]{The ordered set $\{1,2,\cdots,N-1,N\}$;}\\
        \makecell[c]{$\Omega$}                       & \makecell[c]{Index $(i_{1},\cdots,i_{n},\cdots,i_{N})$ of a tensor $\mathcal{X}$;}\\
        \makecell[c]{$\Omega^{(n)}_{M}$}             & \makecell[c]{Index $(i_{n},j)$ of $n$th unfolding matrix $\textbf{X}^{(n)}$;}\\
        \makecell[c]{$(\Omega^{(n)}_{M})_{i}$}       & \makecell[c]{Column index set in $i$th row of $\Omega^{(n)}_{M}$;}\\
        \makecell[c]{$(\Omega^{(n)}_{M})^{j}$}       & \makecell[c]{Row index set in $j$th column of $\Omega^{(n)}_{M}$;}\\
        \makecell[c]{$\Omega^{(n)}_{V}$}             & \makecell[c]{Index $i$ of $n$th unfolding vector Vec$_{n}$($\bm{\mathcal{X}}$);}\\
		\makecell[c]{$\textbf{A}^{(n)}$}             & \makecell[c]{$n$th feature matrix $\in$ $\mathbb{R}^{I_{n}\times J_{n}}$;}\\
		\makecell[c]{$a_{i_{n}, :}^{(n)}$}              & \makecell[c]{$i_{n}$th row vector $\in$ $\mathbb{R}^{K_{n}}$ of $\textbf{A}^{(n)}$;}\\
		\makecell[c]{$b_{:,r}^{(n)}$}                & \makecell[c]{$r$th column vector $\in$ $\mathbb{R}^{K_{n}}$ of $\textbf{B}^{(n)}$;}\\
		\makecell[c]{$a_{i_{n},k_{n}}^{(n)}$}        & \makecell[c]{$k_{n}$th element of feature vector $a_{i_{n}}^{(n)}$;}\\
		\makecell[c]{$\cdot$}                        & \makecell[c]{Element-wise multiplication;}\\
		\makecell[c]{$\circ$}                        & \makecell[c]{Outer production of vectors;}\\
		\makecell[c]{$\odot$}                        & \makecell[c]{Khatri-Rao (columnwise Kronecker) product;}\\
		\makecell[c]{$\times$}                       & \makecell[c]{Matrix product;}\\
		\makecell[c]{$\times_{(n)}$}                 & \makecell[c]{$n$-Mode Tensor-Matrix product;}\\
		\makecell[c]{$\otimes$}                      & \makecell[c]{Kronecker product.}\\
		\hline
		\hline
	\end{tabular}
\end{table}

\subsection{Basic Definitions}
\begin{definition}[The $n$-Rank of a Tensor]
The $n$-Rank of tensor $\bm{\mathcal{X}}$ $\in$ $\mathbb{R}^{I_{1}\times\cdots\times I_{N}}$, is the rank of $n$th matricization $\textbf{X}^{(n)}$, denoted as  $rank_{n}(\bm{\mathcal{X}})$.
	
\end{definition}

\begin{definition}[Tensor Approximation]
For a $N$-order sparse tensor $\bm{\mathcal{X}}$ $\in$ $\mathbb{R}^{I_{1}\times\cdots \times I_{N}}$,  the Tensor Approximation should find a low-rank tensor $\widehat{\bm{\mathcal{X}}}$ $\in$ $\mathbb{R}^{I_{1}\times\cdots \times I_{N}}$ such that
the noisy tensor $\bm{\mathcal{E}}$ $\in$ $\mathbb{R}^{I_{1} \times\cdots\times I_{N}}$ should be small enough, where $\bm{\mathcal{E}}=$$\bm{\mathcal{X}}-\widehat{\bm{\mathcal{X}}}$.
\end{definition}

\begin{definition}[Sparse Tucker Decomposition (STD)]
Given a $N$-order HOHDST $\bm{\mathcal{X}}$ $\in$ $\mathbb{R}^{I_{1}\times\cdots \times I_{N}}$, the goal of STD is
to train a core tensor $\mathcal{G}$ $\in$ $\mathbb{R}^{J_{1}\times\cdots \times J_{N}}$and $N$ factor matrices $\textbf{A}^{(n)}$ $\in$ $\mathbb{R}^{I_{n}\times J_{n}}$,$J_{n} \ll rank_{n}(\mathcal{X})$, $n \in \{N\}$, such that:
\begin{equation}\label{tucker}
  \begin{aligned}
\bm{\mathcal{X}}\approx\widehat{\bm{\mathcal{X}}}=\mathcal{G}&\times_{(1)}\textbf{A}^{(1)}\times_{(2)}\cdots\times_{(n)}\textbf{A}^{(n)}\times_{(n+1)}\cdots\\
&\times_{(N)}\textbf{A}^{(N)}.
\end{aligned}
\end{equation}
The matricized versions of equation (\ref{tucker}) are
\begin{equation}
  \begin{aligned}
\widehat{\bm{X}}^{(n)}=\textbf{A}^{(n)}\textbf{G}^{(n)}\big(\textbf{A}^{(N)}&\otimes\cdots \otimes\textbf{A}^{(n+1)}\otimes\textbf{A}^{(n-1)}\otimes\cdots\\
&\otimes\textbf{A}^{(1)}\big)^{T}
\end{aligned}
\end{equation}
where $\widehat{\bm{X}}^{(n)}$ is $n$th matricization of tensor $\widehat{\bm{\mathcal{\bm{X}}}}$, $\textbf{G}^{(n)}$ is $n$th matricization of tensor $\bm{\mathcal{G}}$ and
\begin{equation}
  \begin{aligned}
\widehat{x}^{(n)} = \big(\textbf{A}^{(N)}&\otimes\cdots \otimes\textbf{A}^{(n+1)}\otimes\textbf{A}^{(n-1)}\otimes\cdots \otimes\textbf{A}^{(1)}\\
&\otimes\textbf{A}^{(n)}\big)g^{(n)},
\end{aligned}
\end{equation}
where $\widehat{x}^{(n)}$ is $n$th vectorization of tensor $\bm{\mathcal{X}}$,
$g^{(n)}$ is $n$th vectorization of tensor $\bm{\mathcal{G}}$.
\end{definition}

The basis optimization problem is organized as \cite{ex225, ex226, ex229, ex230} as:
 \begin{equation}\label{Original}
  \begin{aligned}
  \mathop{\arg\min}_{w \in \mathbb{R}^{R}} f(w)&=\underbrace{L\bigg(w\bigg|y_{i}, x_{i}, w\bigg)}_{Loss\ Function}+ \underbrace{\lambda_{w} R(w)}_{Regularization\ Item}\\
  &=\sum\limits_{i=1}^{N} L_{i}\bigg(w\bigg|y_{i}, x_{i}, w\bigg)+ \lambda_{w}R_{i}(w),
   \end{aligned}
\end{equation}
where $y_{i}$ $\in$ $\mathbb{R}^{1}$, $x_{i}$ $\in$ $\mathbb{R}^{R}$, $i\in \{N\}$, $w \in \mathbb{R}^{R}$.
In the convex optimization community, the literature \cite{ex243} gives the definition of Lipschitz-continuity  with constant $L$ and strong convexity with constant $\mu$.

\begin{definition}[$L$-Lipschitz continuity]
A continuously differentiable function $f(\textbf{x})$ is called $L$-smooth on $\mathbb{R}^{r}$ if the gradient $\nabla f(\textbf{x})$ is $L$-Lipschitz continuous for any $\textbf{x}$, $\textbf{y}$ $\in$ $\mathbb{R}^{r}$, that is
$\|$ $\nabla f(\textbf{x})$ $-$ $\nabla f(\textbf{y})$ $\|_{2}$ $\leq$ $L$ $\|$ $\textbf{x}$ $-$ $\textbf{y}$ $\|_{2}$,
where $\|\bullet\|_{2}$ is $L_{2}$-norm $\|\textbf{x}\|_{2}$ $=$ $(\mathbb{\sum}_{k=1}^{r}x_{k}^{2})^{1/2}$ for a vector $\textbf{x}$.
\end{definition}

\begin{definition}[$\mu$-Convex]
A continuously differentiable function $f(\textbf{x})$ is called strongly-convex on $\mathbb{R}^{r}$ if there exists a constant $\mu$ $>$ $0$ for any $\textbf{x}$, $\textbf{y}$ $\in$ $\mathbb{R}^{r}$, that is
$f(\textbf{x})$ $\geq$ $f(\textbf{y})$ $+$ $\nabla$ $f(\textbf{y})$ $(\textbf{x}-\textbf{y})^{T}$ $+$ $\frac{1}{2}\mu\|\textbf{x}-\textbf{y}\|_{2}^{2}$.
\end{definition}

\begin{definition}[Stochastic Gradient Descent (SGD)]
In large-scale optimization scenarios,
SGD is a common strategy \cite{ex225, ex226, ex229, ex230} and promises to obtain the optimal accuracy via a certain number of training epoches \cite{ex225, ex226, ex229, ex230}.
An $M$ entries set $\Psi$ is randomly selected from the set $\Omega$, and the SGD is presented as:
 \begin{equation}
  \begin{aligned}\label{SGD}
  w&\leftarrow w-\gamma\frac{\partial f_{\Psi(w)}}{\partial w}\\
   &w- \gamma\frac{1}{M}\sum_{i\in\Psi}\frac{\partial \bigg(L_i(w)+\lambda_w R_i(w)\bigg)}{\partial w}.
   \end{aligned}
\end{equation}
\end{definition}

Compared with SGD, the original optimization model needs gradient which should select all the samples $\{x_i|i\in \{N\}\}$ from the dataset $\Omega$.
The optimization function can be packaged in the form of $SGD(M, \lambda, \gamma, w, \frac{\partial f_{\Psi(w)}}{\partial w})$.

\subsection{Optimization of STD}\label{std}
Optimization strategy becomes the most important way to find the optimal feature matrices $\textbf{A}^{(n)}$,$n\in \{N\}$ and core tensor $\bm{\mathcal{G}}$ to suppress the noisy tensor $\bm{\mathcal{E}}$ and elevate the approximation level of $\widehat{\bm{\mathcal{X}}}$, which is presented as:
\begin{equation}\label{tucker_optimization}
\begin{aligned}
&\mathop{\arg\min}_{\textbf{A}^{(n)},n \in \{N\}, \bm{\mathcal{G}}}f\bigg(\bm{X}^{(n)}, \big\{\textbf{A}^{(n)}\big\}, \bm{\mathcal{G}} \bigg)\\
&=\bigg\|\bm{\mathcal{X}}-\widehat{\bm{\mathcal{X}}}\bigg\|_{2}^{2}+\lambda_{\bm{\mathcal{G}}}\|\bm{\mathcal{G}}\|_{2}^{2}+\lambda_{\textbf{A}^{(n)}}\|\textbf{A}^{(n)}\|_{2}^{2}
\end{aligned}
\end{equation}
where $\widehat{\bm{\mathcal{X}}}$ $=$ $\bm{\mathcal{G}}\times_{1}\textbf{A}^{(1)}\times_{2}\cdots\times_{n}\textbf{A}^{(n)}\times_{n+1}\cdots\times_{N}\textbf{A}^{(N)}$ and
$\lambda_{g^{(n)}}$ and $\lambda_{\textbf{A}^{(n)}}$ are the regularization parameters for core tensor and low-rank factor matrices, respectively.

The variables $\bigg\{\{\textbf{A}^{(n)},n \in \{N\}\}, g^{(n)}\bigg\}$ be entangled as the approximated tensor $\widehat{\bm{\mathcal{X}}}$ by tensor-matrix multiplication.
Due to overfitting and non-convex, it is hard to optimize the whole variable $\widehat{\bm{\mathcal{X}}}$.
Alternative optimization is adopted to search the optimal parameters $\bigg\{\{\textbf{A}^{(n)},n \in \{N\}\}, g^{(n)}\bigg\}$ and can obtain appropriate accuracy as:
\begin{equation}\label{low_rank_optimization}
\begin{aligned}
\mathop{\arg\min}_{\textbf{A}^{(n)},n \in\{N\}}
&f\bigg(\textbf{A}^{(n)}\bigg|\bm{X}^{(n)}, \big\{\textbf{A}^{(n)}\big\},\textbf{G}^{(n)}\bigg)\\
&=\bigg\|\bm{X}^{(n)}- \widehat{\bm{X}}^{(n)}\bigg\|_{2}^{2}+\lambda_{\textbf{A}^{(n)}}\|\textbf{A}^{(n)}\|_{2}^{2}
\end{aligned}
\end{equation}
where $\widehat{\bm{X}}^{(n)}$ $=$ $\textbf{A}^{(n)}\textbf{G}^{(n)}\textbf{S}^{(n)^{T}}$ and the coefficient matrix
$\textbf{S}^{(n)}$ $=$ $\textbf{A}^{(N)}\otimes\cdots \otimes\textbf{A}^{(n+1)}\otimes\textbf{A}^{(n-1)}\otimes\cdots \otimes\textbf{A}^{(1)}$ $\in \mathbb{R}^{I_{N}\cdots I_{n+1}I_{n-1}\cdots I_{1}\times J_{N}\cdots J_{n+1}J_{n-1}\cdots J_{1}}$ and $\bm{G}^{(n)}$ $\in$ $\mathbb{R}^{J_{n}\times J_{1}\cdots J_{n-1}J_{n+1}\cdots J_{N}}$.
\begin{equation}\label{core_tensor_optimization}
\begin{aligned}
\mathop{\arg\min}_{g^{(n)}}
&f\bigg(g^{(n)}\bigg|\bm{x}^{(n)}, \big\{\textbf{A}^{(n)}\big\},g^{(n)}\bigg)\\
&=\bigg\|x^{(n)}-\widehat{x}^{(n)}\bigg\|_{2}^{2}+\lambda_{g^{(n)}}\|g^{(n)}\|_{2}^{2}
\end{aligned}
\end{equation}
where $\widehat{x}^{(n)}$ $=$ $\textbf{H}^{(n)}g^{(n)}$ and the coefficient
$\textbf{H}^{(n)}$ $=$ $\textbf{A}^{(N)}$ $\otimes$ $\cdots$ $\otimes$ $\textbf{A}^{(n+1)}$ $\otimes$ $\textbf{A}^{(n-1)}$ $\otimes$ $\cdots$ $\otimes$ $\textbf{A}^{(1)}$ $\otimes$ $\textbf{A}^{(n)}$ $\in$ $\mathbb{R}^{I_{N} \cdots I_{n+1} I_{n-1} \cdots I_{1} I_{n} \times J_{N} \cdots J_{n+1} J_{n-1} \cdots J_{1} J_{n} }$.
The coefficient matrices $\big\{ \{ \textbf{G}^{(n)} , \textbf{S}^{(n)} \} , \{\textbf{H}^{(n)} \} \big\}$ of variables $\{$ $\textbf{A}^{(n)}$ $,$ $g^{(n)}$ $\}$, $n\in\{N\}$, respectively, are memory-consuming.
For the problems (\ref{low_rank_optimization}) and (\ref{core_tensor_optimization}), $L$-lipschitz continuity and $\mu$-convexity can make promise of the convergence.

Besides, it is hard for modern hardware to give consideration to both the computation-orient (GPU) and logic-control-orient (CPU).
To make better use of hardware resource of GPU, algorithm design for high performance computing on GPU should consider high parallelization, low memory overhead, and low conflict probability for the operation of memory Read-and-Write.
Thus, appropriate sampling for coefficient matrix of a optimization strategy should be considered to maintain low overhead and, meanwhile, comparable accuracy.
In the following section, a compact stochastic strategy for STD will be introduced.

\section{A Compact Stochastic Strategy for STD}\label{section3}

The coefficient matrices $\big\{\{\textbf{G}^{(n)}, \textbf{S}^{(n)}\}, \{\textbf{H}^{(n)}\}\big\}$ in the optimization problems (\ref{low_rank_optimization}) and (\ref{core_tensor_optimization}), respectively, are memory-consuming.
Statistic sampling for approximating a full gradient should consider
(1) computational convenience,
(2) convergence and
(3) accuracy.
For the problem 1, SGD for STD chooses the elements from randomly one-step sampling set $\Psi$ which is a subset of index set $\Omega$.
Then, the ingredient of the coefficients $\big\{\{\textbf{G}^{(n)}, \textbf{S}^{(n)}\}, \{\textbf{H}^{(n)}\}\big\}$ just obeys the order from the set $\Psi$ rather than the whole set $\Omega$, and, by this way, the construction overhead from one-step partial set $\Psi$ is much lower than the whole set $\Omega$.

SOTD methods of updating factor matrix and core tensor $\big\{\{\textbf{A}^{(n)}, n\in\{N\}\big\}, \big\{ \mathcal{G} \}\big\}$ are still memory-consuming.
Sections \ref{section31} and \ref{section32} present a novel strategy obeying Theorems \ref{theorem1} and \ref{theorem2} with SGD for the process of updating the factor matrix and core tensor, respectively, and the proposed strategy can further reduce the exponentially increased overhead into linear one.
Hence, due to the compactness of matrix multiplication and parallel access, the proposed model has fine-grained parallelization.
Section \ref{section33} will conclude the overall computational and space overheads.

The gradient of the optimization problem (\ref{tucker_optimization}) should construct a whole coefficient matrix $\textbf{G}^{(n)}$, which is memory-consuming.
The core tensor $\bm{\mathcal{G}}$ can be approximated by $R_{core}$ $\leq$ ${J_{n}, n\in\{N\}}$
Kruskal product of low-rank matrices $\{\textbf{B}^{(n)}\in \mathbb{R}^{J_{n}\times R_{core}}|n\in \{N\}\}$
to form $\widehat{\bm{\mathcal{G}}}$, where
\begin{equation}\label{kruskal_approximation}
\begin{aligned}
\bm{\mathcal{G}} \approx \widehat{\bm{\mathcal{G}}} = \sum_{r_{core}=1}^{R_{core}} b^{(1)}_{:,r_{core}}\circ\cdots\circ b^{(n)}_{:,r_{core}}\circ\cdots\circ b^{(N)}_{:,r_{core}},
\end{aligned}
\end{equation}
and the matricized version of $\widehat{\bm{\mathcal{G}}}$ is
$\widehat{\bm{G}}^{(n)}$ $=$ $\textbf{B}^{(n)}(\textbf{B}^{(N)}\odot\cdots \odot\textbf{B}^{(n+1)}\odot\textbf{B}^{(n-1)}\odot\cdots \odot\textbf{B}^{(1)})^{T}$
$=$ $\sum_{r_{core}=1}^{R_{core}} \textbf{b}^{(n)}_{:,r_{core}}(\textbf{b}^{(N)}_{:,r_{core}}\otimes\cdots\otimes\textbf{b}^{(n+1)}_{:,r_{core}}\otimes
\textbf{b}^{(n-1)}_{:,r_{core}}\otimes\cdots \otimes\textbf{b}^{(1)}_{:,r_{core}})^{T}$,
where $\widehat{\bm{G}}^{(n)}$ is $n$th matricization of tensor $\widehat{\bm{\mathcal{\bm{G}}}}$ and $\widehat{\bm{G}}^{(n)}$ $\in$ $\mathbb{R}^{J_{n}\times J_{1}\cdots J_{n-1}J_{n+1}\cdots J_{N}}$.
In this paper, the problem (\ref{tucker_optimization}) is turned into:
\begin{equation}\label{tucker_cp_optimization}
\begin{aligned}
&\mathop{\arg\min}_{\textbf{A}^{(n)},n \in \{N\}, \widehat{\bm{\mathcal{G}}}}f\bigg(\bm{X}^{(n)}, \big\{\textbf{A}^{(n)}\big\}, \widehat{\bm{\mathcal{G}}} \bigg)\\
&=\bigg\|\bm{\mathcal{X}}-\widehat{\bm{\mathcal{X}}}\bigg\|_{2}^{2}+\lambda_{\widehat{\bm{\mathcal{G}}}}\|\widehat{\bm{\mathcal{G}}}\|_{2}^{2}+\lambda_{\textbf{A}^{(n)}}\|\textbf{A}^{(n)}\|_{2}^{2}.
\end{aligned}
\end{equation}

\subsection{Update Process for Factor Matrix}\label{section31}
Correspondingly, the problem (\ref{low_rank_optimization}) is turned into:
\begin{equation}\label{low_rank_optimization_cp}
\begin{aligned}
\mathop{\arg\min}_{\textbf{A}^{(n)},n \in\{N\}}
&f\bigg(\textbf{A}^{(n)}\bigg|\bm{X}^{(n)}, \big\{\textbf{A}^{(n)}\big\}, \widehat{\textbf{G}}^{(n)}\bigg)\\
&=\bigg\|\bm{X}^{(n)}- \widehat{\bm{X}}^{(n)}\bigg\|_{2}^{2}+\lambda_{\textbf{A}^{(n)}}\|\textbf{A}^{(n)}\|_{2}^{2},
\end{aligned}
\end{equation}
and each feature vector $\textbf{a}^{(n)}_{i_{n}, :}$, $i_{n}\in \{I_{n}\}$, $n \in\{N\}$ shares the same coefficient matrix $\widehat{\textbf{G}}^{(n)}\textbf{S}^{(n)^{T}}$.
With the one-step sampling set $\Psi$, the optimization problem is turned into:
\begin{equation}\label{low_rank_optimization_sgd}
\begin{aligned}
&\mathop{\arg\min}_{\textbf{a}^{(n)}_{i_{n}, :}}
f\bigg(\textbf{a}^{(n)}_{i_{n}, :}\bigg|\bm{X}^{(n)}_{i_n,(\Psi^{(n)}_{M})_{i_n}}, \big\{\textbf{a}^{(n)}_{i_{n}, :}\big\},\widehat{\bm{G}}^{(n)}\bigg)\\
&=\bigg\|\bm{X}^{(n)}_{i_n,(\Psi^{(n)}_{M})_{i_n}}-\widehat{\bm{X}}^{(n)}_{i_n,(\Psi^{(n)}_{M})_{i_n}}\bigg\|_{2}^{2}+\lambda_{\textbf{a}^{(n)}_{i_{n}, :}}\|\textbf{a}^{(n)}_{i_{n}, :}\|_{2}^{2},
\end{aligned}
\end{equation}
where $\widehat{\bm{X}}^{(n)}_{i_n,(\Psi^{(n)}_{M})_{i_n}}$ $=$ $\textbf{a}^{(n)}_{i_{n}, :}\widehat{\bm{G}}^{(n)}\textbf{S}^{(n)T}_{(\Psi^{(n)}_{M})_{i_n}, :}$, and
the $j$th $\bigg($ $j=1+\sum_{k=1,n\neq k}^{N}$ $\left[(i_{k}-1)\mathop{\prod}_{m=1,m\neq n}^{k-1}I_{m}\right]$, $(i_{1},\cdots,i_{n},\cdots,i_{N})$ $\in$ $\Psi$ $\bigg)$ row vector  $\textbf{S}^{(n)}_{i_n, :}$ $=$ $\textbf{a}^{(N)}_{i_N, :}\otimes\cdots \otimes\textbf{a}^{(n+1)}_{i_{n+1}, :}\otimes\textbf{a}^{(n-1)}_{i_{n-1}, :}\otimes\cdots \otimes\textbf{a}^{(1)}_{i_1, :}$ $\in \mathbb{R}^{J_{N}\cdots J_{n+1}J_{n-1}\cdots J_{1}}$ and the approximated gradient from SGD is obtained as:
\begin{equation}\label{Gradient_low_rank}
\begin{aligned}
&\frac{\partial f\bigg(\textbf{a}^{(n)}_{i_{n}, :}\bigg|\bm{X}^{(n)}_{i_n,(\Psi^{(n)}_{M})_{i_n}}, \big\{\textbf{a}^{(n)}_{i_{n}, :}\big\},\widehat{\bm{G}}^{(n)}\bigg)}{\partial \textbf{a}^{(n)}_{i_{n}, :}}\\
=&-\underbrace{\bm{X}^{(n)}_{i_n,(\Psi^{(n)}_{M})_{i_n}}\textbf{D}^{(n)T}_{(\Psi^{(n)}_{M})_{i_n}, :}}_{(1): FacMatPart_1 \in \mathbb{R}^{J_{n}}}+
\underbrace{\lambda_{\textbf{a}^{(n)}_{i_{n}, :}}\textbf{a}^{(n)}_{i_{n}, :}}_{(2): FacMatPart_2 \in \mathbb{R}^{J_{n}}}\\
&+\underbrace{\overbrace{\textbf{a}^{(n)}_{i_{n}, :}\textbf{D}^{(n)}_{(\Psi^{(n)}_{M})_{i_n}, :}}^{InterMX: \widehat{\bm{X}}^{(n)}_{i_n,(\Psi^{(n)}_{M})_{i_n}}}\textbf{D}^{(n)T}_{(\Psi^{(n)}_{M})_{i_n}, :}}_{(3): FacMatPart_3 \in \mathbb{R}^{J_{n}}},
\end{aligned}
\end{equation}
where $\textbf{D}^{(n)}_{(\Psi^{(n)}_{M})_{i_n}, :}$ $=$ $\widehat{\bm{G}}^{(n)}\textbf{S}^{(n)T}_{(\Psi^{(n)}_{M})_{i_n}, :}$ $\in$ $\mathbb{R}^{J_{n}\times |(\Psi^{(n)}_{M})_{i_n}|}$.

\newtheorem{theorem}{Theorem}
\begin{theorem}\label{theorem1}

There are two row vectors
$x= x^{(N)}\otimes x^{(N-1)}\otimes\cdots\otimes x^{(n)}\otimes\cdots \otimes x^{(2)} \otimes x^{(1)}$ and
$y= y^{(N)}\otimes y^{(N-1)}\otimes\cdots\otimes y^{(n)}\otimes\cdots \otimes y^{(2)} \otimes y^{(1)}$, $n\in\{N\}$, where
$x^{(n)}$ $\in$ $\mathbb{R}^{I_{n}}$ and
$y^{(n)}$ $\in$ $\mathbb{R}^{I_{n}}$,
$x$ $\in$ $\mathbb{R}^{I_{1}\cdots I_{n}\cdots I_{N}}$,
$y$ $\in$ $\mathbb{R}^{I_{1}\cdots I_{n}\cdots I_{N}}$, $n\in\{N\}$.
The vector-vector multiplication
$xy^{T}$ $=$
$\big(x^{(N)}\otimes x^{(N-1)}\otimes\cdots\otimes x^{(n)}\otimes\cdots \otimes x^{(2)} \otimes x^{(1)}\big)$
$\big(y^{(N)}\otimes y^{(N-1)}\otimes\cdots\otimes y^{(n)}\otimes\cdots \otimes y^{(2)} \otimes y^{(1)}\big)^{T}$ can be transformed into
$xy^{T}$ $=$
$\big(x^{(N)}y^{(N)^{T}}\big)$ $\big(x^{(N-1)}y^{(N-1)^{T}}\big)$ $\cdots$ $\big(x^{(n)}y^{(n)^{T}}\big)$ $\cdots$  $\big(x^{(2)}y^{(2)^{T}}\big)$ $\big(x^{(1)}y^{(1)^{T}}\big)$.
\end{theorem}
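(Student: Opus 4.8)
The plan is to reduce the left-hand side to a Kronecker product of scalars by invoking two standard algebraic identities for the Kronecker product: the transpose rule $(A \otimes B)^{T} = A^{T} \otimes B^{T}$ and the mixed-product rule $(A \otimes B)(C \otimes D) = (AC) \otimes (BD)$, both valid whenever the blocks are conformable. Since each $x^{(n)}$ and $y^{(n)}$ is a single vector in $\mathbb{R}^{I_{n}}$, the two identities apply cleanly and the entire argument is algebraic rather than analytic.

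First I would handle the transpose of $y$. Because transposing a Kronecker product transposes each block while preserving the factor order, I obtain $y^{T} = y^{(N)^{T}} \otimes y^{(N-1)^{T}} \otimes \cdots \otimes y^{(1)^{T}}$, so that $xy^{T}$ becomes the product of two conformable Kronecker products $\big(x^{(N)} \otimes \cdots \otimes x^{(1)}\big)\big(y^{(N)^{T}} \otimes \cdots \otimes y^{(1)^{T}}\big)$. Second I would apply the mixed-product rule to collapse this into a single Kronecker product of block-wise products. Here each $x^{(n)}$ is a $1 \times I_{n}$ row vector and each $y^{(n)^{T}}$ is an $I_{n} \times 1$ column vector, so every block product $x^{(n)} y^{(n)^{T}}$ is well-defined and the rule yields $xy^{T} = \big(x^{(N)} y^{(N)^{T}}\big) \otimes \cdots \otimes \big(x^{(1)} y^{(1)^{T}}\big)$. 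For $N > 2$ I would formalize this step by induction on the number of factors, at each stage grouping the trailing factors as one block, peeling off the leading factor, and applying the two-block mixed-product rule; the base case $N = 2$ is exactly that rule.

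Finally, each block product $x^{(n)} y^{(n)^{T}}$ is a $1 \times 1$ matrix, i.e. a scalar, and the Kronecker product of scalars coincides with their ordinary product. Hence the remaining $\otimes$ symbols reduce to scalar multiplication and I recover the claimed identity $xy^{T} = \big(x^{(N)} y^{(N)^{T}}\big)\big(x^{(N-1)} y^{(N-1)^{T}}\big) \cdots \big(x^{(1)} y^{(1)^{T}}\big)$. I expect the only real obstacle to be bookkeeping: checking conformability of the blocks at each application of the mixed-product rule and keeping the factor order consistent with the stated right-to-left indexing from $N$ down to $1$; no genuine difficulty arises, since the result is a direct consequence of the multiplicative structure of the Kronecker product.
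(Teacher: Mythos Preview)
Your proof is correct, but it takes a genuinely different route from the paper's own argument. The paper proves the identity by an explicit index computation: it associates to each global index $i$ of $x$ (respectively $j$ of $y$) the unique multi-index $(i_1,\dots,i_N)$, writes $x_i = \prod_{n} x^{(n)}_{i_n}$ and $y_j = \prod_{n} y^{(n)}_{j_n}$, expands $xy^{T} = \sum_{i} x_i y_i$ as an $N$-fold sum over all multi-indices, and then factors that sum into the product $\prod_{n} \big(\sum_{i_n} x^{(n)}_{i_n} y^{(n)}_{i_n}\big) = \prod_{n} x^{(n)} y^{(n)^{T}}$. You instead invoke the transpose rule and the mixed-product rule for the Kronecker product, collapse $xy^{T}$ to a Kronecker product of the $1\times 1$ blocks $x^{(n)} y^{(n)^{T}}$, and observe that a Kronecker product of scalars is ordinary multiplication.

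Both arguments are valid. Your approach is shorter and more structural: it makes transparent that the theorem is nothing more than the mixed-product property specialized to row-times-column blocks, and it extends verbatim to the paper's Theorem~\ref{theorem2} (where the $y^{(n)}$ become matrices) without any additional work. The paper's index-chasing argument is more self-contained, requiring no prior Kronecker identities, and may be preferable for readers who want to see the combinatorics spelled out.
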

\begin{proof}
The index $\big(i/j\big)$ corresponds to a solely index $\biggl((i_{1}^{x},\cdots,i_{n}^{x},\cdots,i_{N}^{x})$ $\biggl/$ $(i_{1}^{y},\cdots,i_{n}^{y},\cdots,i_{N}^{y})\biggl)$,respectively, where
$\biggl(i=1+\sum_{k=1}^{N}\left[(i_{k}^{x}-1)\mathop{\prod}_{m=1,m\neq 1}^{k-1}I_{m}\right]$ $\biggl/$
$j=1+\sum_{k=1}^{N}\left[(i_{k}^{y}-1)\mathop{\prod}_{m=1, m\neq 1}^{k-1}I_{m}\right]\biggl)$ and
$\biggl(x_{i}=\prod_{n=1}^{N}x^{(n)}_{i_{n}^{x}}$ $\biggl/$
$y_{j}=\prod_{n=1}^{N}y^{(n)}_{j_{n}^{y}}\biggl)$.
$xy^{T}$ $=$ $\sum\limits_{i=j=1}^{I_{1}\cdots I_{n}\cdots I_{N}}x_{i}y_{j}$ $=$
$\sum\limits_{i_{1}^{x}=j_{1}^{y}=1}^{I_{1}} \cdots \sum\limits_{i_{n}^{x}=j_{n}^{y}=1}^{I_{n}} \cdots \sum\limits_{i_{N}^{x}=j_{N}^{y}=1}^{I_{N}}$
$\bigg(\prod_{n=1}^{N}x^{(n)}_{i_{n}^{x}}\prod_{n=1}^{N}y^{(n)}_{j_{n}^{y}}\bigg)$
$=$
$\big(x^{(N)}y^{(N)^{T}}\big)$ $\cdot$ $\big(x^{(N-1)}y^{(N-1)^{T}}\big)$ $\cdot$ $\cdots$ $\cdot$ $\big(x^{(n)}y^{(n)^{T}}\big)$ $\cdot$ $\cdots$ $\cdot$ $\big(x^{(2)}y^{(2)^{T}}\big)$ $\cdot$ $\big(x^{(1)}y^{(1)^{T}}\big)$.
\end{proof}

According to Theorem \ref{theorem1},
$\bm{GS}^{(n)}_{:,j}=\widehat{\bm{G}}^{(n)}\textbf{S}^{(n)T}_{(\Psi^{(n)}_{M})_{i_n}, j}$
$=$
$\textbf{B}^{(n)}\big(\textbf{B}^{(N)}\odot\cdots \odot\textbf{B}^{(n+1)}\odot\textbf{B}^{(n-1)}\odot\cdots \odot\textbf{B}^{(1)}\big)^{T}$
$\big($ $\textbf{a}^{(N)}_{i_N, :}\otimes\cdots \otimes\textbf{a}^{(n+1)}_{i_{n+1}, :}\otimes\textbf{a}^{(n-1)}_{i_{n-1}, :}\otimes\cdots \otimes\textbf{a}^{(1)}_{i_1, :}$ $\big)^{T}$
$=$
$\sum_{r_{core}=1}^{R_{core}} \textbf{b}^{(n)}_{:,r_{core}}\big(\textbf{b}^{(N)}_{:,r_{core}}\otimes\cdots\otimes\textbf{b}^{(n+1)}_{:,r_{core}}\otimes
\textbf{b}^{(n-1)}_{:,r_{core}}\otimes\cdots \otimes\textbf{b}^{(1)}_{:,r_{core}}\big)^{T}$
$\big($ $\textbf{a}^{(N)}_{i_N, :}\otimes\cdots \otimes\textbf{a}^{(n+1)}_{i_{n+1}, :}\otimes\textbf{a}^{(n-1)}_{i_{n-1}, :}\otimes\cdots \otimes\textbf{a}^{(1)}_{i_1, :}$ $\big)^{T}$
$=$
$\sum_{r_{core}=1}^{R_{core}}$ $\textbf{b}^{(n)}_{:,r_{core}}$
$\big($
$(\underbrace{\textbf{b}^{(N)T}_{:,r_{core}}\textbf{a}^{(N)T}_{i_N, :}}_{c^{(N)}_{r_{core}}})$ $\cdot$ $\cdots$ $\cdot$
$(\underbrace{\textbf{b}^{(n+1)T}_{:,r_{core}}\textbf{a}^{(n+1)T}_{i_{n+1}, :}}_{c^{(n+1)}_{r_{core}}})$
$(\underbrace{\textbf{b}^{(n-1)T}_{:,r_{core}}\textbf{a}^{(n-1)T}_{i_{n-1}, :}}_{c^{(n-1)}_{r_{core}}})$ $\cdot$ $\cdots$ $\cdot$
$(\underbrace{\textbf{b}^{(1)T}_{:,r_{core}}\textbf{a}^{(1)T}_{i_1, :}}_{c^{(1)}_{r_{core}}})$
$\big)$, where $j=1+\sum_{k=1,n\neq k}^{N}$ $\left[(i_{k}-1)\mathop{\prod}_{m=1,m\neq n}^{k-1}I_{m}\right]$, $(i_{1},\cdots,i_{n},\cdots,i_{N})$ $\in$ $\Psi$.

\subsection{Update Process for Core Tensor}\label{section32}
Meanwhile, the problem (\ref{core_tensor_optimization}) is turned into:
\begin{equation}\label{core_tensor_optimization_cp}
\begin{aligned}
\mathop{\arg\min}_{\widehat{g}^{(n)}}
&f\bigg(\widehat{g}^{(n)}\bigg|x^{(n)}, \big\{\textbf{A}^{(n)}\big\}, \widehat{g}^{(n)}\bigg)\\
&=\bigg\|\bm{X}^{(n)}-\widehat{\bm{X}}^{(n)}\bigg\|_{2}^{2}+\lambda_{\widehat{g}^{(n)}}\|\widehat{g}^{(n)}\|_{2}^{2}\\
&=\sum_{i_{n}=1}^{I_{n}}\bigg\|\bm{X}^{(n)}_{i_n, :}-\widehat{\bm{X}}^{(n)}_{i_n, :}\bigg\|_{2}^{2}+\lambda_{\widehat{g}^{(n)}}\|\widehat{g}^{(n)}\|_{2}^{2}.
\end{aligned}
\end{equation}
where $\widehat{\bm{X}}^{(n)}_{i_n, :}$
$\mathop{=}$
$\textbf{a}^{(n)}_{i_{n}, :}\widehat{\bm{G}}^{(n)}\textbf{S}^{(n)T}$
$\mathop{=}$
$\textbf{a}^{(n)}_{i_{n}, :}$ $\sum_{r_{core}=1}^{R_{core}}$ $\textbf{b}^{(n)}_{:,r_{core}}$ $\big(\textbf{b}^{(N)}_{:,r_{core}}\otimes\cdots\otimes\textbf{b}^{(n+1)}_{:,r_{core}}\otimes
\textbf{b}^{(n-1)}_{:,r_{core}}\otimes\cdots \otimes\textbf{b}^{(1)}_{:,r_{core}}\big)^{T}$ $\textbf{S}^{(n)T}$
$\mathop{=}$
$\sum_{r_{core}=1}^{R_{core}}$ $\underbrace{\textbf{a}^{(n)}_{i_{n}, :} \textbf{b}^{(n)}_{:,r_{core}}}_{\textbf{b}^{(n)T}_{:,r_{core}}\textbf{a}^{(n)T}_{i_{n}, :}}$ $\big(\textbf{b}^{(N)}_{:,r_{core}}\otimes\cdots\otimes\textbf{b}^{(n+1)}_{:,r_{core}}\otimes
\textbf{b}^{(n-1)}_{:,r_{core}}\otimes\cdots \otimes\textbf{b}^{(1)}_{:,r_{core}}\big)^{T}$ $\textbf{S}^{(n)T}$
$\mathop{=}$
$\sum_{r_{core}=1}^{R_{core}}$ $\textbf{b}^{(n)T}_{:,r_{core}}$ $\underbrace{\textbf{a}^{(n)T}_{i_{n}, :} (\textbf{b}^{(N)}_{:,r_{core}}\otimes\cdots\otimes\textbf{b}^{(n+1)}_{:,r_{core}}\otimes
\textbf{b}^{(n-1)}_{:,r_{core}}\otimes\cdots \otimes\textbf{b}^{(1)}_{:,r_{core}})^{T} \textbf{S}^{(n)T} }_{\textbf{Q}^{r_{core}} \in R^{J_n \times I_{N}\cdots I_{n+1}I_{n-1}\cdots I_{1}}}$
, according to Equ. (\ref{kruskal_approximation}).
\begin{theorem}\label{theorem2}
Assume a row vector
$x= x^{(N)}\otimes x^{(N-1)}\otimes\cdots\otimes x^{(n)}\otimes\cdots \otimes x^{(2)} \otimes x^{(1)}$ and a matrix
$\textbf{Y}= \textbf{Y}^{(N)}\otimes \textbf{Y}^{(N-1)}\otimes\cdots\otimes \textbf{Y}^{(n)}\otimes\cdots \otimes \textbf{Y}^{(2)} \otimes \textbf{Y}^{(1)}$, $n\in\{N\}$, where
$x^{(n)}$ $\in$ $\mathbb{R}^{I_{n}}$ and
$\textbf{Y}^{(n)}$ $\in$ $\mathbb{R}^{J_{n}\times I_{n}}$,
$x$ $\in$ $\mathbb{R}^{I_{1}\cdots I_{n}\cdots I_{N}}$,
$\textbf{Y}$ $\in$ $\mathbb{R}^{J_{1}\cdots J_{n}\cdots J_{N}\times I_{1}\cdots I_{n}\cdots I_{N}}$, $n\in\{N\}$.
The vector-matrix multiplication
$x\textbf{Y}^{T}$ $=$
$\big(x^{(N)}\otimes x^{(N-1)}\otimes\cdots\otimes x^{(n)}\otimes\cdots \otimes x^{(2)} \otimes x^{(1)}\big)$
$\big(\textbf{Y}^{(N)}\otimes \textbf{Y}^{(N-1)}\otimes\cdots\otimes \textbf{Y}^{(n)}\otimes\cdots \otimes \textbf{Y}^{(2)} \otimes \textbf{Y}^{(1)}\big)^{T}$ can be transformed into
$x\textbf{Y}^{T}$ $=$
$\big(x^{(N)}\textbf{Y}^{(N)^{T}}\big)$ $\otimes$ $\big(x^{(N-1)}\textbf{Y}^{(N-1)^{T}}\big)$ $\otimes$ $\cdots$ $\otimes$ $\big(x^{(n)}\textbf{Y}^{(n)^{T}}\big)$ $\otimes$ $\cdots$ $\otimes$  $\big(x^{(2)}\textbf{Y}^{(2)^{T}}\big)$ $\otimes$ $\big(x^{(1)}\textbf{Y}^{(1)^{T}}\big)$.
\end{theorem}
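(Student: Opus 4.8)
The plan is to recognize Theorem~\ref{theorem2} as the matrix-valued generalization of Theorem~\ref{theorem1} and to reduce it to the mixed-product property of the Kronecker product, namely $(\textbf{A}\otimes \textbf{B})(\textbf{C}\otimes \textbf{D})=(\textbf{A}\textbf{C})\otimes(\textbf{B}\textbf{D})$ whenever the factorwise products are defined. First I would rewrite the transpose on the right operand using the fact that transposition distributes over the Kronecker product, so that $\textbf{Y}^{T}=\big(\textbf{Y}^{(N)}\otimes\cdots\otimes\textbf{Y}^{(1)}\big)^{T}=\textbf{Y}^{(N)^{T}}\otimes\cdots\otimes\textbf{Y}^{(1)^{T}}$, with each $\textbf{Y}^{(n)^{T}}\in\mathbb{R}^{I_{n}\times J_{n}}$. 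The product $x\textbf{Y}^{T}$ then becomes a Kronecker product of $N$ row vectors multiplied by a Kronecker product of $N$ matrices, in which the $n$th left factor $x^{(n)}\in\mathbb{R}^{1\times I_{n}}$ is conformable with the $n$th right factor $\textbf{Y}^{(n)^{T}}\in\mathbb{R}^{I_{n}\times J_{n}}$; applying the mixed-product property factorwise yields exactly the claimed $\big(x^{(N)}\textbf{Y}^{(N)^{T}}\big)\otimes\cdots\otimes\big(x^{(1)}\textbf{Y}^{(1)^{T}}\big)$.

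Because the paper's Theorem~\ref{theorem1} is proved by an explicit entrywise computation, I would carry out Theorem~\ref{theorem2} in the same style for uniformity, which also makes the mixed-product step self-contained. Following the index convention already used, every column index $j$ of $x\textbf{Y}^{T}$ corresponds to a unique multi-index $(j_{1},\ldots,j_{N})$ with $j_{n}\in\{J_{n}\}$, and for $\textbf{Y}=\textbf{Y}^{(N)}\otimes\cdots\otimes\textbf{Y}^{(1)}$ the entry factorizes as $\textbf{Y}_{j,i}=\prod_{n=1}^{N}\textbf{Y}^{(n)}_{j_{n},i_{n}}$, while $x_{i}=\prod_{n=1}^{N}x^{(n)}_{i_{n}}$ as in Theorem~\ref{theorem1}. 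The $j$th entry of $x\textbf{Y}^{T}$ is $\sum_{i}x_{i}\textbf{Y}_{j,i}$, and the key step is to separate the single multi-index contraction $\sum_{i_{1},\ldots,i_{N}}$ into a product of $N$ independent one-dimensional sums $\prod_{n=1}^{N}\big(\sum_{i_{n}=1}^{I_{n}}x^{(n)}_{i_{n}}\textbf{Y}^{(n)}_{j_{n},i_{n}}\big)=\prod_{n=1}^{N}\big(x^{(n)}\textbf{Y}^{(n)^{T}}\big)_{j_{n}}$, which is precisely the $j$th entry of the Kronecker product on the right-hand side.

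The main obstacle is the bookkeeping of the two mixed-radix index maps: one must check that the single index $j$ enumerating the columns of $x\textbf{Y}^{T}$ runs through the multi-indices $(j_{1},\ldots,j_{N})$ in exactly the same order in which the Kronecker product $\big(x^{(N)}\textbf{Y}^{(N)^{T}}\big)\otimes\cdots\otimes\big(x^{(1)}\textbf{Y}^{(1)^{T}}\big)$ enumerates them, and likewise that the contraction index $i$ matches on both sides. This is the same ordering subtlety as in Theorem~\ref{theorem1}, only now carried by the free index $j$ rather than collapsed to a scalar, so once the factorization $\textbf{Y}_{j,i}=\prod_{n}\textbf{Y}^{(n)}_{j_{n},i_{n}}$ and the interchange of sum and product are justified, the result follows. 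I would close by noting that specializing each $\textbf{Y}^{(n)}$ to a single row vector $y^{(n)}$ recovers Theorem~\ref{theorem1} as the scalar special case, confirming consistency.
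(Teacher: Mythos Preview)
Your proposal is correct, and your entrywise argument is essentially the paper's approach: both compute the $j$th entry of $x\textbf{Y}^{T}$ by factorizing over the multi-index and separating the contraction into $N$ independent sums. The paper's proof is slightly more compressed in that it observes directly that the $j$th row of $\textbf{Y}=\textbf{Y}^{(N)}\otimes\cdots\otimes\textbf{Y}^{(1)}$ is itself the Kronecker product $\textbf{Y}^{(N)}_{j_{N},:}\otimes\cdots\otimes\textbf{Y}^{(1)}_{j_{1},:}$ of the corresponding rows, and then simply invokes Theorem~\ref{theorem1} on the vector pair $\big(x,\textbf{Y}_{j,:}\big)$ rather than redoing the sum--product interchange; your version unpacks that step explicitly. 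Your first route via the mixed-product identity $(\textbf{A}\otimes\textbf{B})(\textbf{C}\otimes\textbf{D})=(\textbf{A}\textbf{C})\otimes(\textbf{B}\textbf{D})$ together with $(\textbf{A}\otimes\textbf{B})^{T}=\textbf{A}^{T}\otimes\textbf{B}^{T}$ is a genuinely cleaner one-line alternative that the paper does not use; it buys you the result without any index bookkeeping, at the cost of importing a standard Kronecker-product fact rather than keeping the argument self-contained in the paper's elementwise style.
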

\begin{proof}
The $i$th element of $x\textbf{Y}^{T}$ is $xy^{T}_{i,:}$, where $i=1+\sum_{k=1}^{N}\left[(i_{k}-1)\mathop{\prod}_{m=1,m\neq 1}^{k-1}I_{m}\right]$, $i_{n}\in \{I_{n}\}, n\in\{N\}$, and $y_{i, :}=
y_{i_N ,:}^{(N)}\otimes y_{i_{N-1},:}^{(N-1)}\otimes\cdots\otimes y_{i_n, :}^{(n)}\otimes\cdots \otimes y_{i_2,:}^{(2)} \otimes y_{i_1,:}^{(1)}$.
According to Theorem \ref{theorem1},
$xy_{i, :}^{T}$
$=$
$\big(x^{(N)}y_{i_{N},:}^{(N)^{T}}\big)$ $\cdot$ $\big(x^{(N-1)}y_{i_{N-1},:}^{(N-1)^{T}}\big)$ $\cdot$ $\cdots$ $\cdot$ $\big(x^{(n)}y_{i_n,:}^{(n)^{T}}\big)$ $\cdot$ $\cdots$ $\cdot$ $\big(x^{(2)}y_{i_2,:}^{(2)^{T}}\big)$ $\cdot$ $\big(x^{(1)}y_{i_1,:}^{(1)^{T}}\big)$
$=$
$\big(x^{(N)}y_{i_{N},:}^{(N)^{T}}\big)$ $\otimes$ $\big(x^{(N-1)}y_{i_{N-1},:}^{(N-1)^{T}}\big)$ $\otimes$ $\cdots$ $\otimes$ $\big(x^{(n)}y_{i_n,:}^{(n)^{T}}\big)$ $\otimes$ $\cdots$ $\otimes$  $\big(x^{(2)}y_{i_2,:}^{(2)^{T}}\big)$ $\otimes$ $\big(x^{(1)}y_{i_1,:}^{(1)^{T}}\big)$.
\end{proof}
According to Theorem \ref{theorem2} and the definition of $\textbf{S}^{(n)}$ in Equ. \ref{low_rank_optimization},
$(\textbf{b}^{(N)}_{:,r_{core}}\otimes\cdots\otimes\textbf{b}^{(n+1)}_{:,r_{core}}\otimes
\textbf{b}^{(n-1)}_{:,r_{core}}\otimes\cdots \otimes\textbf{b}^{(1)}_{:,r_{core}})^{T}$ $\textbf{S}^{(n)T}$
$=$
$(\textbf{b}^{(N)T}_{:,r_{core}}\textbf{A}^{(N)T}) \otimes \cdots \otimes(\textbf{b}^{(n+1)T}_{:,r_{core}}\textbf{A}^{(n+1)T})\otimes
(\textbf{b}^{(n-1)T}_{:,r_{core}}\textbf{A}^{(n-1)T})\otimes\cdots \otimes(\textbf{b}^{(1)T}_{:,r_{core}}\textbf{A}^{(1)T})$.
With $\textbf{Q}^{r_{core}}$
$=$
$(\textbf{b}^{(N)T}_{:,r_{core}}\textbf{A}^{(N)T})$ $\otimes$ $\cdots$ $\otimes$
$(\textbf{b}^{(n+1)T}_{:,r_{core}}\textbf{A}^{(n+1)T})$ $\otimes$
$(\textbf{b}^{(n-1)T}_{:,r_{core}}\textbf{A}^{(n-1)T})$ $\otimes$ $\cdots$ $\otimes$
$(\textbf{b}^{(1)T}_{:,r_{core}}\textbf{A}^{(1)T})$ and
$\widehat{\bm{X}}^{(n)}_{i_n, :}$ $=$ $\sum_{r_{core}=1}^{R_{core}}$ $\textbf{b}^{(n)T}_{:,r_{core}}$ $\textbf{Q}^{r_{core}}$,
the problem (\ref{core_tensor_optimization_cp}) is transformed into:
\begin{equation}\label{core_tensor_optimization_cp1}
\begin{aligned}
\mathop{\arg\min}_{\textbf{b}^{(n)T}_{:,r_{core}}}
&f\bigg(\textbf{b}^{(n)T}_{:,r_{core}}\bigg|x^{(n)}, \big\{\textbf{A}^{(n)}\big\}, \widehat{g}^{(n)}\bigg)\\
=&\sum_{i_{n}=1}^{I_{n}}\bigg\|\bm{X}^{(n)}_{i_n, :}-\sum_{r_{core}=1}^{R_{core}}\textbf{b}^{(n)T}_{:,r_{core}}\textbf{Q}^{r_{core}}\bigg\|_{2}^{2}\\
 &+\lambda_{\textbf{b}^{(n)}_{:,r_{core}}}\|\textbf{b}^{(n)T}_{:,r_{core}}\|_{2}^{2},
\end{aligned}
\end{equation}
where $\textbf{b}^{(n)T}_{:,r_{core}}, n\in\{N\}, r_{core}\in\{R_{core}\}$.
However, the computational and space overheads to construct the gradient are still high.
With the one-step sampling set $\Psi$ and fixed $r_{core}=r$, $r\in\{R_{core}\}$, the problem (\ref{core_tensor_optimization_cp1}) is turned into:
\begin{equation}\label{low_rank_optimization_core_tensor_sgd}
\begin{aligned}
&\mathop{\arg\min}_{\textbf{b}^{(n)T}_{:,r}}
f\bigg(\textbf{b}^{(n)T}_{:,r}\bigg|x^{(n)}_{\Psi^{(n)}_{V}}, \big\{\textbf{A}^{(n)}\big\}, \widehat{g}^{(n)}\bigg)\\
=&\sum_{i_{n}=1}^{I_{n}}\bigg\|\bm{X}^{(n)}_{i_n,(\Psi^{(n)}_{M})_{i_n}}-\textbf{b}^{(n)T}_{:,r}\textbf{Q}^{r}_{:,(\Psi^{(n)}_{M})_{i_n}}\\
&-\sum_{r_{core}=1, r_{core}\neq r}^{R_{core}}\textbf{b}^{(n)T}_{:,r_{core}}\textbf{Q}^{r_{core}}_{:, (\Psi^{(n)}_{M})_{i_n}}\bigg\|_{2}^{2}\\
&+\lambda_{\textbf{b}^{(n)}_{:,r}}\|\textbf{b}^{(n)T}_{:,r}\|_{2}^{2},
\end{aligned}
\end{equation}
and the approximated gradient from SGD is obtained as:
\begin{equation}\label{Gradient_low_rank_core}
\begin{aligned}
&\frac{\partial f\bigg(\textbf{b}^{(n)T}_{:,r}\bigg|x^{(n)}_{\Psi^{(n)}_{V}}, \big\{\textbf{A}^{(n)}\big\}, \widehat{g}^{(n)}\bigg)}{\partial \textbf{b}^{(n)T}_{:,r}}\\
=&-\sum_{i_{n}=1}^{I_{n}}\underbrace{\bm{X}^{(n)}_{i_n,(\Psi^{(n)}_{M})_{i_n}}\textbf{Q}^{(n), rT}_{:,(\Psi^{(n)}_{M})_{i_n}}}_{(1): CoreTensorPart_1}
+\underbrace{\lambda_{\textbf{b}^{(n)}_{:,r}}\textbf{b}^{(n)T}_{:,r}}_{(2): CoreTensorPart_2}\\
&+\sum_{i_{n}=1}^{I_{n}}\bigg\{\underbrace{\overbrace{\textbf{b}^{(n)T}_{:,r}\textbf{Q}^{(n), r}_{:, (\Psi^{(n)}_{M})_{i_n}}}^{InterMX_r}\textbf{Q}^{(n), rT}_{:, (\Psi^{(n)}_{M})_{i_n}}}_{(3): CoreTensorPart_3}\\
&+\underbrace{\sum_{r_{core}=1\neq r}^{R_{core}}\bigg(\overbrace{\textbf{b}^{(n)T}_{:,r_{core}}\textbf{Q}^{(n), r_{core}}_{:, (\Psi^{(n)}_{M})_{i_n}}}^{InterMX_{r_{core}}}\textbf{Q}^{(n), rT}_{:, (\Psi^{(n)}_{M})_{i_n}}}_{(4): CoreTensorPart_4}\bigg)\bigg\},\\
\end{aligned}
\end{equation}
where
$\textbf{Q}^{(n), r}_{:, j}$ $\in$ $\mathbb{R}^{J_{n}}$, and
$\textbf{Q}^{(n), r}_{:, j}$
$=$
$\textbf{a}^{(n)T}_{i_{n}, :}$
$\big(\textbf{b}^{(N)}_{:,r}\otimes\cdots\otimes\textbf{b}^{(n+1)}_{:,r}\otimes
\textbf{b}^{(n-1)}_{:,r}\otimes\cdots \otimes\textbf{b}^{(1)}_{:,r}\big)^{T}$
$\big($ $\textbf{a}^{(N)}_{i_N, :}\otimes\cdots \otimes\textbf{a}^{(n+1)}_{i_{n+1}, :}\otimes\textbf{a}^{(n-1)}_{i_{n-1}, :}\otimes\cdots \otimes\textbf{a}^{(1)}_{i_1, :}$ $\big)^{T}$
$=$
$\textbf{a}^{(n)T}_{i_{n}, :}$
$\big($
$(\underbrace{\textbf{b}^{(N)T}_{:,r}\textbf{a}^{(N)T}_{i_N, :}}_{c_r^{(N)}})$ $\cdot$ $\cdots$ $\cdot$
$(\underbrace{\textbf{b}^{(n+1)T}_{:,r}\textbf{a}^{(n+1)T}_{i_{n+1}, :}}_{c_{r}^{(n+1)}})$
$(\underbrace{\textbf{b}^{(n-1)T}_{:,r}\textbf{a}^{(n-1)T}_{i_{n-1}, :}}_{c_{r}^{(n-1)}})$ $\cdot$ $\cdots$ $\cdot$
$(\underbrace{\textbf{b}^{(1)T}_{:,r}\textbf{a}^{(1)T}_{i_1, :}}_{c_r^{(1)}})$
$\big)$,
where $j=1+\sum_{k=1,n\neq k}^{N}$ $\left[(i_{k}-1)\mathop{\prod}_{m=1,m\neq n}^{k-1}I_{m}\right]$, $(i_{1},\cdots,i_{n},\cdots,i_{N})$ $\in$ $\Psi$, $n\in\{N\}$, $r\in\{R_{core}\}$.

\begin{table}[htbp]
	\setlength{\abovedisplayskip}{0pt}
	\setlength{\belowdisplayskip}{0pt}
	\renewcommand{\arraystretch}{1.5}
	\caption{Table of Computational Complexity.}
	\centering
	\label{Computational_Complexity}
	\tabcolsep1pt
	\begin{tabular}{cc}
		\hline
		\hline
		\makecell[c]{Updating Factor Matrices}                         & \makecell[c]{Computational Complexity}\\
		\makecell[c]{$i_{n}$ $\in$ $\{I_{n}\}$, $n$ $\in$ $\{N\}$}     & \\
		\hline
        \makecell[c]{$\textbf{D}^{(n)}_{(\Psi^{(n)}_{M})_{i_n}, j}$}
         &
        \makecell[c]{$O\big(DCC^{(n)}\big)$;}\\
        \makecell[c]{$\textbf{D}^{(n)}_{(\Psi^{(n)}_{M})_{i_n}, :}$}
        &
        \makecell[c]{$O\big(|(\Psi^{(n)}_{M})_{i_n}|DCC^{(n)}\big)$;}\\
        \makecell[c]{$\bm{X}^{(n)}_{i_n,(\Psi^{(n)}_{M})_{i_n}}\textbf{D}^{(n)}_{(\Psi^{(n)}_{M})_{i_n}, :}$}
        &
        \makecell[c]{$O\big(J_n|(\Psi^{(n)}_{M})_{i_n}|\big)$;}\\
        \makecell[c]{$\textbf{E}^{(n)}_{i_n}=\textbf{D}^{(n)}_{(\Psi^{(n)}_{M})_{i_n}, :}\textbf{D}^{(n)T}_{(\Psi^{(n)}_{M})_{i_n}, :}$}
        &
        \makecell[c]{$O\big(J_n^{2}|(\Psi^{(n)}_{M})_{i_n}|\big)$;}\\
        \makecell[c]{$\textbf{a}^{(n)}_{i_{n}, :}\textbf{E}^{(n)}_{i_n}$}
        &
        \makecell[c]{$O\big(J_n^{2}\big)$;}\\
		\hline
        \makecell[c]{Total}
        &
        \makecell[c]{$O\bigg(\sum\limits_{n=1}^{N}\big(|\Psi^{(n)}_{V}|(J_n+J^2_n$}\\
        \makecell[c]{for all $n\in\{N\}$}
        &
        \makecell[c]{$+DCC^{(n)})+J^2_n\big)\bigg)$.}\\
		\hline
		\makecell[c]{Updating Core Tensor}                         & \\
		\makecell[c]{Fixing a $n$ $\in$ $\{N\}$, $i_{n}$ $\in$ $\{I_{n}\}$} & \\
		\makecell[c]{and $r$ $\in$ $\{R_{core}\}$} & \\
		\hline
        \makecell[c]{$\textbf{Q}^{(n), r}_{:, j}$}
         &
        \makecell[c]{$O\big(QCC^{(n)}\big)$;}\\
        \makecell[c]{$\textbf{Q}^{(n), r}_{:,(\Psi^{(n)}_{M})_{i_n}}$}
         &
        \makecell[c]{$O\big(|(\Psi^{(n)}_{M})_{i_n}|QCC^{(n)}\big)$;}\\
        \makecell[c]{Part (1)}
         &
        \makecell[c]{$O\big(J_{n}|(\Psi^{(n)}_{M})_{i_n}|\big)$;}\\
        \makecell[c]{Part (2)}
         &
        \makecell[c]{$O\big(J_{n}\big)$;}\\
        \makecell[c]{Part (3)}
         &
        \makecell[c]{$O\big(J_n^{2}|(\Psi^{(n)}_{M})_{i_n}|+J_n^{2}\big)$;}\\
        \makecell[c]{Part (4)}
         &
        \makecell[c]{$O\bigg((R_{core}-1)\big(J_n^{2}|(\Psi^{(n)}_{M})_{i_n}|+J_n^{2}\big)\bigg)$;}\\
		\hline
        \makecell[c]{Total}
        &
        \makecell[c]{$O\bigg(\sum\limits_{n=1}^{N}R_{core}^2|\Psi^{(n)}_{V}|\big(J_n+J^2_n$}\\
        \makecell[c]{for all $n\in\{N\}$ and }
        &
        \makecell[c]{$+QDCC^{(n)}\big)+\sum\limits_{n=1}^{N}R_{core}I_nJ_{n}\bigg)$.}\\
        \makecell[c]{$r_{core}$ $\in$ $\{R_{core}\}$}
        &
        \makecell[c]{}\\
        \hline
		\hline
	\end{tabular}
\end{table}

\begin{algorithm}[hptb]
	\caption{Algorithm for Updating Factor Matrices and Core Tensor.}
	\label{sgd_sftd_a}
	\vspace{.1cm}
	$\textbf{Input}$: Sparse tensor $\mathcal{X}$, learning rate $\gamma_{\textbf{A}}$,
	regularization parameter $\lambda_{\textbf{A}}$.
	Initializing $\textbf{A}^{(n)}$, $\textbf{B}^{(n)}$, sampling set $\Psi$ and $M=1$ for factor matrices, $M=|\Psi|$ for core tensor, and each step selecting an index $($ $i_{1}$, $\cdots$, $i_{n}$, $\cdots$, $i_{N}$ $)$ from $\Psi$, where $n$ $\in$ $\{N\}$.\\
	$\textbf{Output}$: $\textbf{A}^{(n)}$, $\textbf{B}^{(n)}$, $n$ $\in$ $\{N\}$.\\
	\begin{algorithmic}[1]
	\FOR{$n$ from $1$ to $N$}
        \STATE Set all $c_r^{(n)}\leftarrow 0$, $\bm{GS}^{(n)}_{:}\leftarrow 0$, $n\in\{N\}$, $r\in\{R_{core}\}$;
        \STATE $j\leftarrow 1+\sum_{k=1,n\neq k}^{N}$ $\left[(i_{k}-1)\mathop{\prod}_{m=1,m\neq n}^{k-1}I_{m}\right]$;
		\FOR{$r_{core}$ from $1$ to $R_{core}$}
		  \FOR{$n_{0}$ from $1$ to $N$, $n_{0}$ $\neq$ $n$}
               \STATE $c_{r_{core}}^{(n_{0})}\leftarrow \textbf{b}^{(n_{0})T}_{:, r_{core}}\textbf{a}^{(n_{0})T}_{i_{n_{0}}, :}$; \%This step can be accelerated by CUDA Warp Shuffle and Memory Coalescing.
               \STATE $\textbf{b}^{(n)}_{:,r_{core}}\leftarrow c_{r_{core}}^{(n_{0})}\textbf{b}^{(n)}_{:,r_{core}}$;
		  \ENDFOR
       \STATE $\bm{GS}^{(n)}_{:}\leftarrow\bm{GS}^{(n)}_{:}+\textbf{b}^{(n)}_{:,r_{core}}$;
		\ENDFOR
       \STATE $FacMatPart_1$ $\leftarrow$ $\textbf{X}_{i_n, j}\bm{GS}^{(n)}_{:}$ of Part (1) in Equ. (\ref{Gradient_low_rank});
       \STATE $FacMatPart_2$ $\leftarrow$ $\lambda_{a_{i_n, :}^{(n)}}a_{i_n, :}^{(n)}$ of Part (2) in Equ. (\ref{Gradient_low_rank});
       \STATE $InterMX$ $\leftarrow$ $a_{i_n, :}^{(n)}\bm{GS}^{(n)}_{:}$ of Part (3) in Equ. (\ref{Gradient_low_rank});
       \STATE $FacMatPart_3$ $\leftarrow$ $InterMX\bm{GS}^{(n)T}_{:}$ of Part (3) in Equ. (\ref{Gradient_low_rank});
       \STATE Update $a_{i_n, :}^{(n)}$ by SGD in Equ. (\ref{SGD}) after summing the parts (1)-(3) of Equ. (\ref{Gradient_low_rank});
	\ENDFOR
	\FOR{$n$ from $1$ to $N$}
        \STATE $j\leftarrow 1+\sum_{k=1,n\neq k}^{N}$ $\left[(i_{k}-1)\mathop{\prod}_{m=1,m\neq n}^{k-1}I_{m}\right]$;
        \STATE Set all $\textbf{Q}^{(n), r}_{:, j}\leftarrow 0$, $CoreTensorPart_4\leftarrow 0$, $n\in\{N\}$, $r\in\{R_{core}\}$;
	    \FOR{$r$ from $1$ to $R_{core}$}
           \FOR{$r_1$ from $1$ to $R_{core}$}
		      \FOR{$n_{0}$ from $1$ to $N$, $n_{0}$ $\neq$ $n$}
                 \STATE $c_{r_1}^{(n_{0})}\leftarrow \textbf{b}^{(n_{0})T}_{:, r_1}\textbf{a}^{(n_{0})T}_{i_{n_{0}}, :}$; \%This step can be accelerated by CUDA Warp Shuffle and Memory Coalescing.
                 \STATE $\textbf{a}^{(n_{0})}_{i_{n_{0}}, :}\leftarrow c_{r_1}^{(n_{0})}\textbf{a}^{(n_{0})}_{i_{n_{0}}, :}$;
		      \ENDFOR
              \STATE $\textbf{Q}^{(n_{0}), r_1}_{:, j}\leftarrow \textbf{a}^{(n_{0})}_{i_{n_{0}}, :}$;
		    \ENDFOR
           \FOR{$r_{core}$ from $1$ to $R_{core}$}
               \STATE $InterMX_{r_{core}}\leftarrow \textbf{b}^{(n)T}_{:, r_{core}}\textbf{Q}^{(n), r_{core}}_{:, j}$;
		   \ENDFOR
           \STATE $CoreTensorPart_1\leftarrow \textbf{X}_{i_n, j}\textbf{Q}^{(n), r}_{:, j}$;
           \STATE $CoreTensorPart_2\leftarrow \lambda_{\textbf{b}^{(n)}_{:,r}}\textbf{b}^{(n)T}_{:,r}$;
           \STATE $CoreTensorPart_3\leftarrow InterMX_{r}\textbf{Q}^{(n), rT}_{:, j}$;
           \FOR{$r_{core}$ from $1$ to $R_{core}$, $r_{core}$ $\neq$ $r$}
                \STATE $CoreTensorPart_4$ $\leftarrow$ $CoreTensorPart_4$ $+$ $InterMX_{r_{core}}\textbf{Q}^{(n), rT}_{:, j}$;
		   \ENDFOR
            \STATE Update $\textbf{b}^{(n)T}_{:,r})$ by SGD in Equ. (\ref{SGD}) after summing the parts (1)-(4) of Equ. (\ref{Gradient_low_rank_core});
		\ENDFOR
	\ENDFOR
		\STATE $\textbf{Return}$: $\textbf{A}^{(n)}$, $\textbf{B}^{(n)}$, $n$ $\in$ $\{N\}$.\\
	\end{algorithmic}
\end{algorithm}

\subsection{Complexity Analysis and Comparison}\label{section33}
The computational details and computational complexity of the proposed model are concluded in Table \ref{Computational_Complexity} and Algorithm \ref{sgd_sftd_a}, respectively.
This section also presents the computational complexity of the condition without the Kruskal product for approximating the core tensor
(In experimental section, we refer the cuTucker as the stochastic strategy for STD without Kruskal product for approximating the core tensor on GPU CUDA programming).

(1) With the Theorems \ref{theorem1} and \ref{theorem2},
the computational complexity of $\textbf{D}^{(n)}_{(\Psi^{(n)}_{M})_{i_n}, j}=\widehat{\bm{G}}^{(n)}\textbf{S}^{(n)T}_{(\Psi^{(n)}_{M})_{i_n}, j}$ can be reduced from direct computation $O\big(\prod_{k=1\neq n}^{N}J_k\big)$ to $DCC^{(n)}$ $=$ $O\big(R_{core}J_{n}\sum_{k=1\neq n}^{N}J_k\big)$.
The computational complexity and space overhead of $\textbf{Q}^{(n), rT}_{:, j}$ can be reduced from $O\big(J_{n}\sum_{k=1}^{N}J_k\big)$ into $QCC^{(n)}$ $=$ $O\big(J_{n}\sum_{k=1\neq n}^{N}J_k\big)$.

(2) Without Kruskal approximation,
in the process of updating $\textbf{a}^{(n)}_{i_{n}}$, $i_n\in \{I_{n}\}$, $n \in\{N\}$,
the computational complexity of the intermediate matrices $\big\{\textbf{S}^{(n)T}_{(\Psi^{(n)}_{M})_{i_n}, j}$ $=$
$\textbf{a}^{(N)}_{i_N, :}\otimes\cdots \otimes\textbf{a}^{(n+1)}_{i_{n+1}, :}\otimes\textbf{a}^{(n-1)}_{i_{n-1}, :}\otimes\cdots \otimes\textbf{a}^{(1)}_{i_1, :}$ $,$
$\textbf{D}^{(n)}_{(\Psi^{(n)}_{M})_{i_n}, j}$
$=$
$\bm{G}^{(n)}$
$\textbf{S}^{(n)}_{(\Psi^{(n)}_{M})_{i_n}, j}$
$\big\}$, where $j=1+\sum_{k=1,n\neq k}^{N}$ $\left[(i_{k}-1)\mathop{\prod}_{m=1,m\neq n}^{k-1}I_{m}\right]$, $(i_{1},\cdots,i_{n},\cdots,i_{N})$ $\in$ $\Psi$ is
$\big\{$ $O\big(\prod_{k=1\neq n}^{N}J_k\big)$ $,$ $O\big(\prod_{k=1}^{N}J_k\big)$ $\big\}$, respectively, and
$O\big(\prod_{k=1}^{N}J_k\big)$ $\gg$ $DCC^{(n)}$.
In the process of updating $g^{(n)}$,
the space overhead complexity of coefficient matrix
$\big\{$
$\textbf{H}^{(n)}_{j, :}$ $\in$ $\mathbb{R}^{J_{N} \cdots J_{n+1} J_{n-1} \cdots J_{1} J_{n} }$ $\big| n\in\{N\}$ $j=1+\sum_{k=1}^{N}$ $\left[(i_{k}-1)\mathop{\prod}_{m=1,m\neq n}^{k-1}I_{m}\right]$, $(i_{1},\cdots,i_{n},\cdots,i_{N})$ $\in$ $\Psi$
$\big\}$
is
$O\big(\prod_{n=1}^{N}J_n\big)$,
and the computational complexity of the intermediate matrices
$\big\{$
$\textbf{H}^{(n)}_{j, :}$ $,$
$\textbf{H}^{(n)}_{j, :}g^{(n)}\big|n\in\{N\}$
$\big\}$
are $\big\{\mathop{\prod}_{m=1}^{N}J_{m},\mathop{\prod}_{m=1}^{N}J_{m}\big\}$ $\gg$ $QCC^{(n)}$, respectively.

According to Table \ref{Computational_Complexity} and above analysis,
we can conclude that the proposed model can reduce the exponential overhead into linear one with the Theorems \ref{theorem1} and \ref{theorem2} and Kruskal approximation strategy.

\section{cuFastTucker on GPUs} \label{section4}
The Section \ref{section3} solves the problem of high computational overhead for updating the factor matrix and core tensor of STD with Kruskal
approximation and Theorems \ref{theorem1} and \ref{theorem2}.
However, the STD for the HOHDST data still replies on the modern HPC resource to obtain the real-time result.
Due to the basic computational part of thread and thread block and fine-grained and high parallelization of the proposed model,
the GPU is chosen to further accelerate the proposed model (cuFastTucker).
The parallelization strategy is divided into two parts:
(1) thread parallelization within a thread block;
(2) parallelization of thread block.
In this section, data partition and communication on multi-GPUs are also presented.

\begin{figure}[htbp]
	\centering
	\includegraphics[width=3.4in]{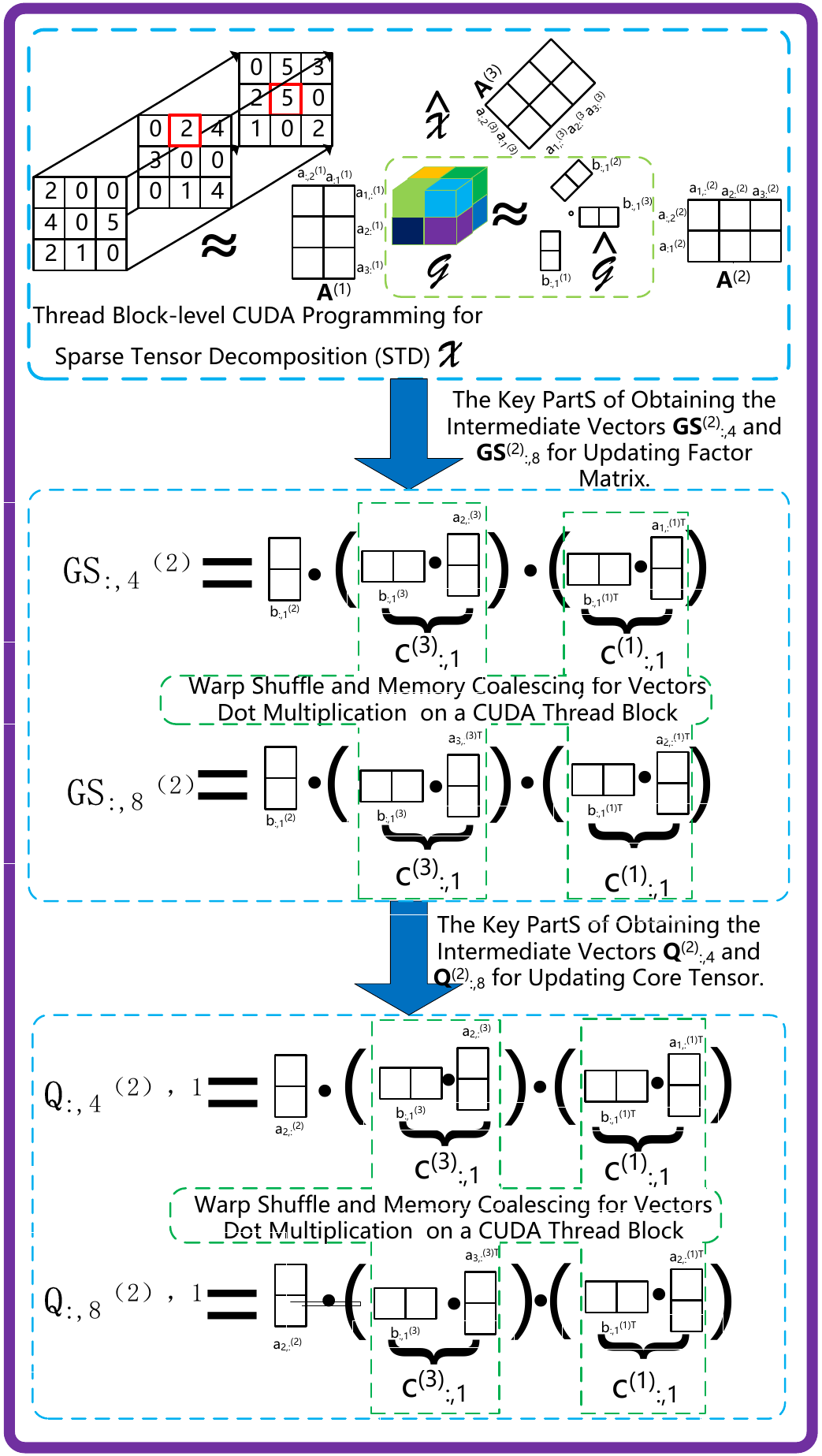}
	\caption{Two key steps $\{\bm{GS}^{(n)}_{:,j}, \textbf{Q}^{(n), r}_{:, j}\}$ for updating the factor matrix and core tensor in a CUDA thread block.}
	\label{Two_parts_CUDA}
\end{figure}

\subsection{CUDA Thread Parallelization within a Thread Block} \label{section41}
GPUs is a Single Instruction Multiple Devices architecture, where a thread block can be packed as a thread group,
and the current size of scheduling unit (Warp) in CUDA GPUs is $32$.
Hence, the number of threads within a thread block are a divisor or multiple of $32$.
This means that when $J_n$, $n \in \{N\}$ is a divisor or multiple of 32, it has better performance.
Fig. \ref{Two_parts_CUDA} illustrates the two key steps $\{\bm{GS}^{(n)}_{:,j}, \textbf{Q}^{(n), r}_{:, j}\}$ for updating the factor matrix and core tensor in a CUDA thread block.
The major optimization techniques in cuFastTucker are concluded as:

\textbf{Warp Shuffle}:
as the Fig. \ref{Two_parts_CUDA} illustrate,
warp shuffle instructions in cuFastTucker are used to compute the dot product or sum operations $\{\bm{GS}^{(n)}_{:,j}, \textbf{Q}^{(n), r}_{:, j}\}$ (Lines 6 and 23 in Algorithm \ref{sgd_sftd_a}) and then broadcast the result, automatically.
The warp shuffle instruction needs additional hardware support, with lower latency and no additional memory resources,
and it allows a thread to directly read the register values of other threads in the same thread warp,
which has better communication efficiency than reading and writing data through the shared memory.

\textbf{On-chip Cache}:
the current GPUs allows programmers to control the caching behavior of each memory instruction of the on-chip L1 cache.
In the SGD based method, the read-only index and read-only value of the non-zero element may be frequently reused in the near future (temporal reuse) or by other thread blocks (spatial reuse).
We use $\_\_ldg$ to modify read-only memory to improve access efficiency.

\textbf{Memory Coalescing}:
in order to utilize the bus bandwidth, GPUs usually coalesce the memory accesses of multiple threads into fewer memory requests.
Assuming that the multiple threads access addresses within $32$ bytes,
their access can be completed through a memory request, which can greatly improve bandwidth utilization.
According to the CUDA code, all the variable matrices
$\big\{$ $\textbf{A}^{(n)}$,
$\textbf{G}^{(n)}$,
$\textbf{B}^{(n)}$ $\big|$
$n$ $\in$ $\{N\}$ $\big\}$ are stored as the form of $\big\{$ $\textbf{A}^{(n)}$ $\in$ $\mathbb{R}^{I_{n}\times J_{n}}$,
$\textbf{G}^{(n)^{T}}$,
$\textbf{B}^{(n)^{T}}$ $\in$ $\mathbb{R}^{R_{core}\times J_{n}}$ $\big|$
$n$ $\in$ $\{N\}$ $\big\}$ to ensure that consecutive threads access consecutive memory addresses.

\textbf{Register Usage}:
the register file is the fastest storage unit on GPUs,
so we save every reusable variable in registers.
Although the total number of registers on GPUs is fixed,
our algorithm only needs to use a small number of registers.
The current number of GPUs is completely sufficient.

\textbf{Shared Memory}:
the threads within a thread block use the shared memory which is much faster than global memory,
and the register is not suitable for storing continuous vectors.
The frequent used intermediate vectors lie in shared memory,
and these vectors will be used in the next process.

\subsection{CUDA Thread Block Parallelization} \label{section42}
From Algorithm \ref{sgd_sftd_a},
the computational step for each feature vector $a^{(n)}_{i_n, :}$, $i_n \in \{I_{n}\}$, $n\in\{N\}$ is independent which has fine-grain parallelization.
Meanwhile, Kruskal approximation vectors $b^{(n)}_{:,r}$, $n\in\{N\}$, $r\in \{R_{core}\}$ are dependent.
Thus, the vectors $b^{(n)}_{:,r}$, $n\in\{N\}$, $r\in \{R_{core}\}$ should be updated, simultaneously.
The allocated $NR_{core}$ thread blocks pre-compute the gradient following the line 23 of Algorithm \ref{sgd_sftd_a}, and then
update $b^{(n)}_{:,r}$, $n\in\{N\}$, $r\in \{R_{core}\}$, simultaneously.
There are two strategies to allocate the $TB$ thread blocks to update the feature vector $a^{(n)}_{i_n, :}$, $i_n \in \{I_{n}\}$, $n\in\{N\}$:
(1) the $I_{n}$, $n\in\{N\}$ feature vectors are allocated to the $TB$ thread blocks;
(2) each thread block selects a index from the one-step sampling set $\Psi$ and each thread block compute the gradient following the line 6 of Algorithm \ref{sgd_sftd_a}.

\subsection{Workload Partitioning} \label{section43}

The scale of data that can be processed by a single GPU is limited, and
otherwise, the time required to process the data is not acceptable.
The HOHDST data is divided so that it can be processed simultaneously on multiple GPUs.
Given an $N$-order tensor $\mathcal{X}\in\mathbb{R}^{I_{1}\times\cdots I_{n}\cdots \times I_{N}}$ and $M$ GPUs,
we averagely cut each order of the tensor into $M$ parts,
so the tensor is evenly divided into $M^N$ blocks $\{\mathcal{X}_{1,1,\cdots,1},$
$\mathcal{X}_{1,1,\cdots,2},\cdots,$
$\mathcal{X}_{M,M,\cdots,M-1},$
$\mathcal{X}_{M,M,\cdots,M}\}$.
In the same period of time,
each GPU is responsible for processing one of these $M^N$ blocks.
In order to avoid data conflicts between multiple GPUs,
the indexes of the same order of the blocks that are responsible for different GPUs at the same time are different.
For example, block $\mathcal{X}_{1,1,\cdots,1}$ and block $\mathcal{X}_{2,2,\cdots,1}$ cannot be processed at the same time.

As shown in Figure \ref{Multi-GPU}, 
two GPUs are used to process a 3-dimensional tensor, 
which is equally divided into $2*2*2$ blocks. 
When updating the factor matrices, GPU 1 and GPU 2 update 
$\big\{\{A_{1}^{(1)},$ $A_{1}^{(2)},$ $A_{1}^{(3)}\},$
$\{A_{1}^{(1)},$ $A_{1}^{(2)},$ $A_{2}^{(3)}\},$
$\{A_{1}^{(1)},$ $A_{2}^{(2)},$ $A_{2}^{(3)}\},$
$\{A_{1}^{(1)},$ $A_{2}^{(2)},$ $A_{1}^{(3)}\}\big\}$ and 
$\big\{\{A_{2}^{(1)},$ $A_{2}^{(2)},$ $A_{2}^{(3)}\},$
$\{A_{2}^{(1)},$ $A_{2}^{(2)},$ $A_{1}^{(3)}\},$
$\{A_{2}^{(1)},$ $A_{1}^{(2)},$ $A_{1}^{(3)}\},$
$\{A_{2}^{(1)},$ $A_{1}^{(2)},$ $A_{2}^{(3)}\}\big\}$
through blocks 
$\big\{Tensor(1,1,1),$ $Tensor(1,1,2),$ $Tensor(1,2,2),$ $Tensor(1,2,1)\big\}$ 
and $\big\{Tensor(2,2,2),$ $Tensor(2,2,1),$ $Tensor(2,1,1),$ $Tensor(2,1,2)\big\}$, respectively. 
In this process, the processing of GPU 1 and GPU 2 does not conflict, and after GPU 1 and GPU 2 update blocks $\big\{Tensor(1,1,1),$ $Tensor(1,1,2),$ $Tensor(1,2,2),$ $Tensor(1,2,1)\big\}$ 
and $\big\{Tensor(2,2,2),$ $Tensor(2,2,1),$ $Tensor(2,1,1),$ $Tensor(2,1,2)\big\}$ respectively,
they only need to pass parameters $\big\{$ $A_{1}^{(3)},$
$A_{1}^{(2)},$
$A_{2}^{(3)},$
$A_{2}^{(2)}\big\}$ and 
$\big\{A_{2}^{(3)},$
$A_{2}^{(2)},$
$A_{1}^{(3)},$
$A_{1}^{(2)}\big\}$ to each other.
When updating the core tensor, 
it is only necessary to update the core tensor after accumulating all the gradients.

\begin{figure}[htbp]
	\centering
	\includegraphics[width=3.4in]{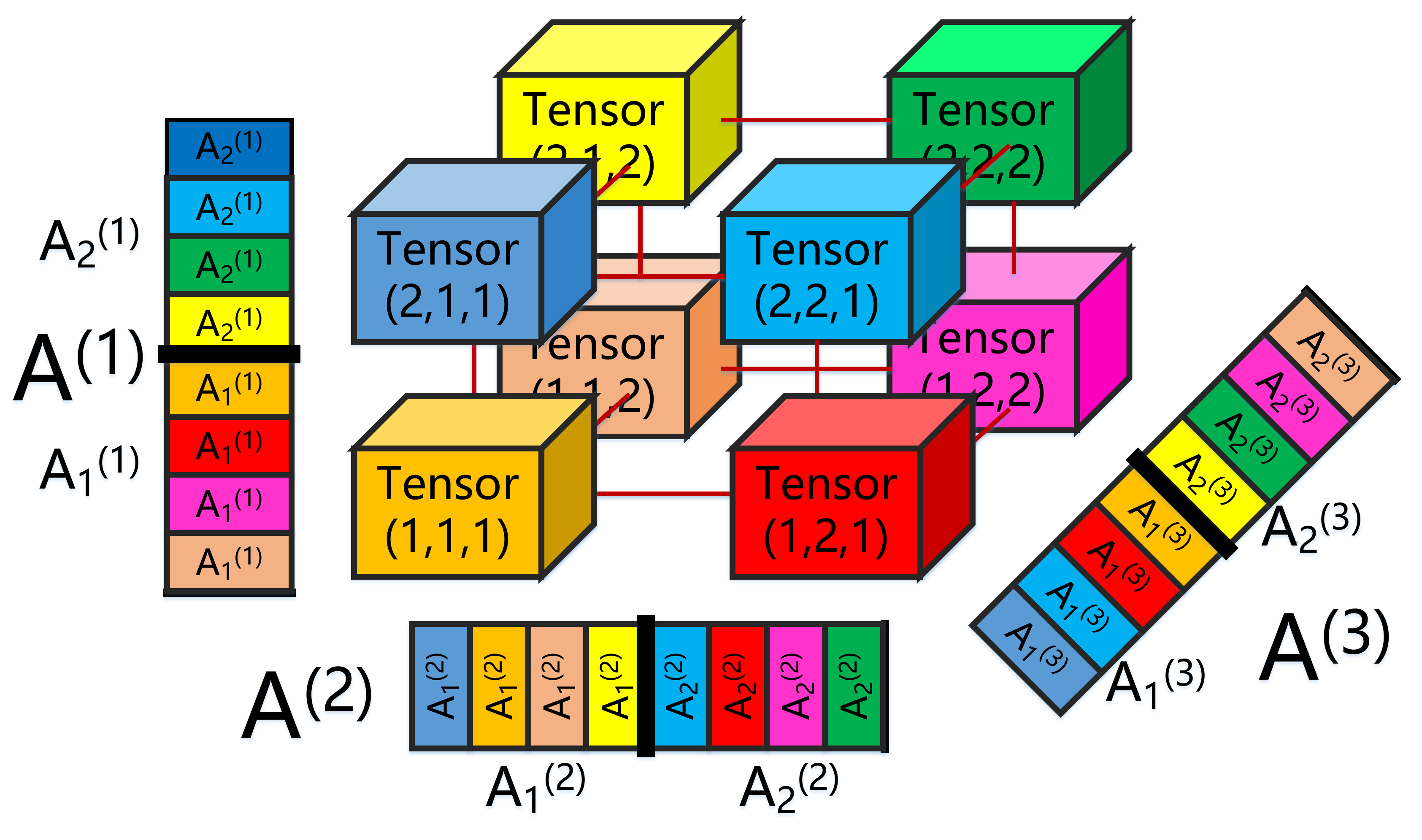}
	\caption{An example of workload partitioning of a tensor.}
	\label{Multi-GPU}
\end{figure}

\begin{table}[htbp]
	\centering
	\footnotesize
	\setlength{\abovecaptionskip}{0pt}
	\caption{Real World Datasets}
	\begin{tabular}{c|ccc}
		\hline
		\hline
						& Netflix        & Yahoo!Music      & Amazon Reviews    \\
		\hline
		$I_1$           & 480, 189       & 1, 000, 990      & 4, 821, 207       \\
		$I_2$           & 17, 770        & 624, 961         & 1, 774, 269       \\
		$I_3$           & 2, 182         & 3, 075           & 1, 805, 187       \\
		$|\Omega|$      & 99, 072, 112   & 250, 272, 286    & 1, 741, 809, 018  \\
		$|\Gamma|$      & 1, 408, 395    & 2, 527, 989      & -                 \\
		Max Value       & 5              & 5                & -                 \\
		Min Value       & 1              & 0.025            & -                 \\
		\hline
		\hline
	\end{tabular}
	\label{data_sets_real}
\end{table}

\begin{table}[htbp]
	\centering
	\scriptsize
	\setlength{\abovecaptionskip}{0pt}
	\caption{Synthesis Datasets}
	\begin{tabular}{c|cccc}
		\hline
		\hline
						& Order-3            & Order-4         & Order-5        & Order-6 to Order-10      \\
		\hline
		$I$             & 10, 000            & 10, 000         & 10, 000        & 10, 000                  \\
		$|\Omega|$      & 1G                 & 800M            & 600M           & 100M                     \\
		Max Value       & 5                  & 5               & 5              & 5                        \\
		Min Value       & 1                  & 1               & 1              & 1                        \\
		\hline
		\hline
	\end{tabular}
	\label{data_sets_simulation}
\end{table}

\begin{table}[htbp]
	\centering
	\tiny
	\setlength{\abovecaptionskip}{0pt}
	\caption{The initial learning rate and regularization parameters of cuTucker on Netflix and Yahoo!Music datasets.}
	\begin{tabular}{c|cccc|cccc}
		\hline
		\hline
					  & \multicolumn{4}{c|}{Netflix}                & \multicolumn{4}{c}{Yahoo!Music}              \\
		\hline
		$J$           & 4       & 8         & 16        & 32        & 4        & 8         & 16        & 32        \\
		$\alpha_a$    & 0.0060  & 0.0045    & 0.0025    & 0.0005    & 0.0045   & 0.0040    & 0.0025    & 0.0005    \\
		$\beta_a$     & 0.05    & 0.05      & 0.05      & 0.05      & 0.2      & 0.2       & 0.2       & 0.2       \\	
		$\lambda_a$   & 0.01    & 0.01      & 0.01      & 0.01      & 0.01     & 0.01      & 0.01      & 0.01      \\
		$\alpha_g$    & -       & 0.0045    & 0.0035    & 0.0025    & -        & 0.0045    & 0.0035    & 0.0025    \\
		$\beta_g$     & -       & 0.1       & 0.1       & 0.1       & -        & 0.1       & 0.1       & 0.1       \\
		$\lambda_g$   & -       & 0.01      & 0.01      & 0.01      & -        & 0.01      & 0.01      & 0.01      \\
		\hline
		\hline
	\end{tabular}
	\label{the_parameters_of_cuTucker}
\end{table}

\begin{table}[htbp]
	\centering
	\tiny
	\setlength{\abovecaptionskip}{0pt}
	\caption{The initial learning rate and regularization parameters of cuFastTucker on Netflix and Yahoo!Music datasets.}
	\begin{tabular}{c|cccc|cccc}
		\hline
		\hline
					  & \multicolumn{4}{c|}{Netflix}            & \multicolumn{4}{c}{Yahoo!Music}             \\
		\hline
		$J$           & 4       & 8        & 16      & 32       & 4        & 8        & 16        & 32        \\
		$\alpha_a$    & 0.009   & 0.0060   & 0.0036  & 0.0020   & 0.0070   & 0.0060   & 0.035     & 0.0018    \\
		$\beta_a$     & 0.05    & 0.05     & 0.05    & 0.05     & 0.2      & 0.2      & 0.2       & 0.2       \\	
		$\lambda_a$   & 0.01    & 0.01     & 0.01    & 0.01     & 0.01     & 0.01     & 0.01      & 0.01      \\
		$\alpha_b$    & -       & 0.0045   & 0.0035  & 0.0025   & -        & 0.0045   & 0.0035    & 0.0025    \\
		$\beta_b$     & -       & 0.1      & 0.1     & 0.1      & -        & 0.1      & 0.1       & 0.1       \\
		$\lambda_b$   & -       & 0.01     & 0.01    & 0.01     & -        & 0.01     & 0.01      & 0.01      \\
		\hline
		\hline
	\end{tabular}
	\label{the_parameters_of_cuFastTucker}
\end{table}

\begin{table}[htbp]
	\centering
	\scriptsize
	\setlength{\abovecaptionskip}{0pt}
	\caption{The time (Seconds) overhead influence for updating the core tensor of cuTucker with the intermediate matrix on shared memory and global memory.}
	\begin{tabular}{c|cc|cc}
		\hline
		\hline
					    & \multicolumn{2}{c|}{Netflix}   & \multicolumn{2}{c}{Yahoo!Music}     \\
		\hline
		$J$             & 4             & 8              & 4                & 8                \\
		Shared Memory   & 0.274793      & 3.213333       & 0.707503         & 8.155324         \\
		Global Memory   & 0.382566      & 2.751087       & 1.227321         & 9.462652         \\	
		\hline
		\hline
	\end{tabular}
	\label{cuTucker_time}
\end{table}

\begin{table}[htbp]
	\centering
	\tiny
	\setlength{\abovecaptionskip}{0pt}
	\caption{The time (Seconds) overhead influence for updating the factor matrices of cuFastTucker with the core tensor on shared memory and global memory(NVIDIA Tesla P100 GPU).}
	\begin{tabular}{c|ccc|ccc}
		\hline
		\hline
						& \multicolumn{3}{c|}{Netflix}                  & \multicolumn{3}{c}{Yahoo!Music}               \\
		\hline
		$J$/$R_{core}$  & 4/4           & 8/4            & 8/8          & 4/4          & 8/4          & 8/8             \\
		Shared Memory   & 0.188097      & 0.415349       & 0.791291     & 0.634769     & 1.040703     & 1.989078        \\
		Global Memory   & 0.192428      & 0.411546       & 0.781091     & 0.672929     & 1.037247     & 1.969362        \\	
		\hline
		\hline
	\end{tabular}
	\label{cuTucker_time_a}
\end{table}

\begin{table}[htbp]
	\centering
	\tiny
	\setlength{\abovecaptionskip}{0pt}
	\caption{The time (Seconds) overhead influence for updating the core tensor of cuFastTucker with the core tensor on shared memory and global memory(NVIDIA Tesla P100 GPU).}
	\begin{tabular}{c|ccc|ccc}
		\hline
		\hline
			         	& \multicolumn{3}{c|}{Netflix}                   & \multicolumn{3}{c}{Yahoo!Music}              \\
		\hline
		$J$/$R_{core}$  & 4/4           & 8/4            & 8/8           & 4/4          & 8/4         & 8/8             \\
		Shared Memory   & 0.230835      & 0.501182       & 0.901919      & 0.630144     & 1.241826    & 2.248589        \\
		Global Memory   & 0.231354      & 0.509430       & 0.921666      & 0.642367     & 1.272383    & 2.314534        \\	
		\hline
		\hline
	\end{tabular}
	\label{cuTucker_time_b}
\end{table}

\begin{table}[htbp]
	\centering
	\tiny
	\setlength{\abovecaptionskip}{0pt}
	\caption{The time (Seconds) overhead influence for updating the factor matrices of cuFastTucker with the core tensor on shared memory and global memory(NVIDIA TITAN RTX GPU).}
	\begin{tabular}{c|ccc|ccc}
		\hline
		\hline
		& \multicolumn{3}{c|}{Netflix}                  & \multicolumn{3}{c}{Yahoo!Music}               \\
		\hline
		$J$/$R_{core}$  & 8/8           & 16/8           & 32/8         & 8/8          & 16/8         & 32/8            \\
		Shared Memory   & 0.294781      & 0.578777       & 1.217618     & 0.749726     & 1.462329     & 3.076224        \\
		Global Memory   & 0.247959      & 0.500432       & 1.083689     & 0.641186     & 1.269278     & 2.756746        \\	
		\hline
		\hline
	\end{tabular}
	\label{cuTucker_time_a_titan}
\end{table}

\begin{table}[htbp]
	\centering
	\tiny
	\setlength{\abovecaptionskip}{0pt}
	\caption{The time (Seconds) overhead influence for updating the core tensor of cuFastTucker with the core tensor on shared memory and global memory(NVIDIA TITAN RTX GPU).}
	\begin{tabular}{c|ccc|ccc}
		\hline
		\hline
		& \multicolumn{3}{c|}{Netflix}                   & \multicolumn{3}{c}{Yahoo!Music}              \\
		\hline
		$J$/$R_{core}$  & 8/8           & 16/8           & 32/8          & 8/8          & 16/8        & 32/8            \\
		Shared Memory   & 0.400963      & 0.737091       & 1.495815      & 1.032147     & 1.868572    & 3.792367        \\
		Global Memory   & 0.397872      & 0.711190       & 1.420703      & 1.029504     & 1.806532    & 3.613708        \\	
		\hline
		\hline
	\end{tabular}
	\label{cuTucker_time_b_titan}
\end{table}

\begin{figure*}[htbp]
  \centering
    \subfigure[RMSE on Netflix]{
    \label{fig501 (a1)} 
    \includegraphics[width=2.8in]{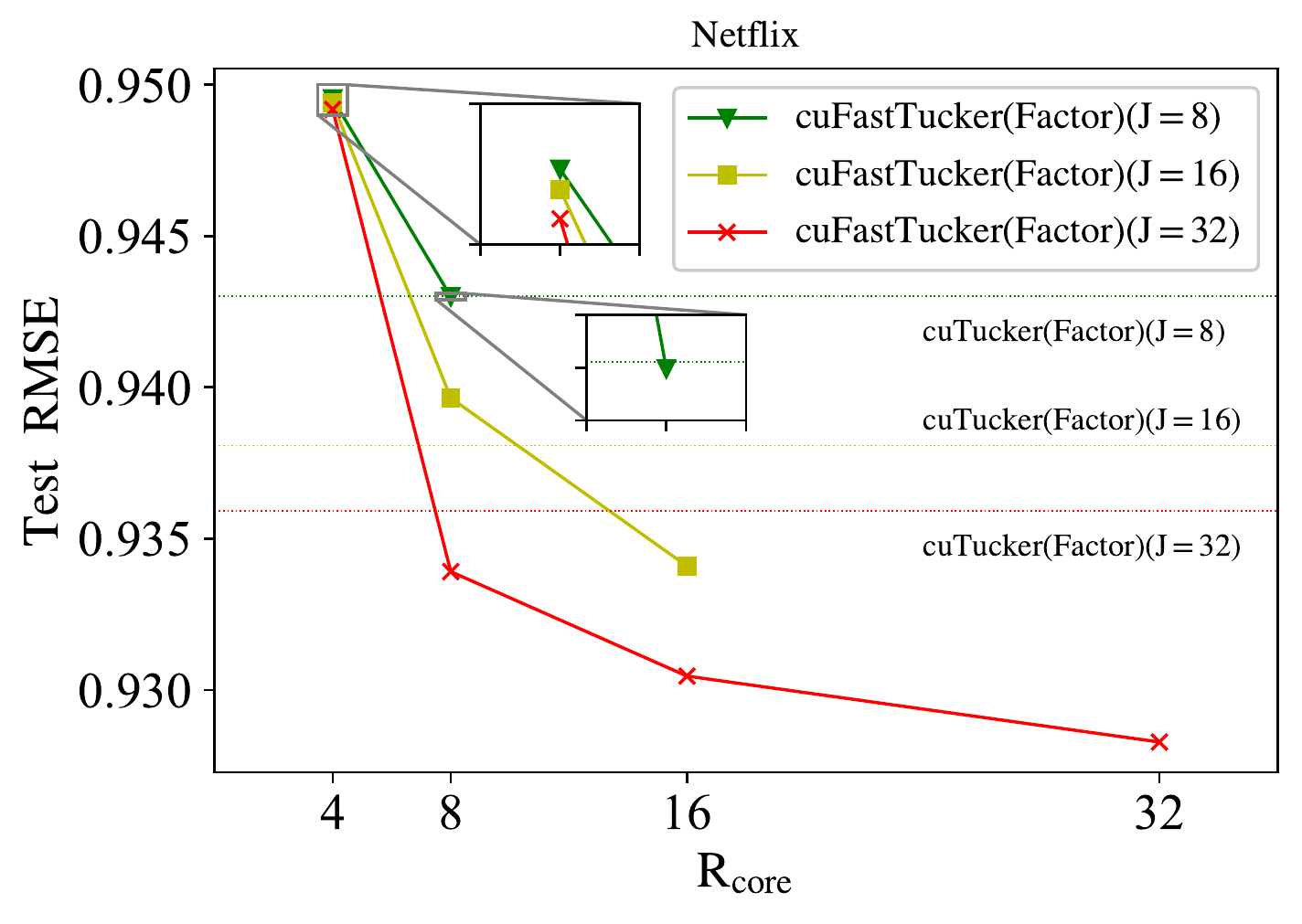}}
    ~
    \subfigure[MAE on Netflix]{
    \label{fig501 (a2)} 
    \includegraphics[width=2.8in]{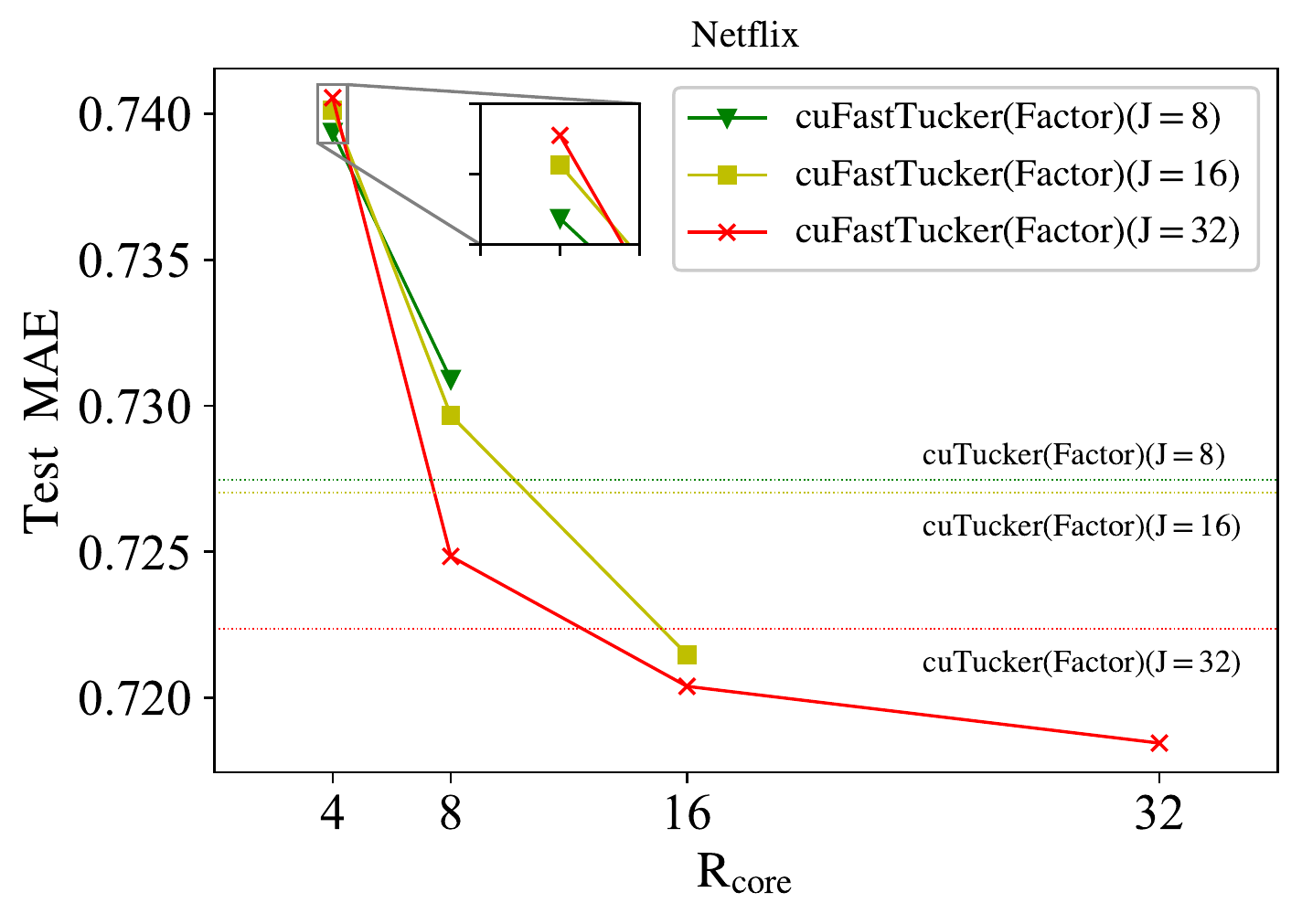}}
    ~
    \subfigure[RMSE on Yahoo!Music]{
    \label{fig501 (a3)} 
    \includegraphics[width=2.8in]{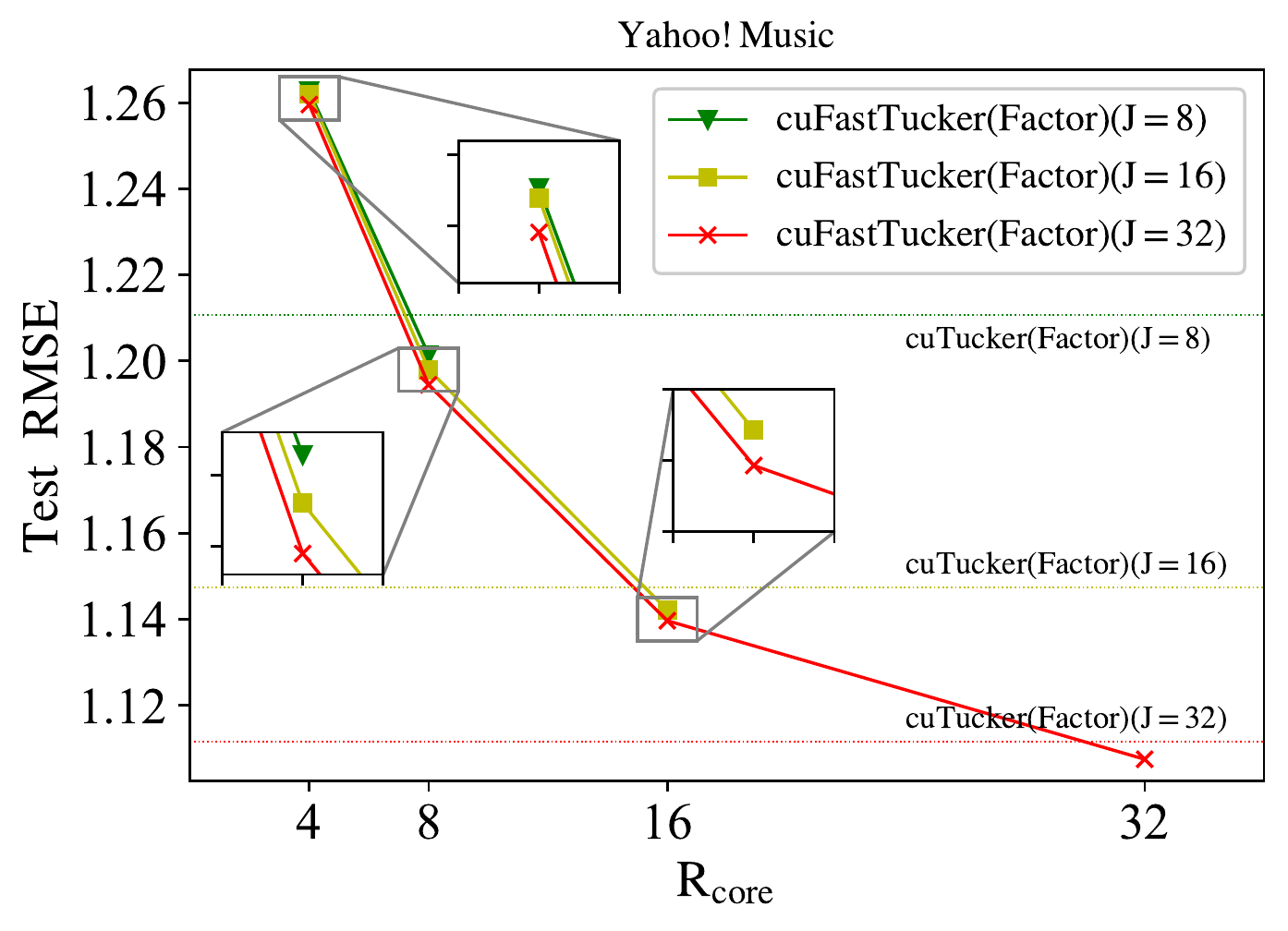}}
        ~
      \subfigure[MAE on Yahoo!Music]{
      \label{fig501 (a4)} 
    \includegraphics[width=2.8in]{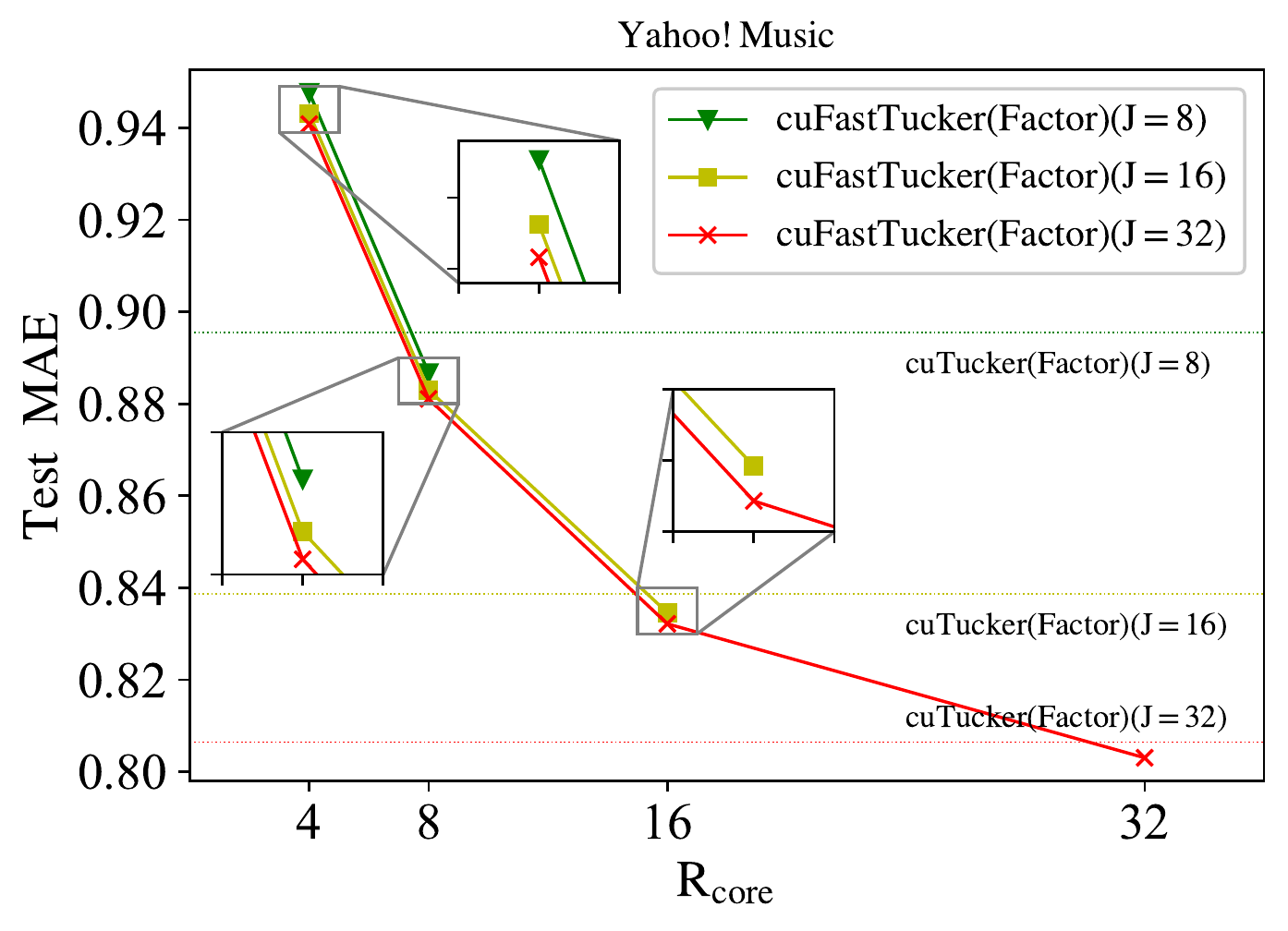}}
         \caption{The accuracy influence of the baseline algorithm (cuTucker) and cuFastTucker in various parameters $R_{core}=\{8, 16, 32\}$ with fixed $J_{n}=\{8, 16, 32\}$, $n\in\{N\}$.}
    \label{Impact_of_different_parameters}
\end{figure*}

\begin{figure*}[htbp]
  \centering
    \subfigure[RMSE on Netflix]{
    \label{fig502 (a1)} 
    \includegraphics[width=2.8in]{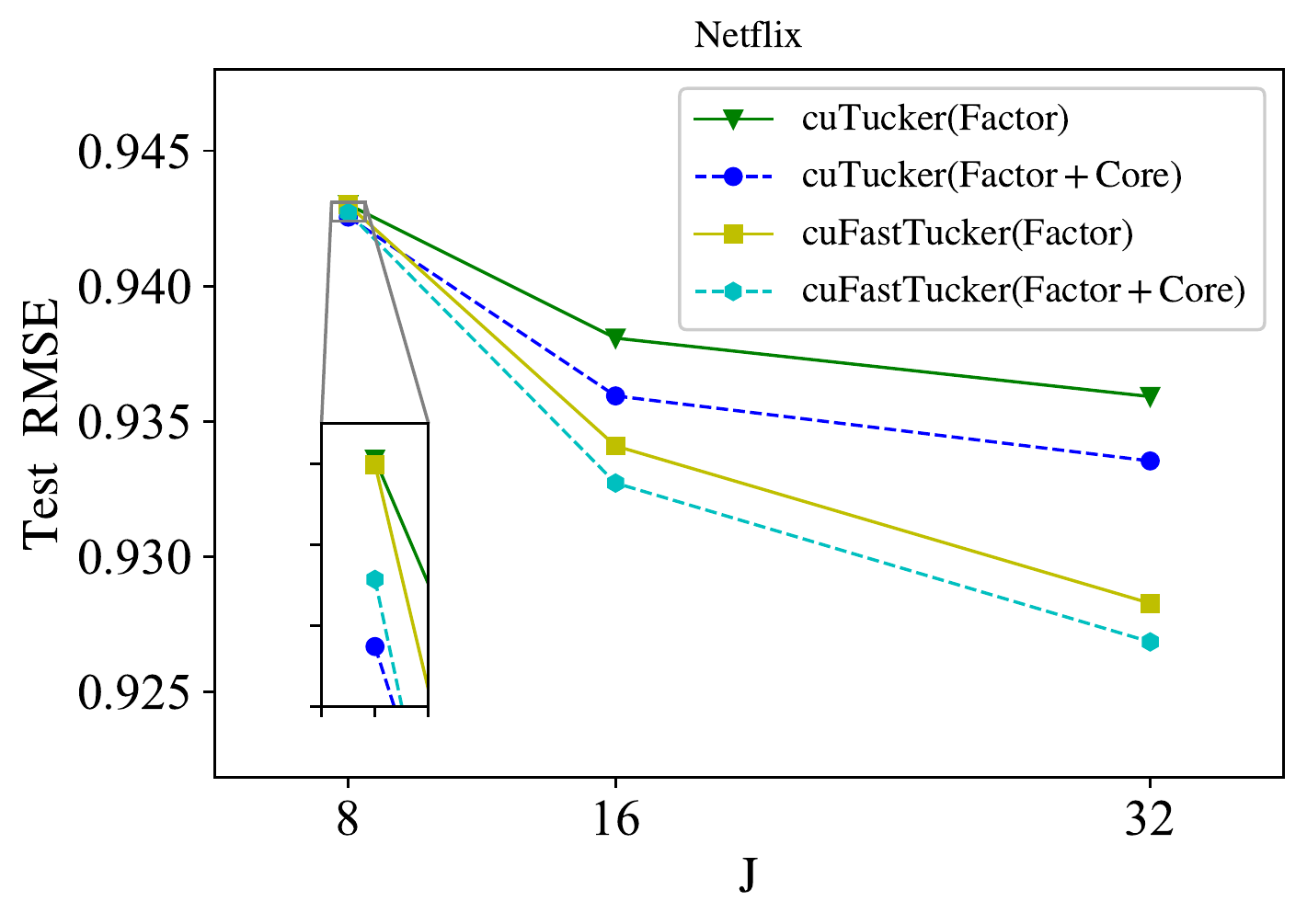}}
    ~
    \subfigure[MAE on Netflix]{
    \label{fig502 (a2)} 
    \includegraphics[width=2.8in]{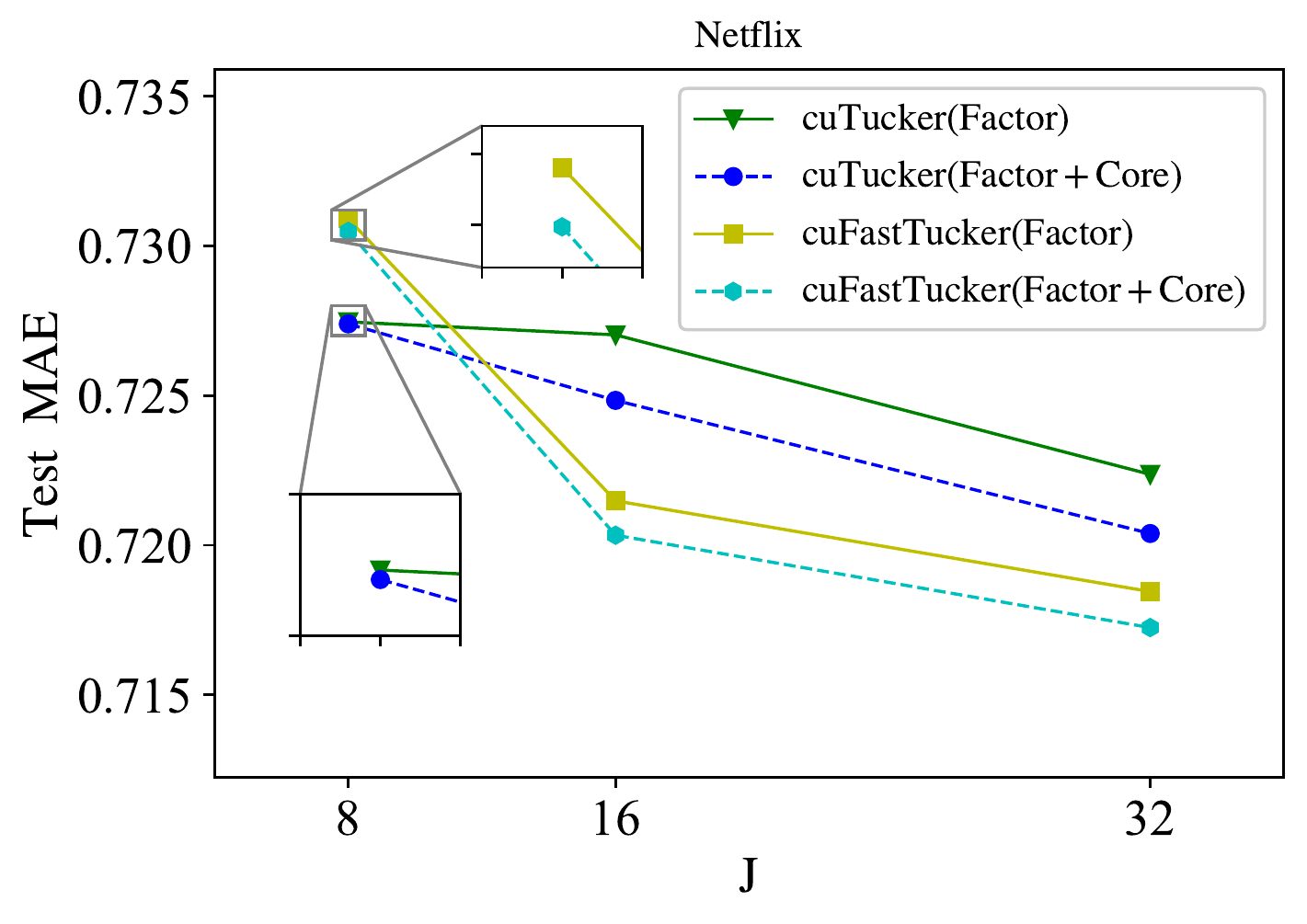}}
    ~
    \subfigure[RMSE on Yahoo!Music]{
    \label{fig502 (a3)} 
    \includegraphics[width=2.8in]{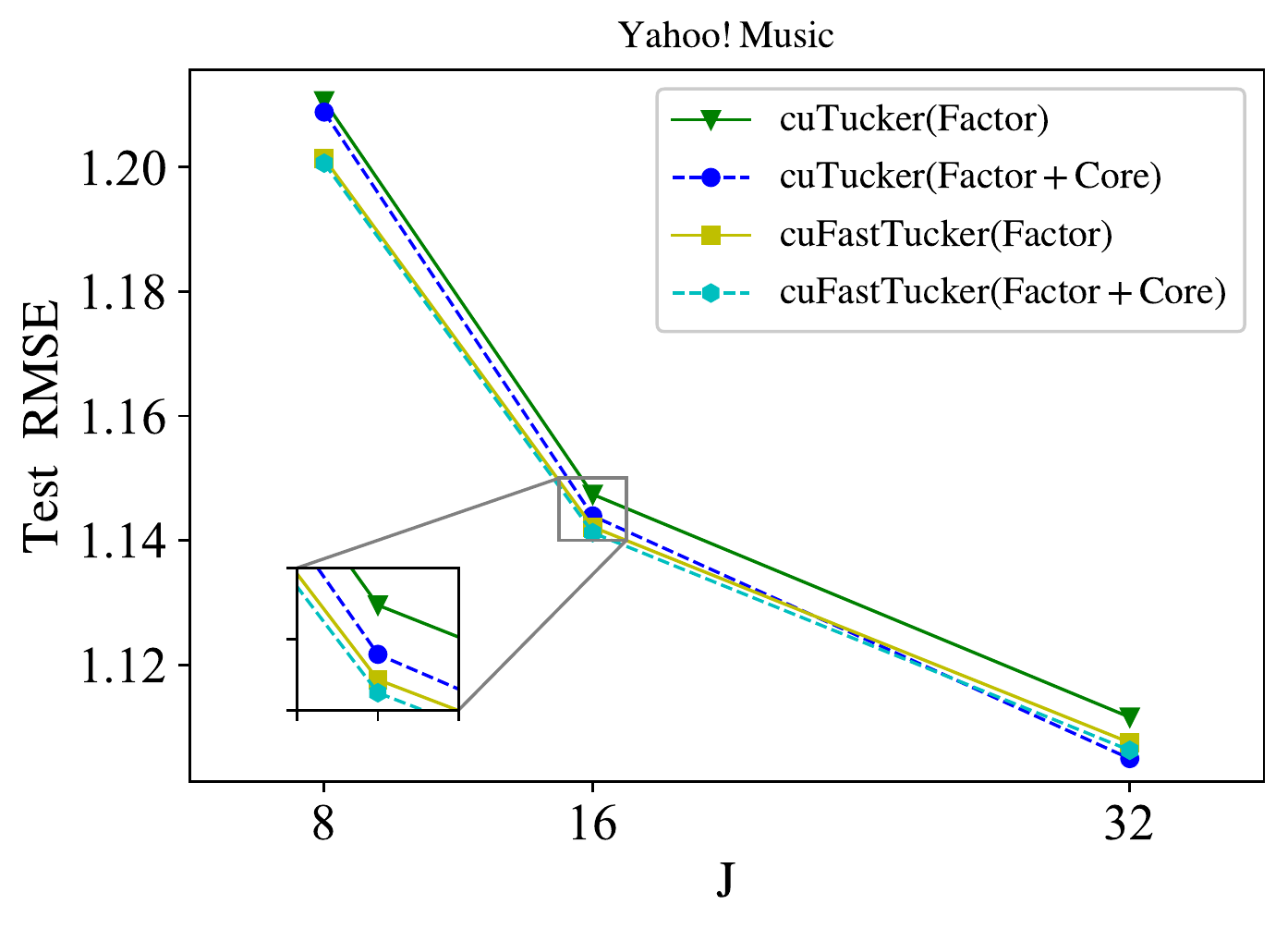}}
        ~
      \subfigure[MAE on Yahoo!Music]{
      \label{fig502 (a4)} 
    \includegraphics[width=2.8in]{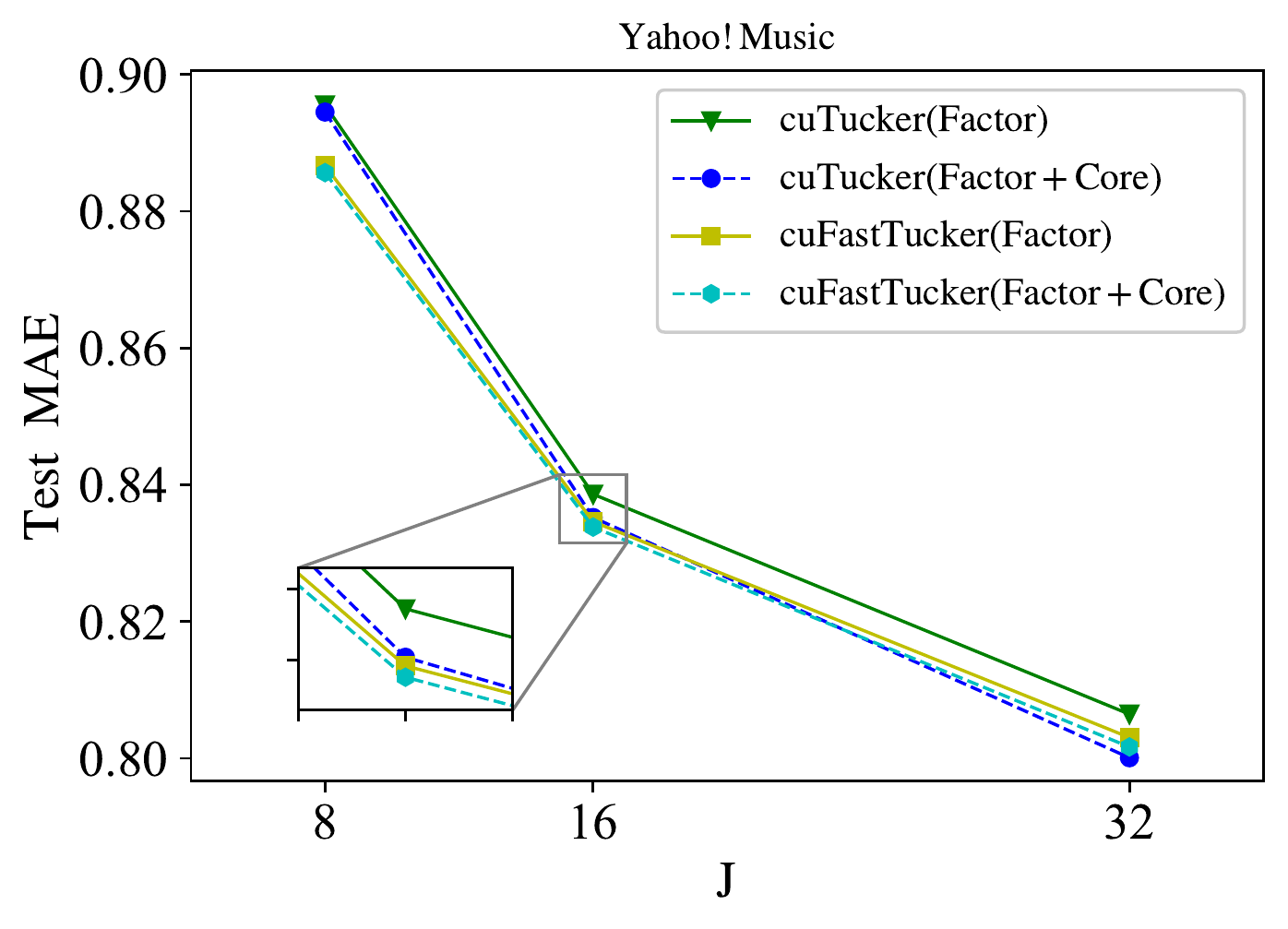}}
         \caption{The accuracy influence of the baseline algorithm (cuTucker) and cuFastTucker in various parameters $J_{n}=R_{core}=\{8, 16, 32\}$, $n\in\{N\}$.
         The experimental results illustrate two classes: (1) only updating factor matrix (Factor), (2) updating both the factor matrix and core tensor (Factor+Core).}
    \label{Impact_of_different_parameters_core}
\end{figure*}

\begin{figure*}
  \centering
    \subfigure[Time (Seconds) on Netflix]{
    \label{fig503 (a1)} 
    \includegraphics[width=2.8in]{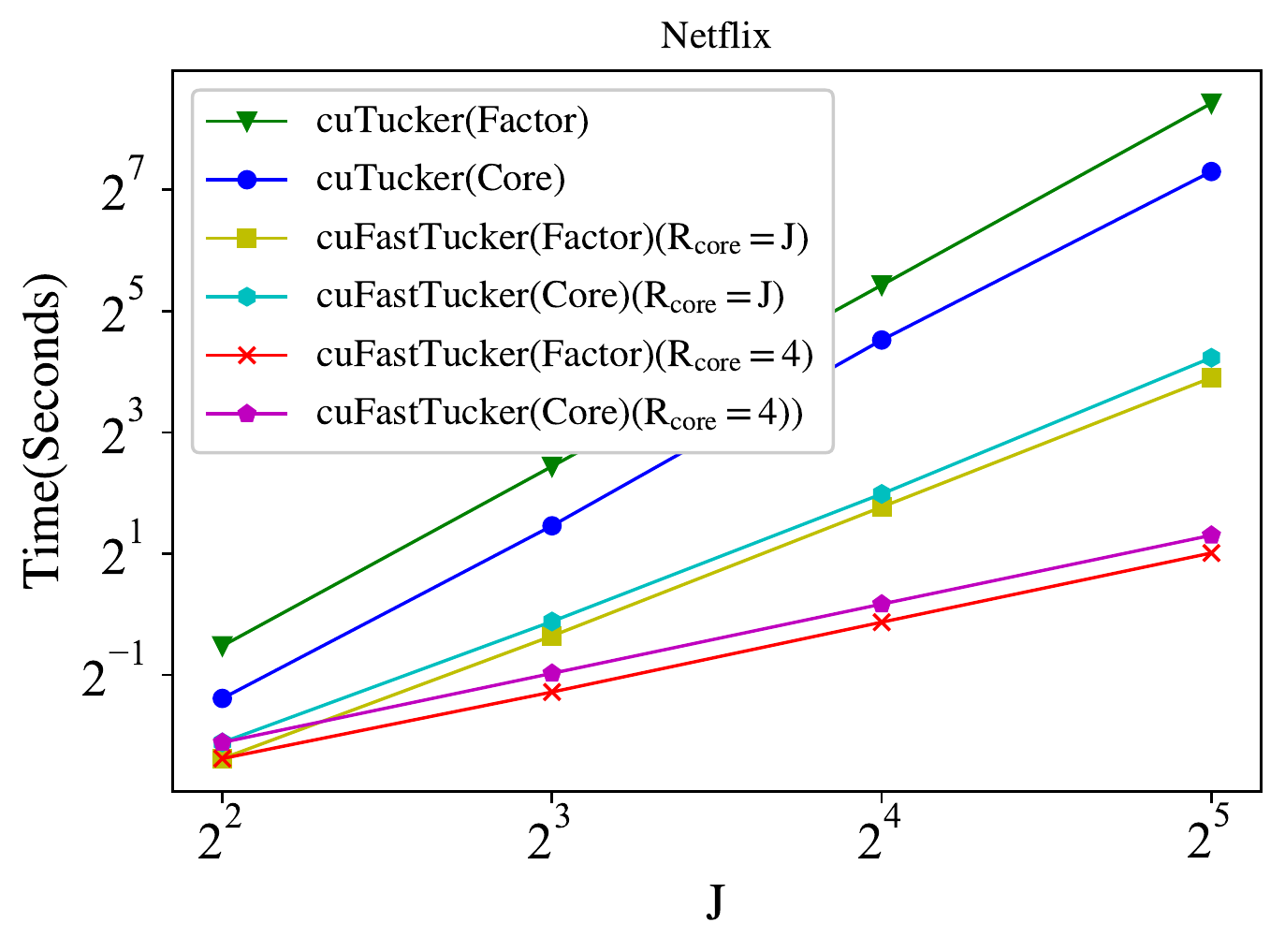}}
    ~
    \subfigure[Time (Seconds) on Yahoo!Music]{
    \label{fig503 (a2)} 
    \includegraphics[width=2.8in]{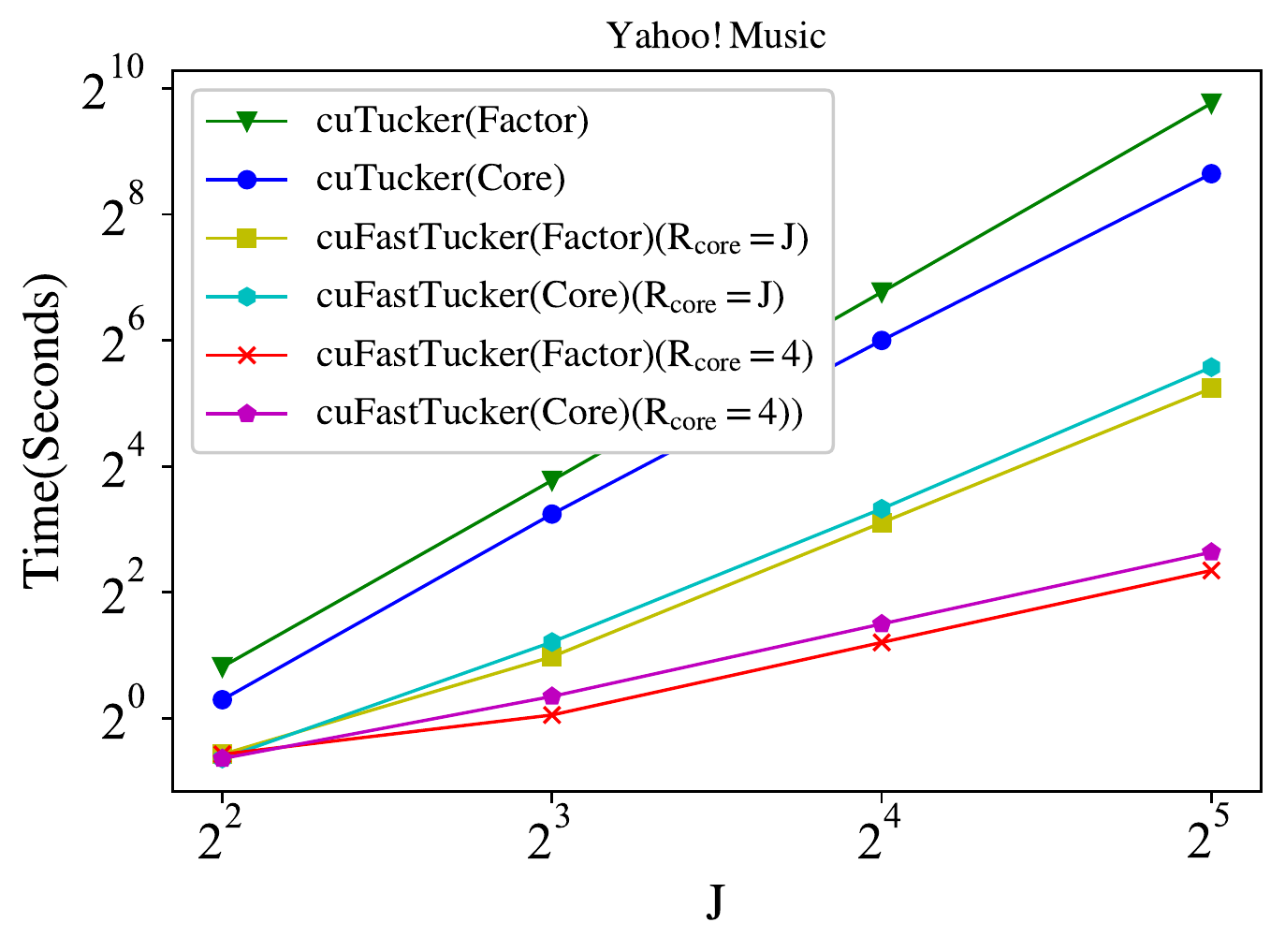}}
    ~
    \subfigure[Time (Seconds) on Netflix]{
    \label{fig503 (a3)} 
    \includegraphics[width=2.8in]{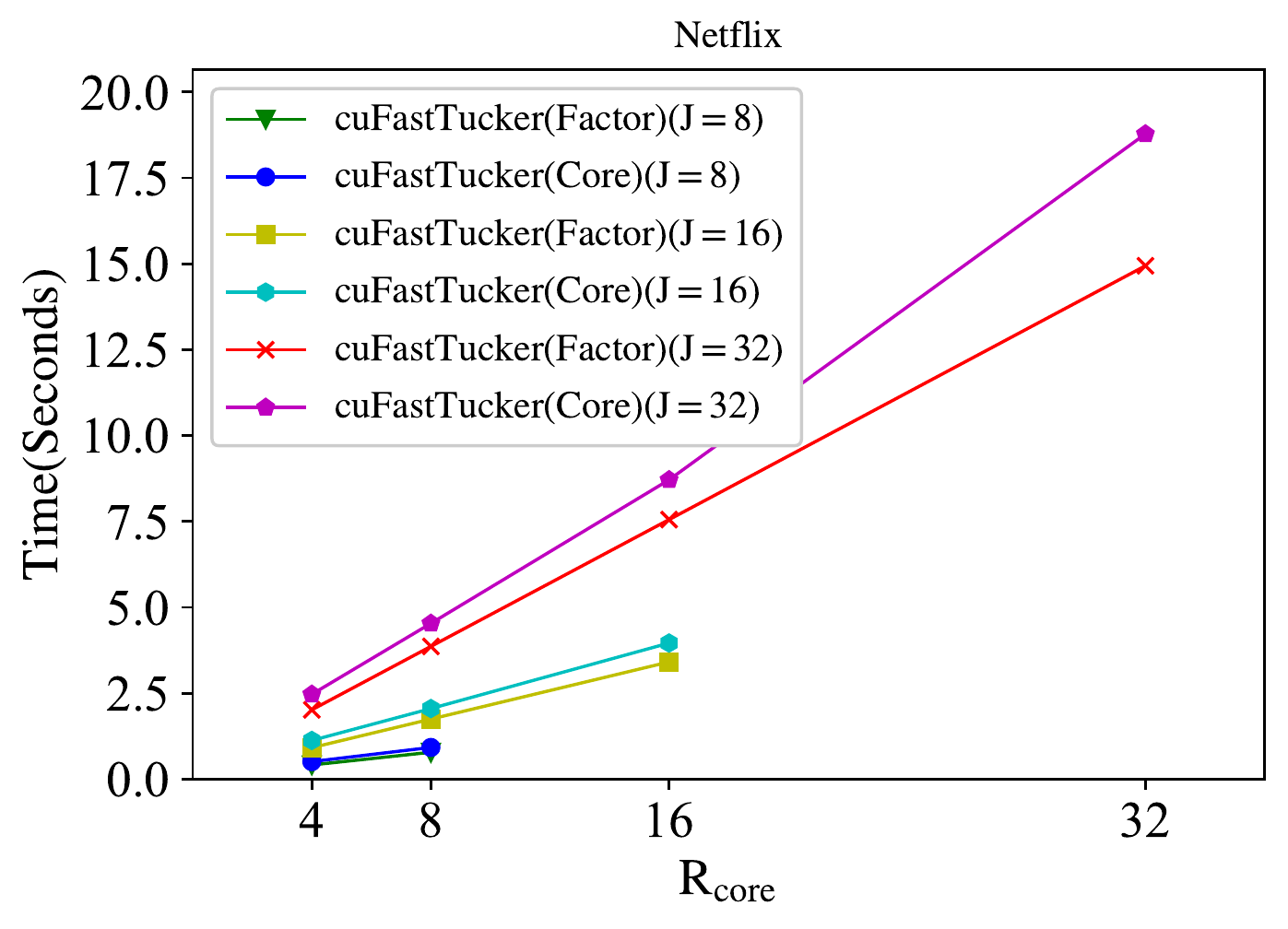}}
    ~
    \subfigure[Time (Seconds) on Yahoo!Music]{
    \label{fig503 (a4)} 
    \includegraphics[width=2.8in]{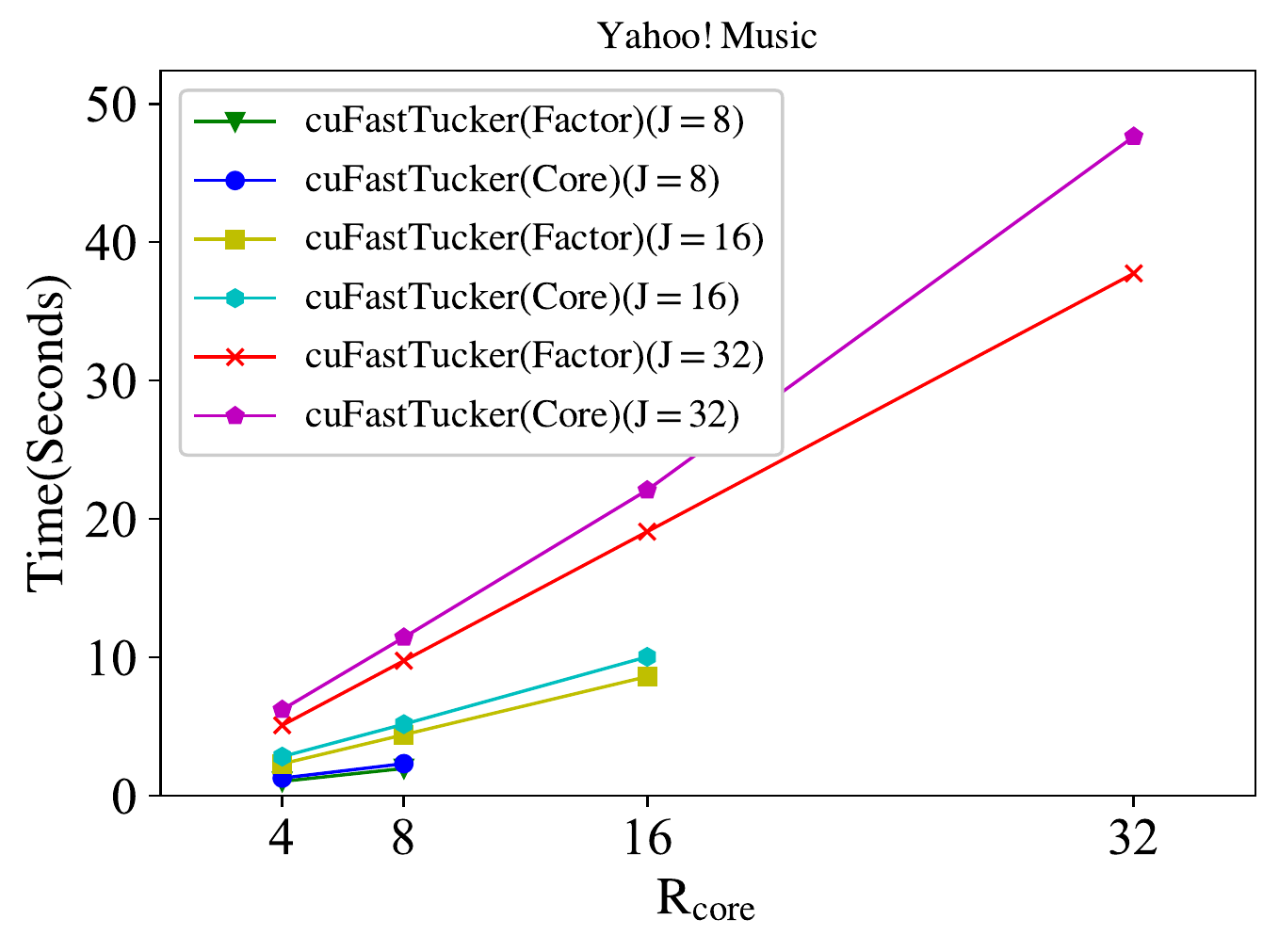}}
    \caption{
     Time (Seconds) on Netflix (a) and Yahoo!Music (b) in various parameters $J_{n}=\{2^2, 2^3, 2^4, 2^5\}$, $n\in\{N\}$, and
     Time (Seconds) on Netflix (c) and Yahoo!Music (d) in various parameters $R_{core}=\{4, 8, 16, 32\}$.}
    \label{Comparison_of_others_approaches_time_dimen}
\end{figure*}
\section{Experiments} \label{section5}
This section mainly answers the following main questions:
(1) the influence of the parameters $\{J_{n}, R_{core}|n\in\{N\}\}$, and the computational overhead of each part of cuFastTucker (Section \ref{section51});
(2) the accuracy performance of cuFastTucker (Section \ref{section52});
(3) the scalability of cuFastTucker and the performance of cuFastTucker on multi-GPUs (Section \ref{section53}).
In this section, cuTucker, P-tucker\cite{8509325}, Vest\cite{2021VEST} and SGD\_Tucker\cite{2020SGD} are taken as the comparison methods.
cuTucker is denoted as the stochastic strategy for STD without Kruskal product for approximating the core tensor on GPU CUDA programming.
SGD\_Tucker\cite{2020SGD} is denoted as the stochastic strategy for STD without the reduction strategy of the computational overhead presented in Theorems \ref{theorem1} and \ref{theorem2}.
Vest\cite{2021VEST} is the parallel CCD method of STD and P-Tucker\cite{8509325} is the parallel ALS method of STD.

\subsection{Experimental Setup} \label{section51}
The experiments are ran on \textbf{Intel(R) Xeon(R) Silver 4110 CPU @ 2.10GHz} with 32 processors and 4 \textbf{NVIDIA Tesla P100 GPUs} with CUDA version 10.0.
The experimental datasets are divided into real and synthesis sets in Tables \ref{data_sets_real} and \ref{data_sets_simulation}, respectively.
The 3 real world datasets are listed as:
Netflix\footnotemark[1] \footnotetext[1]{https://www.netflixprize.com/},
Yahoo!Music\footnotemark[2] \footnotetext[2]{https://webscope.sandbox.yahoo.com/}
and Amazon Reviews \footnotemark[3] \footnotetext[3]{http://frostt.io/tensors/amazon-reviews/}.
The Netflix and Yahoo!Music datasets are used to get the baseline accuracy, and the Amazon Reviews dataset is used to test the ability of cuFastTucker on large-scale data.
The 8 synthesis datasets are produced to test the overall performance of cuFastTucker.
The accuracy is measured by $RMSE$ as
$\sqrt{\bigg(\sum_{(i,j)\in\Gamma}(v_{i,j}-\widetilde{v}_{i,j})^{2}\bigg)\bigg/|\Gamma|}$,
and $MAE$ as $\sum_{(i,j)\in\Gamma}|v_{i,j}-\widetilde{v}_{i,j}|\bigg/|\Gamma|$,
where $\Gamma$ is the test dataset.

The dynamic learning rate of cuTucker and cuFastTucker uses the strategy in \cite{ex165} as $\gamma_t=\frac{\alpha}{1+\beta \cdot t^{1.5}}$,
where the parameters $\{\alpha, \beta, t, \gamma_t\}$ represent the initial learning rate, adjusting parameter of the learning rate, the number of current iterations, and the learning rate at $t$ iterations, respectively.
The parameters in cuTucker and cuFastTucker are listed in Table \ref{the_parameters_of_cuTucker} and Table \ref{the_parameters_of_cuFastTucker}, respectively.
$\{\alpha_a, \beta_a, \lambda_a\}$ are denoted as the parameters for updating the feature matrix in cuTucker and cuFastTucker, and
$\{\alpha_g, \beta_g, \lambda_g\}$ and $\{\alpha_b, \beta_b, \lambda_b\}$ are denoted as the parameters for updating the core tensor in cuTucker and cuFastTucker, respectively.

\subsection{Influences of Various Parameters} \label{section52}
Except for the learning parameters $\{\alpha, \beta, t, \gamma_t\}$,
the value of $\{J_n, R_{core}| n\in\{N\}\}$ determine the training time overhead, space overhead and overall accuracy.
Fig. \ref{Impact_of_different_parameters} illustrates the accuracy influence of the baseline algorithm (cuTucker) and cuFastTucker in various parameters $R_{core}=\{8, 16, 32\}$ with fixed $J_{n}=\{8, 16, 32\}$, $n\in\{N\}$.
Fig. \ref{Impact_of_different_parameters_core} shows the accuracy influence of the baseline algorithm (cuTucker) and cuFastTucker in various parameters $J_{n}=R_{core}=\{8, 16, 32\}$, $n\in\{N\}$.
Fig. \ref{Comparison_of_others_approaches_time_dimen} depicts the training time overhead according to the varying of the value of $J_{n}$, $n\in\{N\}$ and $R_{core}$.
The influence of access time for shared memory and global memory on GPU is presented in Tables \ref{cuTucker_time}-\ref{cuTucker_time_b}.
The conclusion is listed as the following 2 parts:

(1) As the Figs. \ref{Impact_of_different_parameters} and \ref{Impact_of_different_parameters_core} show, both increasing the value of $R_{core}$ and $J_n$ can increase the accuracy (decrease the value of RMSE and MAE) and the value of $J_n, n\in\{N\}$ plays more influence on bigger dataset (Yahoo!Music) than smaller one (Netflix).
Fig. \ref{Impact_of_different_parameters} shows that when $R_{core}=J_n, n\in\{N\}$,
the accuracy performance (RMSE and MAE) of cuFastTucker will overwhelm the cuTucker,
which means that the core tensor has low-rank inherence and the compression rate is $(\sum_{n=1}^{N}R_{core}J_{n})/(\prod_{n=1}^{N}J_{n})$.
Fig. \ref{Impact_of_different_parameters_core} also illustrates the accuracy performance (RMSE and MAE) of updating factor matrix with core tensor (Factor+Core) and factor matrix only (Factor).
In bigger dataset (Yahoo!Music) the accuracy gap between the curves 'Factor+Core' and 'Factor' is much small than smaller volume one (Netflix).

(2) Figs. \ref{fig503 (a1)} and \ref{fig503 (a2)} compare the time overhead of cuTucker and cuFastTucker.
The time overhead comprises of the overhead of updating factor matrix (Factor) and core tensor (Core).
As the Figs. \ref{fig503 (a1)} and \ref{fig503 (a2)} show, the time overhead of both cuTucker (Factor) and cuTucker (Core) is much higher than cuFastTucker (Factor) and cuFastTucker (Core).
The reason is that the Kruskal approximation and overhead reduction by Theorems \ref{theorem1} and \ref{theorem2} can reduce the computational overhead of cuFastTucker.
As the Figs. \ref{fig503 (a1)} and \ref{fig503 (a2)}, the computational overhead of cuFastTucker is increased linearly with the increasing of the value of $R_{core}$ and $J_{n}$.
Shared memory and global memory are the main memory classes in GPUs.

Due to Kruskal approximation for core tensor of cuFastTucker, the approximation matrix $\textbf{B}^{(n)}, n\in\{N\}$ rather than the core tensor $\mathcal{G}$ and unfolding matrices $\textbf{G}^{(n)}, n\in\{N\}$ can be accessed on shared memory.
As the Tables \ref{cuTucker_time}-\ref{cuTucker_time_b_titan} show,
memory accessing speed on shared memory is slightly faster than global memory.
cuTucker has a huge intermediate matrix when updating the core tensor, 
which cannot be stored in shared memory when $J_n, n\in\{N\}$ is greater than 8. 
While cuFastTucker can store larger core tenso in shared memory, 
this situation is more obvious when the order is larger.
Further, cuFastTucker makes it easier to put memory hotspot core tensor into shared memory,
which will further reduce the computation time.
Since the shared memory and the L1 cache share a piece of on-chip memory, 
the increase of the shared memory will lead to a decrease in the L1 cache, resulting in a slight decrease in program performance.
This is more noticeable on the \textbf{NVIDIA TITAN RTX GPU} with larger caches.

\begin{figure*}[htbp]
  \centering
    \subfigure[RMSE on Netflix]{
    \label{fig505 (a1)} 
    \includegraphics[width=2.8in]{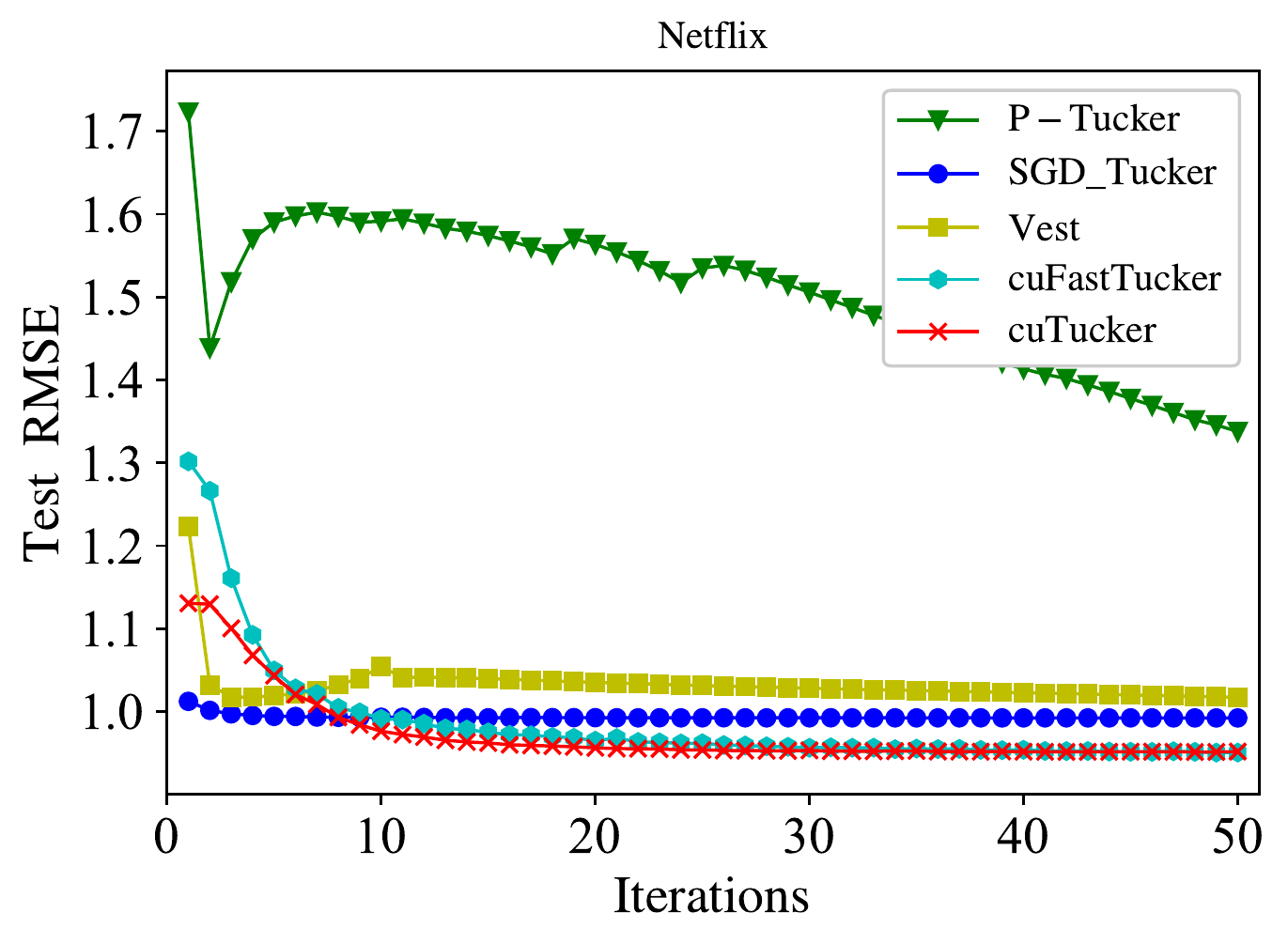}}
    ~
    \subfigure[MAE on Netflix]{
    \label{fig505 (a2)} 
    \includegraphics[width=2.8in]{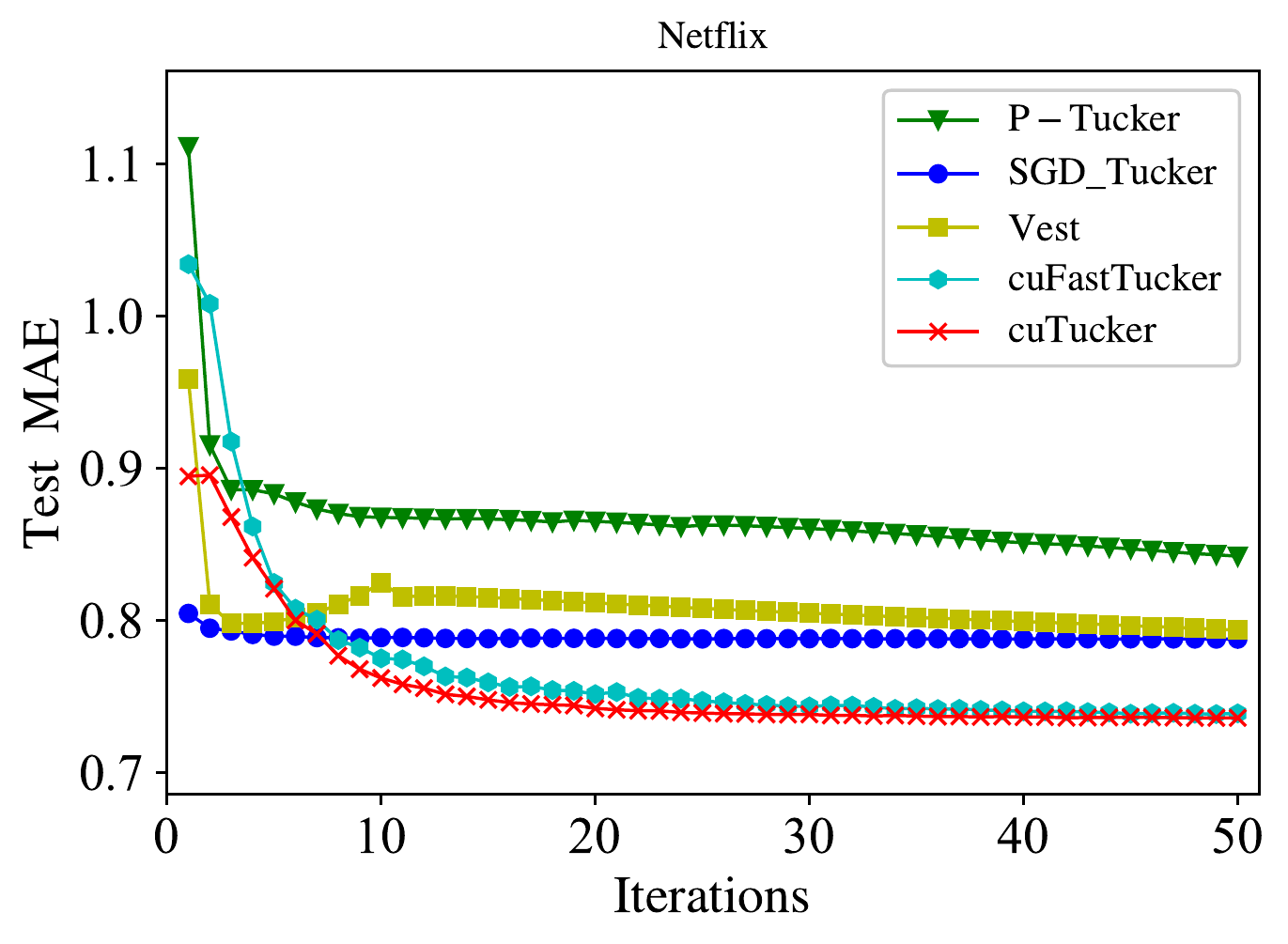}}
    ~
    \subfigure[RMSE on Yahoo!Music]{
    \label{fig505 (a3)} 
    \includegraphics[width=2.8in]{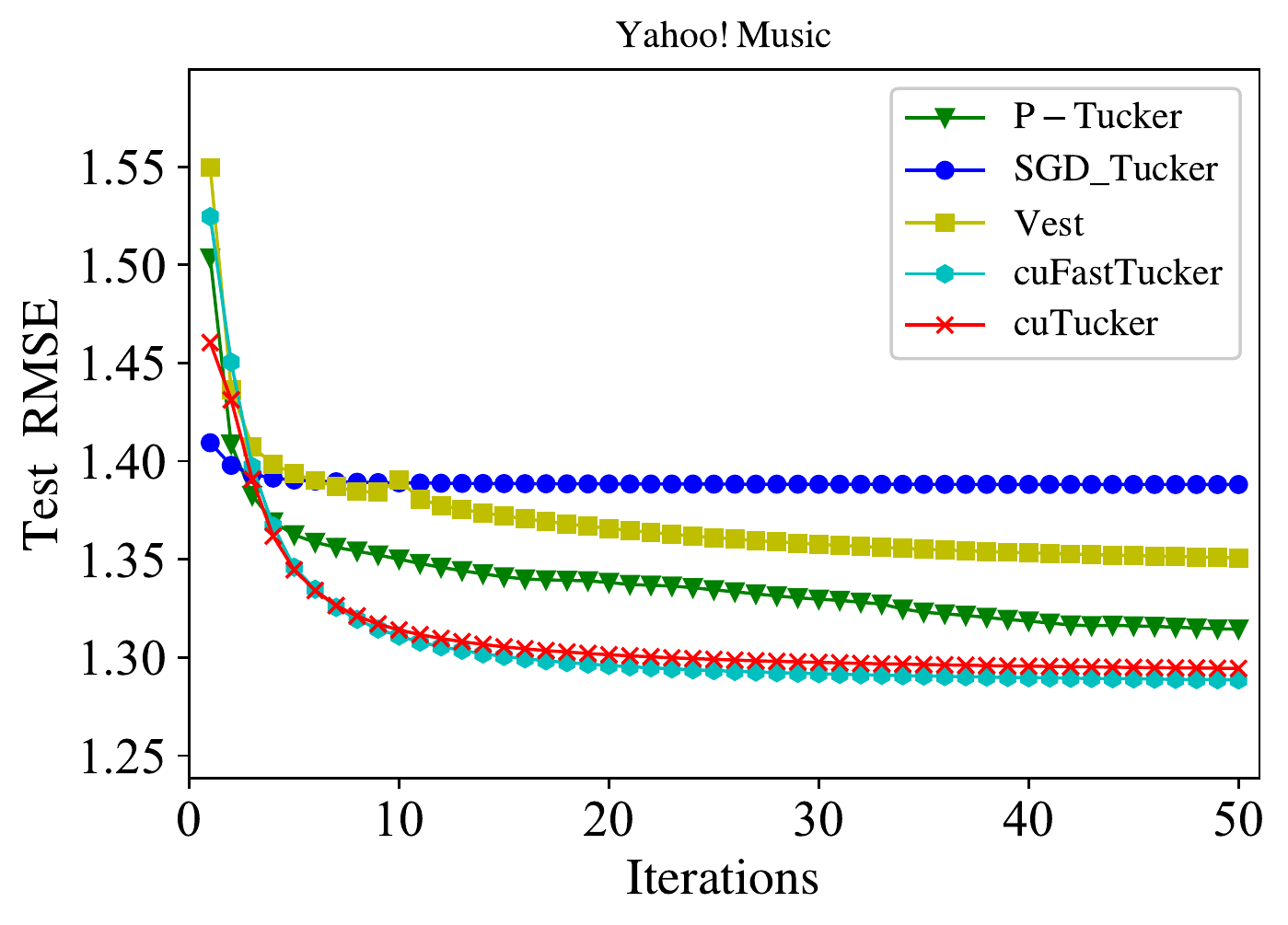}}
        ~
      \subfigure[MAE on Yahoo!Music]{
      \label{fig505 (a4)} 
    \includegraphics[width=2.8in]{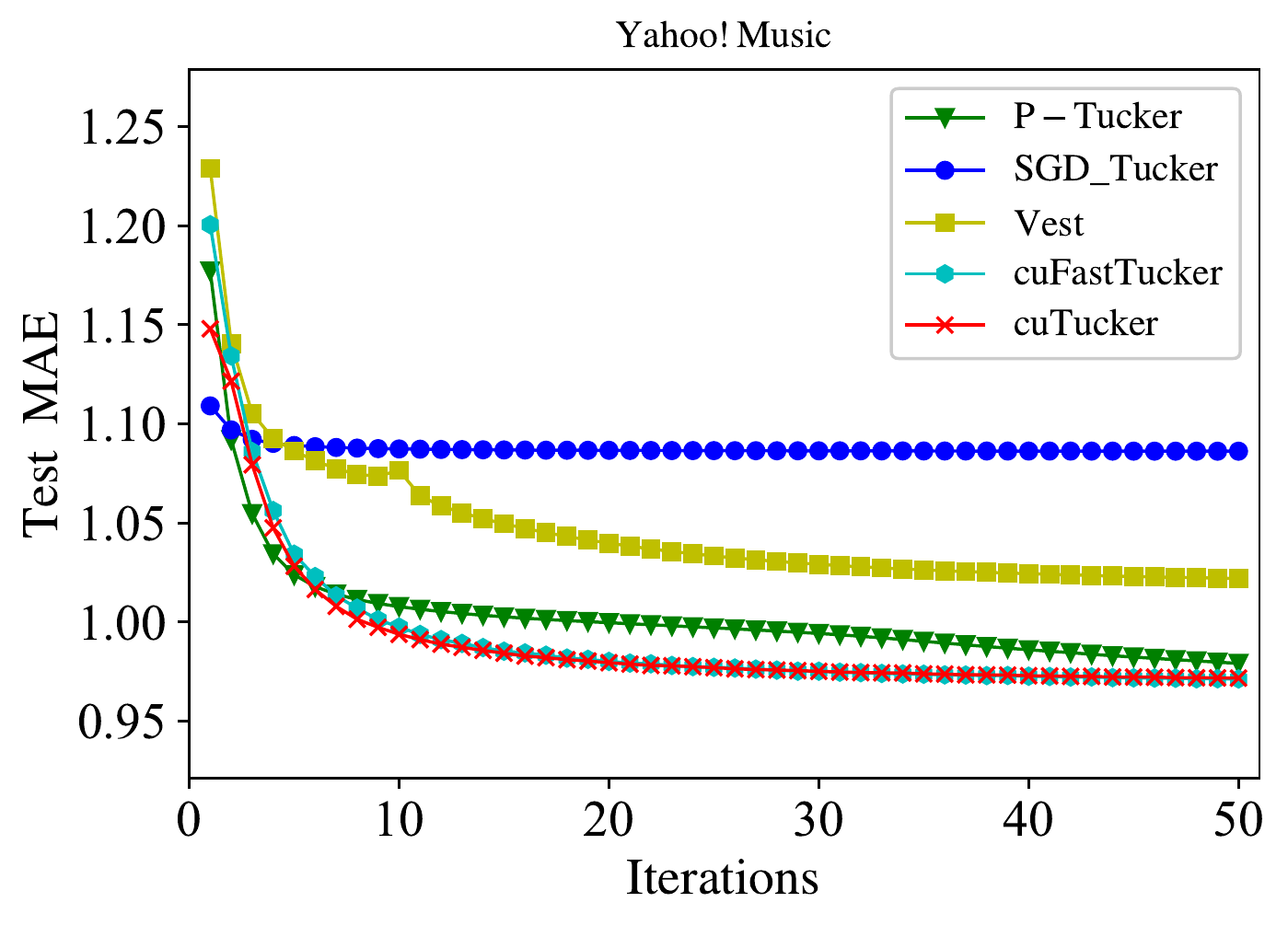}}
         \caption{Comparison with STOA approaches.}
    \label{Comparison_of_others_approaches}
\end{figure*}

\begin{table}[htbp]
	\centering
	\scriptsize
	\setlength{\abovecaptionskip}{0pt}
	\caption{The time overhead (Seconds) to update the low rank factor matrices in a single iteration of STOA algorithms.}
	\begin{tabular}{c|cc}
		\hline
		\hline
						& Netflix        & Yahoo!Music   \\
		\hline
		P-Tucker        & 20.539240 (106.73X)    & 132.954739 (197.57X)   \\
		Vest            & 75.574325 (392.74X)   & 503.045186 (747.54X)  \\
		SGD\_Tucker     & 12.111856 (62.94X)     & 29.152721 (43.32X)    \\
		cuTucker        & 0.696915 (3.62X)       & 1.761734 (2.61X)     \\
		\textbf{cuFastTucker}    & \textbf{0.192428}       & \textbf{0.672929}      \\
		\hline
		\hline
	\end{tabular}
	\label{time_others_approaches}
\end{table}

\subsection{Comparison with STOA Approaches} \label{section53}
Fig. \ref{Comparison_of_others_approaches} and Table \ref{time_others_approaches} depict the comparison of the convergence and accuracy performances and
the running time per iteration is presented on Table \ref{time_others_approaches}.
P-tucker\cite{8509325}, Vest\cite{2021VEST} and SGD\_Tucker\cite{2020SGD} are CPU based method.
Meanwhile, cuTucker and cuFastTucker are GPUs based methodologies.
To ensure the running fairness,
the CPU and GPU run independently without the interference of other works.
All comparison methodologies run on $J_{n}=4,n \in \{N\}$, and in cuFastTucker runs on $R_{core}=4$.
Some algorithms lack the update of the core tensor, and
we only compare the update of the factor matrix here.
As the Fig. \ref{Comparison_of_others_approaches} show, P-tucker has the fastest RMSE decreasing speed at the beginning but it goes slower in the later stage.
P-Tucker runs a bit of unstable.
All the methods can obtain the same overall accuracy after 20 iterations.
SGD\_Tucker runs much faster than P-Tucker and Vest but a bit slower than cuTucker and cuTastTucker.
The convergence speed and accuracy of cuTucker and cuFastTucker overwhelm the other three algorithms.
As show in Table \ref{time_others_approaches}, cuFastTucker and cuTucker get the top-2 and due to Kruskal approximation and
the overhead reduction of vectors multiplication in Theorems \ref{theorem1} and \ref{theorem2}, cuFastTucker obtain 3.62X and 2.61X speedup than cuTucker on Netflix and Yahoo!Music datasets, respectively.

\begin{figure*}[htbp]
  \centering
    \subfigure[Scalability on Synthesis Datasets]{
  	\label{fig508 (a1)}
	\includegraphics[width=2.0in]{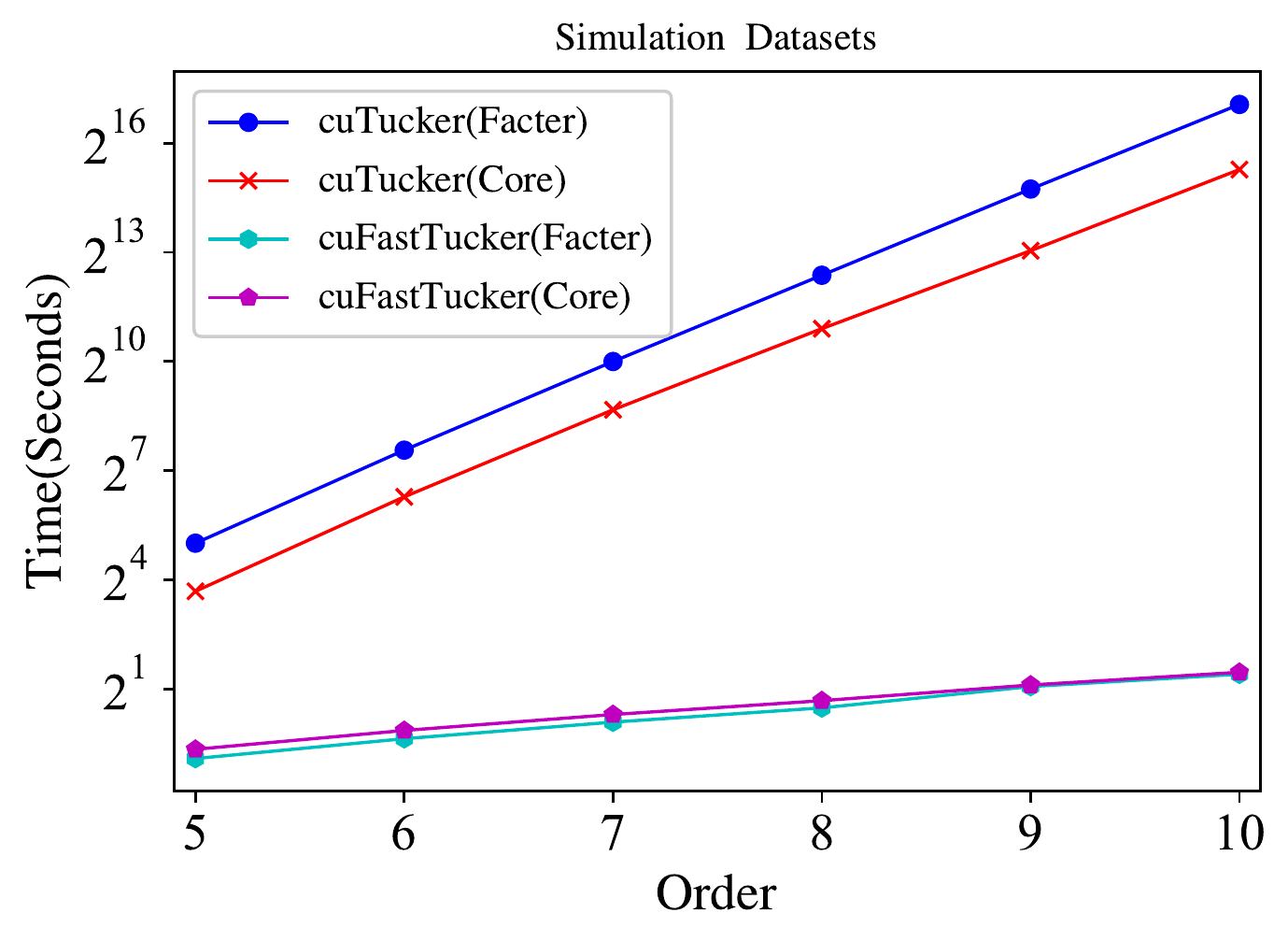}}
    ~
    \subfigure[Speedup on Netflix]{
    \label{fig508 (a2)} 
    \includegraphics[width=2.0in]{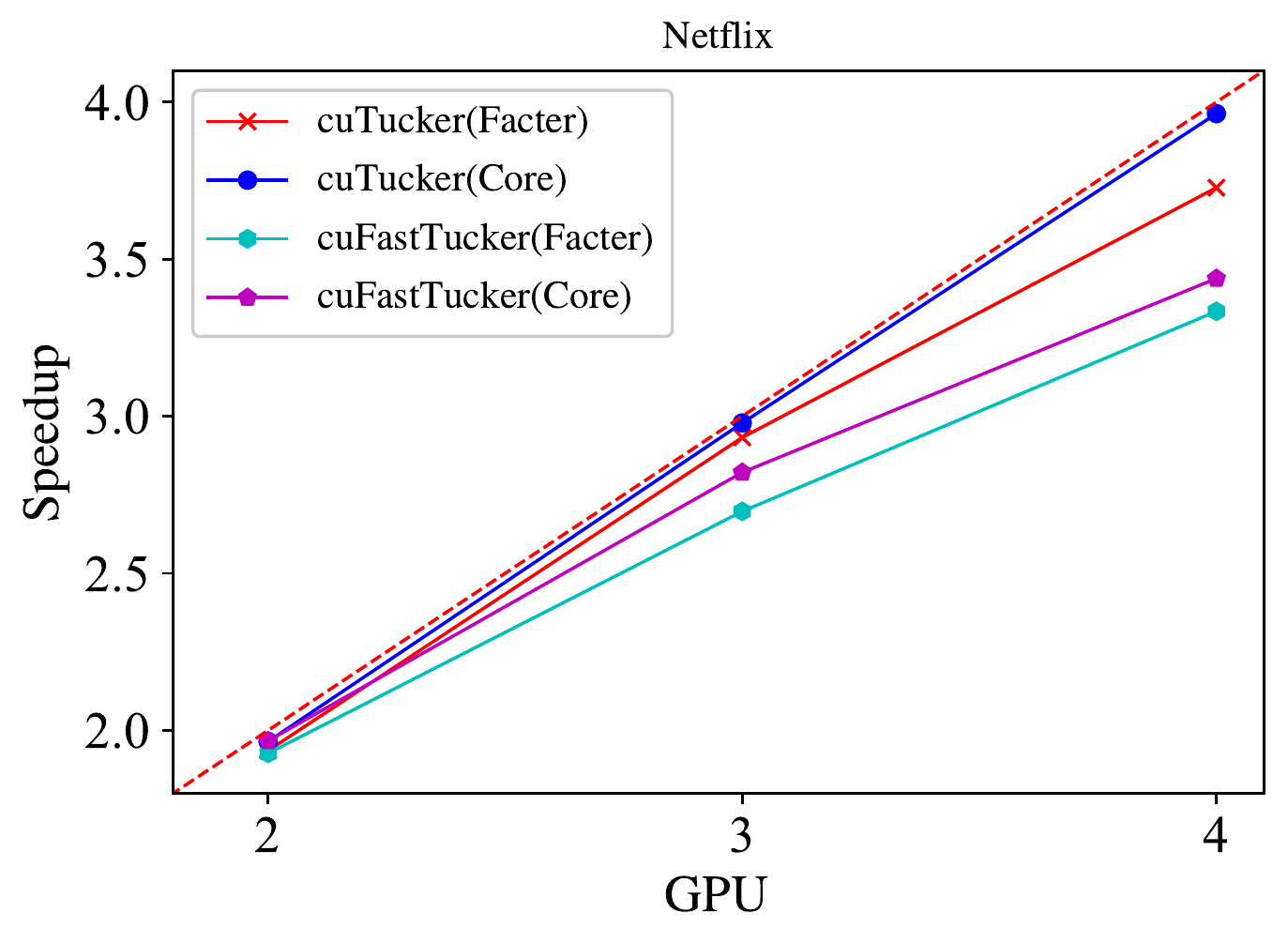}}
    ~
    \subfigure[Speedup on Yahoo!Music]{
    \label{fig508 (a3)} 
    \includegraphics[width=2.0in]{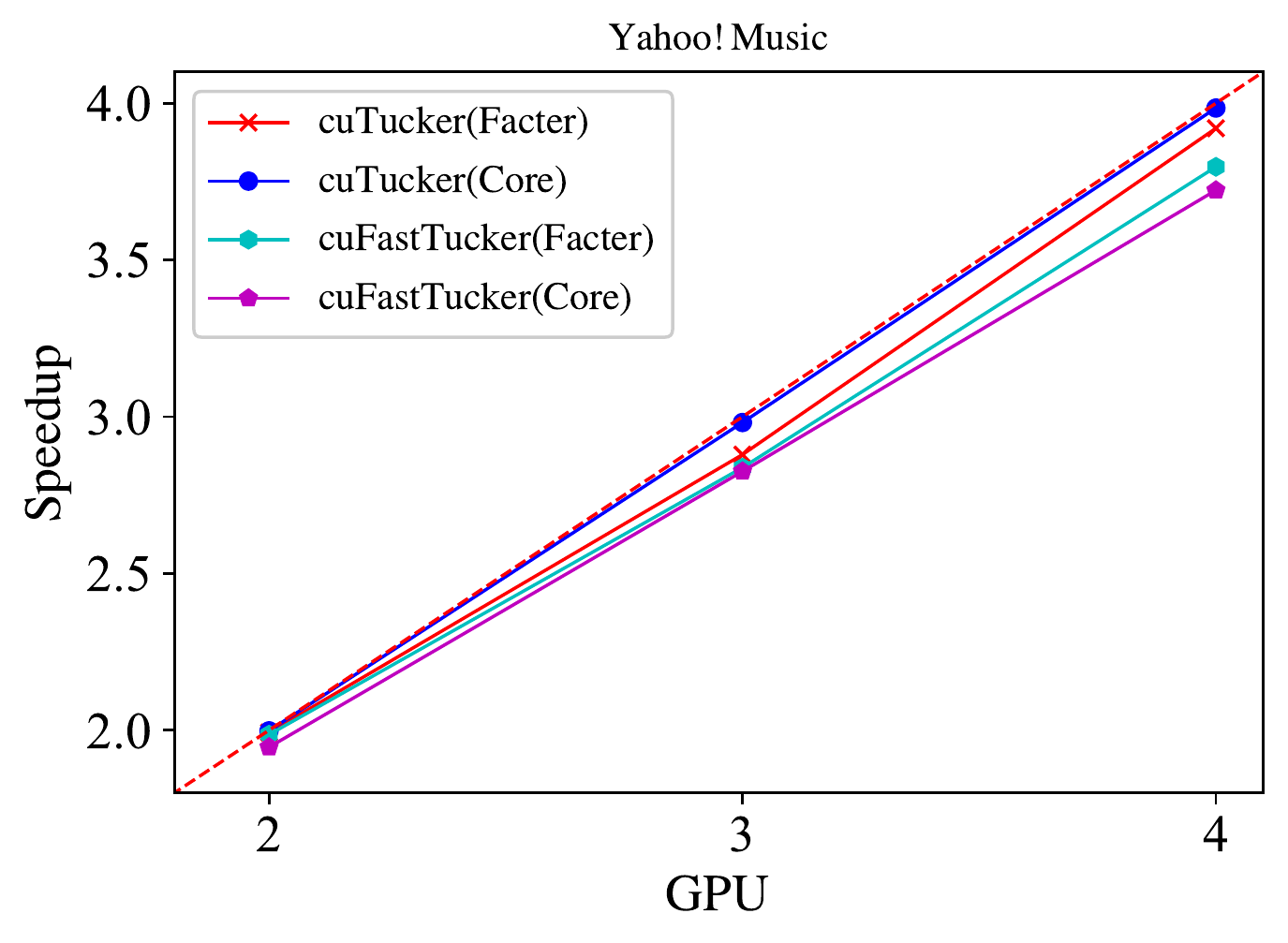}}
         \caption{(a) Scalability on synthesis datasets; (b)-(c) speedup on Netflix and Yahoo!Music datasets, respectively.}
    \label{Scale_up_to_multiple_GPUs}
\end{figure*}

\begin{figure*}[htbp]
  \centering
    \subfigure[Order=3]{
    \label{fig509 (a1)} 
    \includegraphics[width=2.00in]{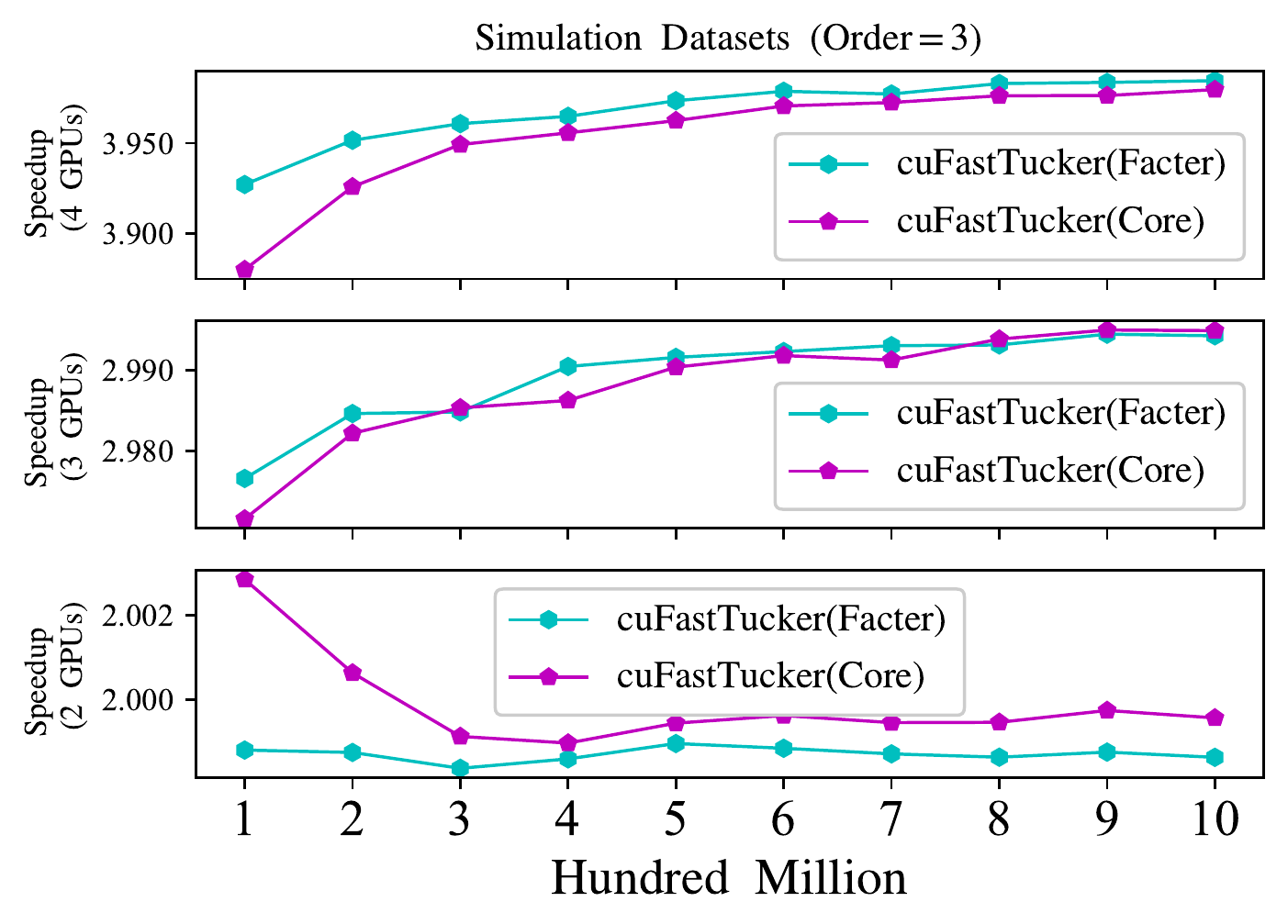}}
    ~
    \subfigure[Order=4]{
    \label{fig509 (a2)} 
    \includegraphics[width=2.00in]{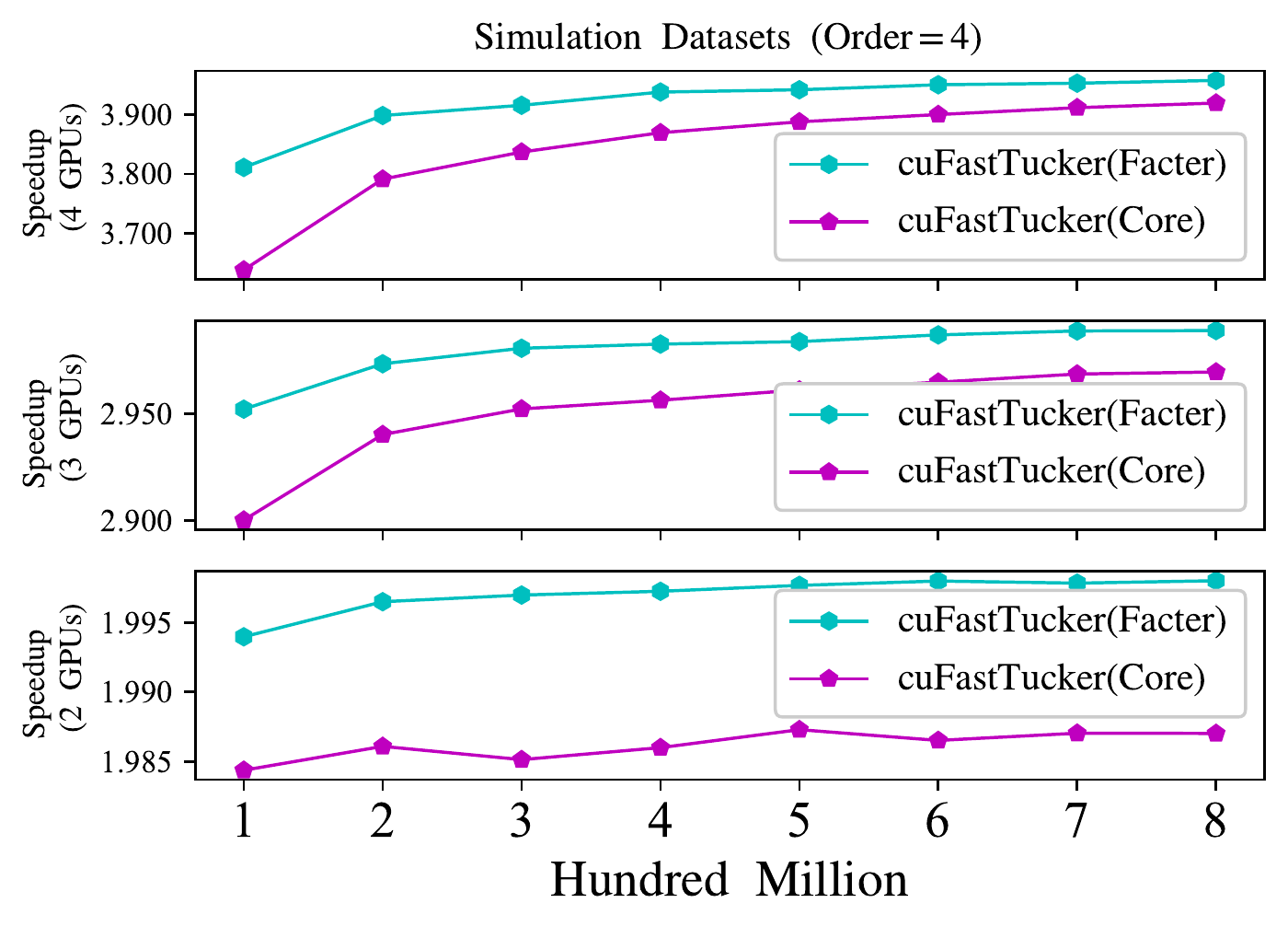}}
    ~
    \subfigure[Order=5]{
    \label{fig509 (a3)} 
    \includegraphics[width=2.00in]{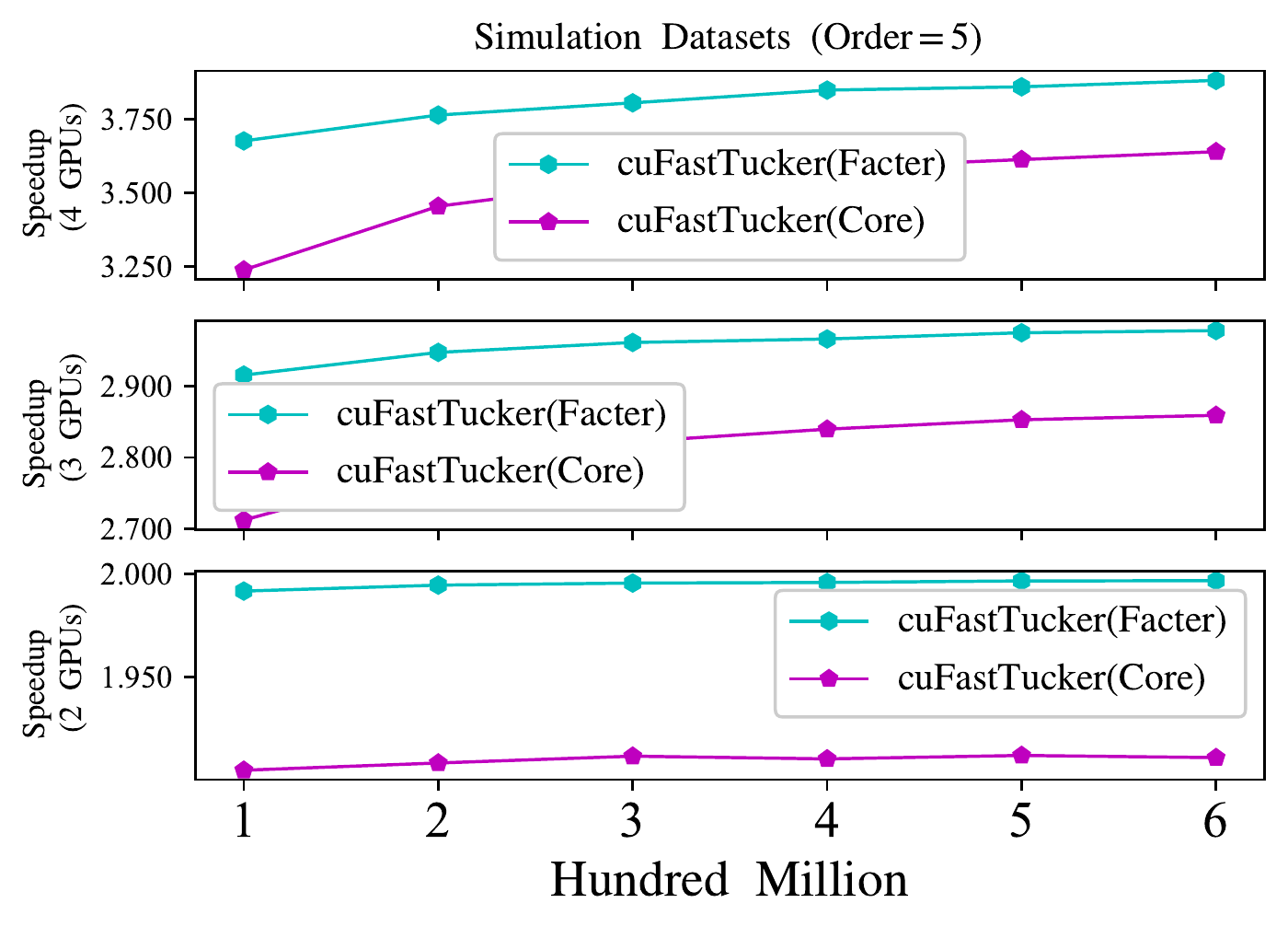}}
         \caption{Scale up to multiple GPUs}
    \label{Scale_up_to_multiple_GPUs_Simulation}
\end{figure*}

\subsection{Scalability on Large-scale Data and Speedup on Multi-GPUs} \label{section54}
Fig. \ref{fig508 (a1)} illustrates the computational time of updating core tensor and factor matrix of cuTucker and cuFastTucker.
Both the cuTucker and cuFastTucker have the scalability with the order of $\{5, 6, 7, 8, 9, 10\}$.
Whatever updating the core tensor and factor matrix, cuTucker spends much longer time than cuFastTucker.
Figs. \ref{fig508 (a2)} and \ref{fig508 (a3)} present the speedup on $\{2, 4, 5\}$ GPUs and both the cuTucker and cuFastTucker can obtain the near linear speedup.
Fig. \ref{Scale_up_to_multiple_GPUs_Simulation} shows that with fixed order, cuFastTucker can obtain more stable speedup on the high non-zero entries datasets.
From the results of Figs. \ref{fig508 (a2)} and \ref{fig508 (a3)} and \ref{fig509 (a1)}-\ref{fig509 (a3)}, cuFastTucker can get near the linear speedup on synthesis datasets.
For the very large dataset amazn, cuFastTucker runs perfectly on 4 P100s. When $R_{core}=J_n=4, n\in\{N\}$, the time for a single update of the factor matrix and core tensor are 10.769747 seconds and 12.953006 seconds, respectively.

\section{Conclusion}
\underline{H}igh-\underline{O}rder, \underline{H}igh-\underline{D}imension, and \underline{S}parse \underline{T}ensor (HOHDST) is a widely used data form in ML community, etc,
spatiotemporal dynamics social networks and recommender systems, and network flow prediction.
Thus, it is non-trivial to find an efficient and low computational overhead methodologies for Sparse Tensor Decomposition (STD) to get the key feature of HOHDST data.
To solve this problem, cuFastTucker is proposed which comprise of Kruskal core tensor and Theorems \ref{theorem1} and \ref{theorem2} to reduce the computational overhead.
Meanwhile, low data-dependence gives the cuFastTucker with fine-grained parallelization on CUDA GPU.
The experimental results show that cuFastTucker has linear computational time and space overheads and
cuFastTucker runs at least 2.62$X$ and at most 747.54$X$ faster than STODA approaches.
In the future works,
we will explore how to take advantages of cuFastTucker to accelerate and compress modern Deep Neural Networks, etc, CNN, LSTM, RNN, and Transformer.

\ifCLASSOPTIONcompsoc
  \section*{Acknowledgments}
\else
  \section*{Acknowledgment}
\fi
This work has also been partly funded by the Program of National Natural Science Foundation of China (Grant No. XXXXXXXXXX),
the National Outstanding Youth Science Program of National Natural Science Foundation of China (Grant No. XXXXXXXXXX).

\ifCLASSOPTIONcaptionsoff
  \newpage
\fi

\bibliographystyle{IEEEtran}
\bibliography{bib}

\end{document}